\definecolor{darkred}  {rgb}{0.5,0,0}
\definecolor{darkblue} {rgb}{0,0,0.5}
\definecolor{darkgreen}{rgb}{0,0.5,0}
\newtheorem{theorem}{Theorem}%[section]
\newtheorem{lemma}[theorem]{Lemma}
\newtheorem{definition}[theorem]{Definition}
\newcommand{\del}{\partial}
\newcommand{\tx}[1]{\textrm{#1}}
\newcommand{\lrp}[1]{\left(#1\right)}
\newcommand{\lrb}[1]{\left[#1\right]}
\newcommand{\lrcb}[1]{\left\{ #1 \right\} }
\newcommand{\ii}{\mathrm{i}}
\newcommand{\R}{\mathbb{R}}
\newcommand{\floor}[1]{\lfloor #1 \rfloor}
\newcommand{\C}{\mathbb{C}}
\newcommand{\Hammy}{\mathcal{H}}
\newcommand{\ident}{\mathds{1}}
\definecolor{ginger}{rgb}{0.69, 0.4, 0.0}
\definecolor{dgreen}{RGB}{44,162,95}
\newcommand{\mH}{{\mathcal H}}
\newcommand{\mD}{{\mathcal D}}
\newcommand{\mS}{{\mathcal S}}
\newcommand{\mR}{{\mathcal R}}
\newcommand{\specialcell}[2][c]{%
    \begin{tabular}[#1]{@{}c@{}}#2\end{tabular}}
\begin{document}

\preprint{APS/123-QED}

\title{Quantum Error Corrected Non-Markovian Metrology}

\author{Zachary Mann}
\email{zmann@uwaterloo.ca}
\affiliation{%
Institute for Quantum Computing, University of Waterloo, Waterloo, Ontario N2L 3G1, Canada
}%
\affiliation{%
Department of Physics and Astronomy, University of Waterloo, Waterloo, Ontario N2L 3G1, Canada
}%
 %\altaffiliation{Physics Department, XYZ University.}
 %Lines break automatically or can be forced with \\
 
\author{Ningping Cao}%
%\email{Email 2}
\affiliation{%
 Perimeter Institute for Theoretical Physics, Waterloo, Ontario N2L 2Y5, Canada
}%
\affiliation{%
 Institute for Quantum Computing, University of Waterloo, Waterloo, Ontario N2L 3G1, Canada
}%
%\affiliation{
%National Research Council of Canada, Ottawa, Ontario K1K 4P7, Canada
%}
\affiliation{
Digital Technologies National Research Council Collaboration Center,
East Campus 5, Waterloo, Ontario, N2L 3G1, Canada
}

\author{Raymond Laflamme}%
%\email{Email 3}
\affiliation{%
 Institute for Quantum Computing, University of Waterloo, Waterloo, Ontario N2L 3G1, Canada
}%
\affiliation{%
Department of Physics and Astronomy, University of Waterloo, Waterloo, Ontario N2L 3G1, Canada
}%
\affiliation{%
Perimeter Institute for Theoretical Physics, Waterloo, Ontario N2L 2Y5, Canada
}%

\author{Sisi Zhou}%
 \email{sisi.zhou26@gmail.com}
\affiliation{%
Perimeter Institute for Theoretical Physics, Waterloo, Ontario N2L 2Y5, Canada
}%
\affiliation{%
Department of Physics and Astronomy, University of Waterloo, Waterloo, Ontario N2L 3G1, Canada
}%
\affiliation{%
Institute for Quantum Computing, University of Waterloo, Waterloo, Ontario N2L 3G1, Canada
}%
\affiliation{%
Department of Applied Mathematics, University of Waterloo, Waterloo, Ontario N2L 3G1, Canada
}

\date{\today}% It is always \today, today,
             %  but any date may be explicitly specified

\begin{abstract}
Quantum metrology aims to maximize measurement precision on quantum systems, with a wide range of applications in quantum sensing. Achieving the Heisenberg limit (HL)---the fundamental precision bound set by quantum mechanics---is often hindered by noise-induced decoherence, which typically reduces achievable precision to the standard quantum limit (SQL). While quantum error correction (QEC) can recover the HL under Markovian noise, its applicability to non-Markovian noise remains less explored.
In this work, we analyze a hidden Markov model in which a quantum probe, coupled to an inaccessible environment, undergoes joint evolution described by Lindbladian dynamics, with the inaccessible degrees of freedom serving as a memory. We derive generalized Knill-Laflamme conditions for the hidden Markov model and establish three types of sufficient conditions for achieving the HL under non-Markovian noise using QEC.
Additionally, we demonstrate the attainability of the SQL when these sufficient conditions are violated, by analytical solutions for special cases and numerical methods for general scenarios. Our results not only extend prior QEC frameworks for metrology but also provide new insights into precision limits under realistic noise conditions.
\end{abstract}

%\keywords{Suggested keywords}%Use showkeys class option if keyword
                              %display desired
\maketitle

\tableofcontents

\section{Introduction}

Quantum metrology aims to estimate the parameters of quantum systems to the highest attainable precision. Experiments usually consist of preparing a suitable probe state, which then interacts with the system before finally measuring the probe state to obtain information on the parameter. Applications of quantum metrology include gravitational wave detection \cite{schnabel_quantum_2010}, high-precision atomic clocks \cite{katori_optical_2011}, and interferometry \cite{giovannetti_advances_2011}. 
Other promising platforms are nitrogen vacancy centers in diamond for magnetic \cite{taylor_high-sensitivity_2008,balasubramanian_nanoscale_2008} and electric \cite{dolde_electric-field_2011} field sensing, and superconducting technologies for magnetic flux sensing \cite{kleiner_superconducting_2004,danilin_quantum_2024}.

An important result in quantum metrology is the quantum Cram\'er-Rao bound \cite{helstrom_quantum_1969}, the quantum generalization of the classical bound of the same name. For a parameter $\omega$ which we wish the estimate, it bounds the variance for any unbiased estimator of $\omega$ in terms of the number of experiments $M$ and a quantity $\mathcal{F}$ called the quantum Fisher information (QFI), which is a function of the final state of the probes before measurement. 
\begin{align}\label{eq:cramer_rao}
    \lrp{\Delta \hat{\omega}}^2\geq \frac{1}{M\mathcal{F}}. 
\end{align}
This bound is saturated in the limit of many experiments ($M \gg 1$) by maximum-likelihood estimators \cite{paris_quantum_2009}. In this limit, the problem of minimizing the variance of parameter estimation is therefore equivalent to maximizing the quantum Fisher information. If we consider the task of estimating a parameter of a Hamiltonian, then the laws of quantum mechanics place a fundamental bound on the scaling of the QFI. This fundamental limit is called the Heisenberg limit (HL), and states that the QFI scaling can be no greater than a constant times $T^2$, where $T$ is the exposure time to the Hamiltonian. Equivalently, the scaling must be no greater than 
a constant times $N^2$, where $N$ is the number of probes used in a given experiment. A typical strategy for obtaining HL scaling in the number of probes is entangling $N$ probes in a GHZ state. In an experimental setting, decoherence due to noise often imposes a tighter bound on the scaling of the QFI. In such cases, the best achievable scaling is linear in $N$, and can be achieved with $N$ independent probes. This limit is called the Standard Quantum Limit (SQL).

One proposal for surpassing the SQL despite the presence of noise is to use Quantum Error Correction (QEC) \cite{arrad2014increasing,kessler_quantum_2014,dur2014improved,sekatski2017quantum,demkowicz2017adaptive,zhou_achieving_2018,zhou_asymptotic_2021,unden_quantum_2016,matsuzaki_magnetic-field_2017}. QEC is a protocol that protects quantum information from noise by encoding the information into many physical degrees of freedom, building in redundancy. This redundancy allows for the detection and correction of unwanted operations, or errors, on our state due to interaction with the environment. The principal application of QEC is quantum computation, where computations are performed on logical degrees of freedom encoded in a large number of physical degrees of freedom. It has been shown that QEC can be used to obtain HL scaling in the presence of noise so long as the noise obeys certain assumptions \cite{zhou_achieving_2018,zhou_asymptotic_2021}. One of the principal assumptions is that the noise has to be Markovian. In this work, we generalize the results of \cite{zhou_achieving_2018} and derive conditions for achieving the HL using QEC in the presence of non-Markovian noise.

There are many different ways of quantifying Markovianity in a quantum system. Examples of indicators of quantum non-Markovianity are non-CP-divisibility \cite{breuer_theory_2007,rivas_entanglement_2010,chruscinski_dynamical_2022}, information backflow \cite{breuer_measure_2009,lu_quantum_2010,chruscinski_dynamical_2022}, and tensor product structure of the process tensor \cite{milz_quantum_2021,milz_when_2020}. Non-Markovian metrology has been studied in many different contexts. It has been studied with non-Markovian bosonic environments \cite{chin_quantum_2012,macieszczak_zeno_2015}, where the non-Markovianity and the ability to perform fast measurement allow QFI scaling as $N^{3/2}$, called the Zeno limit. It has further been studied for Ramsey spectroscopy \cite{riberi_frequency_2022,xie_quantum_2014}, for sensing magnetic fields \cite{matsuzaki_magnetic_2011}, for non-equilibrium environments \cite{cheng_nonequilibrium_2023}, and for optimizing the QFI using process tensors and quantum combs \cite{altherr_quantum_2021,yang_memory_2019,kurdzialek_quantum_2024,kurdzialek_universal_2024}. Quantum error correction for non-Markovian noise, modeled by quantum combs, has previously been explored in \cite{tanggara_strategic_2024}.

Non-Markovian noise is an important source of decoherence in many experimental quantum sensing platforms. In superconducting circuits, a major source of decoherence is coupling between qubits and two-level quantum states in the environment. These two-level systems are coherent with non-negligible coherence times \cite{neeley_process_2008,shalibo_lifetime_2010,zagoskin_quantum_2006}, allowing them to serve as a memory mediating temporal correlations in the noise. Further, in many practical cases, few two-level systems are required to accurately model the noise \cite{agarwal_modelling_2024,martinis_decoherence_2005,cole_quantitative_2010}. Some theoretical models attribute the ubiquitous $1/f$ noise spectrum in solid-state systems to the presence of two-level systems. \cite{burnett_evidence_2014,mickelsen_interacting_2023}.

In nitrogen vacancy centers in diamond, for example, a major source of decoherence is coupling between the electron of the center and the surrounding carbon or nitrogen nuclear spins. The surrounding spins are coherent \cite{childress_coherent_2006}, allowing them to serve as a memory. The noise in nitrogen vacancy center experiments can also be accurately modeled using a relatively small number of these environmental spins \cite{maze_electron_2008,wang_spin_2013}.

\begin{figure}
    \centering
    \def\svgwidth{\columnwidth}
    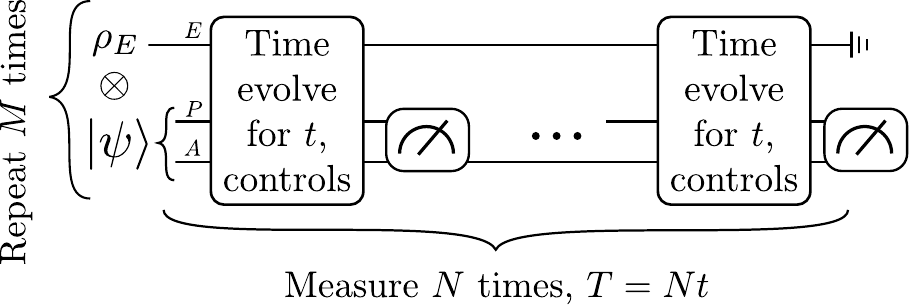
    \caption{
    Illustration of the different timescales in a sensing experiment. In a given experiment, the probe and auxiliary system are measured $N$ times, 
    each capturing the signal and the noise for a duration $t$. At this timescale, the environment is untouched. We repeat such an experiment $M$ times, where at this timescale ($T=Nt$)
    the environment is reset. 
    Note that the QEC procedure is applied fast and frequently, viewed as quantum controls during the time $t$ evolution. Although it may involve additional usage of auxiliary systems and measurements, they are not depicted explicitly here for simplicity. The auxiliary system included in the figure only represents noiseless auxiliary system that serve as part of the QECC. 
    }
    \label{fig:time_scales}
\end{figure}

\subsection{Setup}\label{subsection:def_HMM}

We consider a quantum system consisting of a probe and inaccessible degrees of freedom, which we call the environment. 
Let $\Hammy_P$ denote the $d_P$-dimensional Hilbert space of the probe, and $\Hammy_E$ represent the $d_E$-dimensional Hilbert space of the environment. Further, we suppose that we have access to noiseless auxiliary degrees of freedom. Let $\Hammy_A$ denote the Hilbert space of the auxiliary system. The total state of the quantum system is denoted by $\rho$, where $\rho \in \Hammy_E \otimes \Hammy_P\otimes \Hammy_A$. In this work, we will consider the case of finite-dimensional systems $d_E,d_P<\infty$.
The state of the probe and auxiliary system is given by the reduced density matrix $\rho_{PA} = \Tr_E(\rho)$, 
obtained by tracing out the environment. 
We suppose the state $\rho$ evolves in time according to a Lindblad master equation of the form :
    \begin{multline}\label{eq:master_eq}
        \frac{d\rho}{dt}=-\ii\lrb{H_{EP},\rho}-\ii\lrb{\omega \Tilde{G},\rho} -\ii\lrb{\omega\tilde{H}_E,\rho} \\ +\sum_{k=1}^r\lrp{L_k\rho L_k^{\dagger} -\frac{1}{2}\lrcb{L_k^{\dagger}L_k,\rho}}.
    \end{multline}
We suppose $\Tilde{G}$ is of the form $\ident_E\otimes G\otimes\ident_A$, in other words, $G$ only acts on the probe degrees of freedom. Similarly, we suppose $\tilde{H}_E$ is of the form $H_E\otimes\ident_P\otimes\ident_A$, and only acts on the environment. Further, we assume $G$ is non-trivial, i.e., $G$ is not proportional to $\ident_P$. We have that $G$, $H_E$, and $H_{EP}$ are Hermitian operators. The $r$ different $L_k$ operators are called the jump operators, also known as Lindblad operators. $H_{EP}$ and the $L_k$ act solely on the probe and environment degrees of freedom, acting trivially on the auxiliary system.

The real parameter $\omega$ is what we wish to estimate. Both $\omega\Tilde{G}$ and $\omega G$ will be referred to as the signal Hamiltonian acting on the probe, or simply the signal on the probe. The Hamiltonian $\omega H_E$ corresponds to the evolution of the environment dependent on the parameter $\omega$, or the signal on the environment. In this model, we suppose that the signal acts independently on probe and environment degrees of freedom.

The $H_{EP}$ and $L_k$ terms correspond to the undesired noise affecting the system. The effect of $H_{EP}$ is referred to as the unitary coupling. The jump operators correspond to the dissipative portion of the noise. These operators only act non-trivially on the environment and probe degrees of freedom, and act trivially on the auxiliary degrees of freedom. In this model, we suppose that the noise does not depend on the signal parameter $\omega$.

The model is Markovian when considering the system and environment together, as it evolves according to a Lindblad master equation.  
However, only the probe and auxiliary degrees of freedom are accessible experimentally. The time evolution of their reduced state exhibits non-Markovian behavior. 
For this reason, we refer to this model as a \textit{hidden Markov model} (HMM) \cite{rabiner_introduction_1986,milz_quantum_2021}. It is similar to the \textit{hidden quantum Markov model} studied in the literature \cite{monras_hidden_2012}, the principal difference here being that we consider the reduced density matrix at different times rather than measurement. When we only have access to the probe and auxiliary system's dynamics, the model displays non-Markovianity for all of the aforementioned indicators.  
In particular, the local dynamics may not be CP-divisible, and the probe may experience information backflow from the environment.

For each experiment, the environment, probe, and auxiliary system will be reset. As mentioned above, we will denote the number of experiments by $M$.
For a given experiment, we may perform arbitrary control operations on the probe and auxiliary system, while leaving the environment untouched.
The environment's degrees of freedom can therefore mediate temporal correlations in the noise perceived by the probe. We will denote by $N$ the number of times we directly measure the probe with a full-rank projective measurement. We will denote by $t$ the exposure time of each probe to the Hamiltonian between two measurements.  
During this time, we allow the application of non-destructive control operations to the probe and auxiliary system. The total experiment time is given by $MT$, where $T=Nt$. Note that we can also defer all measurements to the end of the evolution by shelving the probe state and swapping in a fresh probe from the auxiliary degrees of freedom.  
The goal is to obtain conditions for HL scaling in $T$.
\cref{fig:time_scales} illustrates these different timescales.

Finally, we assume access to an unlimited supply of noiseless auxiliary degrees of freedom, enabling the generation of perfect copies of any desired probe state. In experiment, this assumption has been well approximated by considering an electron state in a nitrogen vacancy sensor as the probe, and by using addressable $^{13}$C carbon nuclear spins, which are more robust to noise and have longer coherence times, as the auxiliary system \cite{unden_quantum_2016}.

Additionally, we assume that all control operations on both the probes and the auxiliary system can be performed with arbitrary speed and precision. For error correction to be feasible, the coherence times of our probe must greatly exceed the duration of control operations. As for the non-Markovianity, it has been shown that for certain superconducting systems interacting with environmental two-level states, the coherence times of the two-level states are of the same order as those of the qubit \cite{sun_entanglement_2012,shalibo_lifetime_2010,muller_towards_2019,neeley_process_2008,zagoskin_quantum_2006}. This would allow for non-Markovian effects to develop across many rounds of control operations.

\begin{figure}[ht]
    \centering
    \def\svgwidth{\columnwidth}
    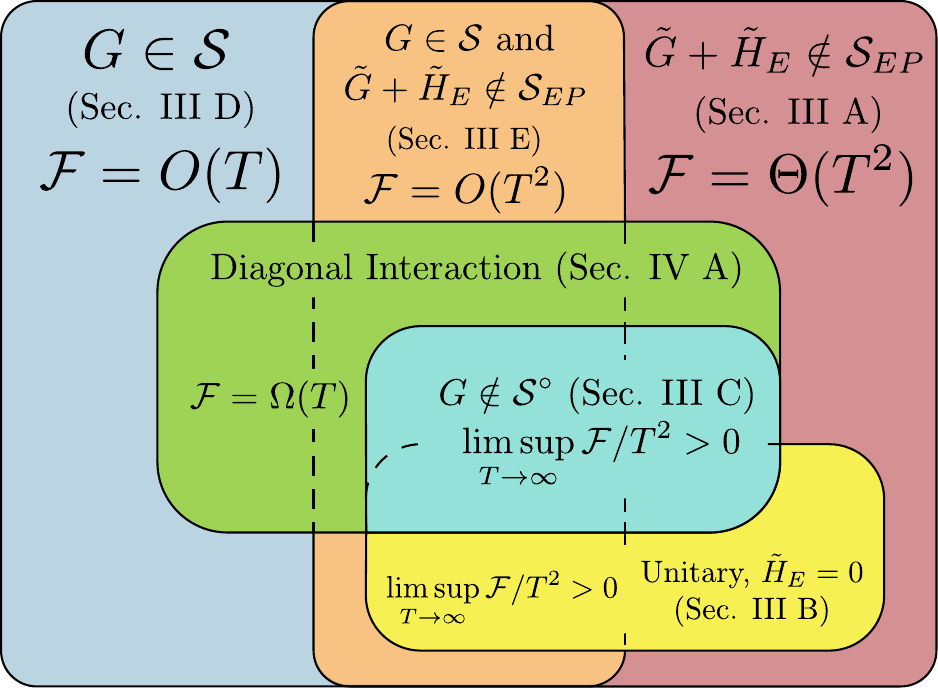
    \caption{
    Best achievable asymptotic scaling of the quantum Fisher information (QFI) for different overlaps of the signal and the noise in the hidden Markov model. The Heisenberg limit (HL) scaling in the rightmost region is guaranteed by the ``Hamiltonian not in extended span" (HNES) conditions $G\notin{\mathcal{S}}$ (See Section~\ref{subsection:HNES}). The standard quantum limit (SQL) bound in the leftmost region is guaranteed by the ``Hamiltonian not in Lindblad span" (HNLS) conditions (See Section~\ref{subsection:HNLS}). Scaling for the middle section depends on the trajectory of the evolution (See Section~\ref{subsection:e_dependence}). For the diagonal interaction, the scaling is lower bounded by the SQL (See Section~\ref{subsection:diag_sql}), or if the ``Hamiltonian not in extended Lindblad span" (HNELS) are satisfied (Section~\ref{subsection:HNELS}) then the scaling is HL up to a envelope function, as in the unitary case (See~\ref{subsection:unitary}). A list of acronyms is summarized in~\cref{app:acronyms}.
    }
    \label{fig:venn_diagram}
\end{figure}

\subsection{Summary of Results and Outline}

Consider the hidden Markov model. Let $\lrcb{\ket{\phi_i}}_{i=1}^{d_E}$ be an orthonormal basis for $\Hammy_E$. 
%Let $\mathcal{S}$ be the linear span of the operators $\ident_P$, $\bra{\phi_i}H_{EP}\ket{\phi_m}$, $\bra{\phi_i}L_k\ket{\phi_m}$, $\bra{\phi_i}L_k^{\dagger}\ket{\phi_m}$ and $\bra{\phi_i}L_k^{\dagger}\ket{\phi_m}\bra{\phi_n}L_{j}\ket{\phi_l}$ ($\forall\ k,j\in\lrcb{1,\cdots,r}$, where $r$ is the number of jump operators; $i,m,n,l\in\lrcb{1,\cdots,d_E}$).
Let ${\mathcal{S}^{\circ}}$ be the linear span of the operators $\ident_P$, $\bra{\phi_i}L_k\ket{\phi_i},$ $\bra{\phi_i}L_k^{\dagger}\ket{\phi_i}$, and $\bra{\phi_i}L_k^{\dagger}\ket{\phi_i}\bra{\phi_m}L_j\ket{\phi_m}$  ($\forall\ k,j\in\lrcb{1,2,...,r}$, $\forall i,m\in\lrcb{1,2,...,d_E}$). Let $\mathcal{S}$ be the linear span of $\mathcal{S}^{\circ}$ where we add the operators $\bra{\phi_i}H_{EP}\ket{\phi_m}$, $\forall\ i,m\in\lrcb{1,\cdots,d_E}$. Let $\mathcal{S}_{EP}$ be the linear span of $\ident_{EP}$, $L_k$, $L_k^{\dagger}$, $L_k^{\dagger}L_j$ $\forall\ k,\ j$. Finally, we call an interaction diagonal if it is of the form presented in Definition~\ref{def:diagonal_interaction}. Using these definitions, we summarize the results of the manuscript in~\cref{tab:summary_results}.

\begin{table}[b]
  \caption{\label{tab:summary_results}
    Summary of Results.}
  \begin{ruledtabular}
    \begin{tabular}{ccc}
      Theorem & Conditions & QFI Scaling \\
      \hline
      \hline
      \ref{theorem:HNES} & $G\notin\mathcal{S}$ & $\mathcal{F}\propto T^2$ \\
      \ref{theorem:unitary_scaling} & $\tilde{H}_E=0,L_k=0\ \forall \ k$ & $\limsup_{t\rightarrow\infty}\mathcal{F}/T^2>0$ \\
      \ref{theorem:HNELS} & \specialcell{Diagonal interaction,\\$G\notin\mathcal{S}^{\circ}$} & $\limsup_{t\rightarrow\infty}\mathcal{F}/T^2>0$ \\
      \ref{theorem:HNLS_full_system} & $\tilde{H}_E+\tilde{G}\in\mathcal{S}_{EP}$ & $\mathcal{F}=O(T)$ \\
      \ref{theorem:SQL_diagonal_achievable} & Diagonal interaction & $\mathcal{F}=\Omega(T)$
    \end{tabular}
  \end{ruledtabular}
\end{table}

The manuscript is organized as follows. In Section~\ref{section:KL_HMM}, we consider quantum error correction for the hidden Markov model by adapting the Knill-Laflamme conditions.

In Section~\ref{section:achieving_HL}, the attainability of the HL using QEC is explored. The results of this section are presented in \cref{fig:venn_diagram}.
In Section~\ref{subsection:HNES}, the QEC conditions of Section~\ref{section:KL_HMM} are used to generalize the HNLS conditions of \cite{zhou_achieving_2018} to the hidden Markov model. The derived ``Hamiltonian not in extended span" (HNES) conditions (Theorem~\ref{theorem:HNES}) are sufficient conditions for achieving HL scaling in the presence of non-Markovian noise. The rightmost (red) section of \cref{fig:venn_diagram} corresponds to when these conditions are satisfied. The error correction protocol considered in this section is designed to eliminate all noise acting on the probe. As a result, the information about the parameter contained in the environment can not be exploited for potential gains in sensing.

In Section~\ref{subsection:unitary}, we consider a special case of the model for which the overall dynamics are unitary and the signal only affects the probe ($\tilde{H}_E=0$), and show that HL scaling can be achieved up to an almost periodic envelope function. This special case corresponds to the yellow bubble of \cref{fig:venn_diagram}.

In Section~\ref{subsection:HNELS}, we consider a special case of the model for which operators partially commute. In this case, a weaker set of conditions than HNES, which we call the ``Hamiltonian not in extended Lindblad span" (HNELS) conditions, guarantee that HL scaling can be achieved with QEC up to an almost periodic envelope function, similar to the unitary case. This special case corresponds to the teal bubble in the middle of \cref{fig:venn_diagram}. In both the unitary and HNELS cases, non-Markovianity plays a crucial role. A weaker QEC code is employed—one that does not correct the unitary coupling noise term, which can temporarily decohere the probe. However, since this coupling is unitary, we are guaranteed that information will flow back into the system from the environment, allowing for HL scaling.

In Section~\ref{subsection:HNLS}, the HNLS conditions of \cite{zhou_achieving_2018} are applied to rule out the HL in certain cases. This is depicted in the leftmost (blue) section of \cref{fig:venn_diagram}.

In Section~\ref{subsection:e_dependence}, an example is presented to prove that the HNES conditions are sufficient but not necessary. This example illustrates how, in certain cases, the initial environment state must be taken into account in order to derive conditions for HL versus SQL scaling. This corresponds to the middle (orange) section of \cref{fig:venn_diagram}.

The attainability of SQL scaling in $T$ is considered in Section~\ref{section:achieving_SQL}. The section is organized as follows. In Sections~\ref{subsection:diag_sql}, we consider models where the operators partially commute and show that a combination of QEC and a prepare-and-measure strategy achieves the SQL in $T$. This is depicted by the green bubble in the center of \cref{fig:venn_diagram}. Section~\ref{subsection:two_qubit_dephasing} considers a model where operators commute completely, in which case a prepare-and-measure strategy without QEC achieves SQL scaling in $T$. Once again, we ask the question of whether the SQL is achievable despite the presence of noise. We do not investigate the potential advantages of non-Markovianity for such a strategy. An example of this is the Zeno limit \cite{chin_quantum_2012,macieszczak_zeno_2015}, where the non-Markovianity allows for $N^{3/2}$ QFI scaling, between SQL and HL.

In Section~\ref{subsection:sampling_FI}, we consider a prepare-and-measure strategy without QEC for more general models, which do not necessarily have commuting operators. The Fisher information is computed numerically, providing evidence for generic SQL scaling in the case where $\tilde{H}_E=0$. In the more general case of $\tilde{H}_E\neq 0$, the numerical results are inconclusive. Numerical results are presented for a Heisenberg model-type interaction and for randomly generated master equations on two qubits.

\section{QEC for HMM}\label{section:KL_HMM}

\subsection{Non-Markovian QEC}

We begin by considering QEC for HMMs. We generalize the standard definition of a quantum error-correcting code to one that allows for errors in both the probe and the environment, while preserving the logical information in the probe through local operations only.
This can be viewed as a subsystem code where we treat the environment as the gauge degrees of freedom and the probe as the logical subsystem. The difference is that we restrict operations to act solely on the logical subsystem.
We prove the following lemma to be the generalized Knill-Laflamme (KL) conditions for the HMM.

Before deriving error correction conditions for the HMM, we must first clarify what exactly is meant by an error-correcting code. We do so in the following definition.

\begin{definition}\label{def:def_qecc}
let $U$ be an isometry from $\C^k$ to $\C^{d_{PA}}$ for some integers $k\leq d_{PA}$. 
Let $\mathcal{E}$ be a set of linear operators $\{E:\C^{d_{PA}d_E}\rightarrow \C^{d_{PA}'d_E'}\}$ where $d_E,d_E'$ are the dimensions of the environment before and after application of a linear operator.
We will say the $\lrp{U,\mathcal{E}}$ is a QECC for $\mathcal{E}$ if there exists a completely positive trace-preserving (CPTP) map $D:\mathcal{B}(\C^{d_{PA}'})\rightarrow \mathcal{B}(\C^{k})$ such that $\forall \ \ket{\psi}\in\C^k$, $\forall \ \ket{\phi}\in \C^{d_E}$ and $\forall\ E\in\mathcal{E}$,
we have that 
    \begin{align}\label{eq:def_qecc}
        (D\otimes \ident_{d_E'})\lrp{E(U\otimes \ident_{d_E})\ket{\psi}\bra{\psi}\otimes \ket{\phi}\bra{\phi}(U^{\dagger}\otimes \ident_{d_E})E^{\dagger}}\nonumber\\
       =c_{E,\psi,\phi}\ket{\psi}\bra{\psi}\otimes \alpha_{E,\psi,\phi},\quad\quad
    \end{align}
    where $\alpha_{E,\psi,\phi}$
    is a density matrix in $\C^{d'_E}\times \C^{d'_E}$.
\end{definition}

If an error $E$ satisfies \cref{eq:def_qecc}, we say the error is correctable. Here, $\C$ denotes the complex numbers and $\mathcal{B}(\C^{d})$ denotes the set of linear operators on the vector space $\C^d$. 

QECC are usually characterized using the KL conditions \cite{knill_theory_1996}, which are both necessary and sufficient. Lemma~\ref{lemma:KL} presents similar conditions for QECC for HMMs. 

\begin{lemma}[KL Conditions for the HMM]\label{lemma:KL}
Let $U$ be an isometry from $\C^k$ to $\Hammy_{PA}=\Hammy_P\otimes\Hammy_A$, of dimension $d_{PA}$, and let $\mathcal{E}$ be a set of linear operators mapping $\Hammy_E\otimes\Hammy_{PA}$ to $\Hammy_{E'}\otimes\Hammy_{PA'}$. Let $P=UU^{\dagger}$ be the code space projector and let $\lrcb{\ket{i}_E}_{i=1}^{d_E}$ $(\tx{resp. }\lrcb{\ket{i}_{E'}}_{i=1}^{d_E'})$ be an orthonormal basis for $\Hammy_E$ (resp. $\Hammy_{E'}$). 
We have that $\lrp{U,\mathcal{E}}$ is a quantum error correcting code (QECC) if and only if we have
    \begin{align}\label{eq:KLM_HMM}
P\bra{l}_{E}E_i^{\dagger}\ket{m}_{E'}\bra{n}_{E'}E_j\ket{k}_{E} P=c_{ij}^{lmnk}P,
    \end{align}
    for all $E_i,E_j\in\mathcal{E}$, where $c_{ij}^{lmnk}$ are scalar constants.
\end{lemma}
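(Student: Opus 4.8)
The plan is to establish \cref{eq:KLM_HMM} as a natural lift of the standard Knill-Laflamme conditions to the hidden Markov setting, where the environment's tensor factor must be tracked explicitly. The key conceptual observation is that \cref{eq:def_qecc} in \cref{def:def_qecc} is a statement about correctability of errors that act jointly on probe-auxiliary and environment, but where the recovery map $D$ is permitted to act only on $\Hammy_{PA'}$. The strategy is therefore to reduce the joint error set to an \emph{effective} error set on $\Hammy_{PA}$ alone by inserting the environment basis resolutions: for each $E_i \in \mathcal{E}$ and each pair of environment basis states, define the ``sandwiched'' operators $E_i^{mk} := \bra{m}_{E'} E_i \ket{k}_E$, which map $\Hammy_{PA} \to \Hammy_{PA'}$. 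These are ordinary linear operators on the accessible space, and the whole problem becomes an instance of standard QEC for the operator set $\{E_i^{mk}\}$, modulo the bookkeeping of the residual environment density matrix $\alpha_{E,\psi,\phi}$.

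First I would expand the left-hand side of \cref{eq:def_qecc} by resolving the identity on the output environment $\Hammy_{E'}$ in the basis $\{\ket{n}_{E'}\}$, writing the error-corrupted state as a sum over environment matrix elements. Because the input environment state $\ket{\phi}\bra{\phi}$ is pure and arbitrary, and $D$ acts trivially on $\Hammy_{E'}$, the requirement that the output factorize as $c_{E,\psi,\phi}\ket{\psi}\bra{\psi}\otimes\alpha_{E,\psi,\phi}$ forces the $\Hammy_k$ component to be proportional to $\ket{\psi}\bra{\psi}$ for \emph{every} environment matrix element independently, after the recovery. This is where the interpretation as a subsystem code with the environment as gauge enters: the logical information on $\Hammy_k$ must be recoverable regardless of what happens in the gauge (environment) sector. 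The expansion shows that $(U,\mathcal{E})$ is a QECC if and only if the effective operator set $\{E_i^{mk}\}$ satisfies the ordinary KL conditions on the code $U$.

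Next I would invoke the standard KL theorem \cite{knill_theory_1996} for the effective operators: $\{E_i^{mk}\}$ is correctable by some CPTP map iff $P (E_i^{mk})^\dagger (E_j^{nl}) P = \beta P$ for scalars $\beta$ depending on all the indices. Writing out $(E_i^{mk})^\dagger E_j^{nl} = \bra{k}_E E_i^\dagger \ket{m}_{E'} \bra{n}_{E'} E_j \ket{l}_E$ and relabeling reproduces exactly \cref{eq:KLM_HMM} with $c_{ij}^{lmnk}$ the corresponding scalars; sandwiching by $P = UU^\dagger$ gives the projector form. For the converse direction, the scalar conditions \cref{eq:KLM_HMM} guarantee, via the standard construction, a recovery $D$ that corrects every $E_i^{mk}$, and I would verify that this single $D$ simultaneously produces the factorized form \cref{eq:def_qecc}, with the environment matrix elements assembling into a legitimate density matrix $\alpha_{E,\psi,\phi}$. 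The positivity and unit-trace of $\alpha_{E,\psi,\phi}$ follow because it is a partial trace of the corrected output, which remains a valid state.

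The main obstacle I anticipate is the careful handling of the recovery map acting \emph{only} on $\Hammy_{PA'}$ while the environment dimension may change ($d_E \neq d_{E'}$) and the environment state is neither controlled nor measured. I must argue that a \emph{single} recovery $D$ works uniformly across all input environment states $\ket{\phi}$ and all errors $E \in \mathcal{E}$, rather than an error-dependent or $\phi$-dependent recovery; this is what distinguishes a genuine code from trivial correctability and is precisely what the scalar (rather than operator-valued) nature of $c_{ij}^{lmnk}$ encodes. A secondary technical point is confirming that the decoupling of the $\Hammy_k$ and $\Hammy_{E'}$ factors in the output is equivalent to the vanishing of all off-diagonal logical terms, which requires treating the full set of environment index pairs $(l,m,n,k)$ on the same footing and checking closure of the effective operator set under the needed linear combinations.
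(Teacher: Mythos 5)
Your proposal is correct and follows essentially the same route as the paper: the paper's proof likewise reduces the problem to the standard Knill--Laflamme conditions for the extended operator set $\Tilde{E}_{i,n,k}=\bra{n}_{E'}E_i\ket{k}_{E}$ acting on $\Hammy_{PA}$ alone, proving sufficiency by expanding the environment state in a basis and invoking the standard recovery, and necessity by showing the uniform recovery forces the scalar conditions across all environment index pairs. The points you flag as obstacles (a single $\phi$- and $E$-independent recovery, and closure under linear combinations) are exactly the ones the paper handles in its appendix.
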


The proof of Lemma~\ref{lemma:KL} can be found in Appendix~\ref{section:KL_conditions_proof}. The key insight of the proof is that by defining the set of extended errors,
\begin{equation}\label{eq:extended_errors_main}
\Tilde{E}_{i,n,k}=\bra{n}_{E'}E_i\ket{k}_{E},
\end{equation}
the non-Markovian KL conditions reduce to the standard KL conditions on this extended error set. This allows us to reuse all of the error correcting machinery, including encoding maps and recovery operations, from the standard theory.

%%%%%%%%%

%%%%%%%%%%%%

    Consider the case where $d_{PA}=d_{PA}'$ and $d_E=d'_E$, which captures our model. The KL conditions and Definition~\ref{def:def_qecc} ensure that the state is always a product state on $\Hammy_E\otimes\Hammy_{PA}$. Therefore, the time-evolution of the reduced state on $\Hammy_{PA}$ between two rounds of error correction can always be represented by a CPTP map which depends on the residual states on $\Hammy_E$. The KL conditions ensure that the Kraus operators of this map will be correctable for any choice of the state on $\Hammy_E$. Further, implementing the error correction protocol requires no knowledge of the environment state.
    
    Note that this error-correcting condition for a QECC is perhaps too strong. 
    %Note that this definition for a QECC is perhaps too strong. 
    In principle, the capabilities to correct errors induced by any state on $\Hammy_E$ are not necessary; one needs only to correct the errors induced by the states on $\Hammy_E$ occurring throughout the evolution of the system. This distinction will be important when we consider quantum metrology. We obtain sufficient but not necessary conditions for the HL for this reason.

    Finally, an important remark is that even when perfect QEC fails, the recovery operation still brings the state back to the code space. This will be important for the derivations of Sections~\ref{subsection:HNELS} and~\ref{subsection:diag_sql}, where even if QEC doesn't completely correct the noise, it tailors it to a specific form and restricts it to the code space.

    Other non-Markovian quantum error correction conditions exist in the literature. In \cite{tanggara_strategic_2024}, the authors derive necessary and sufficient conditions for quantum error correction for quantum combs, which they call the \textit{strategic code formalism}. The model they consider is very general, and encapsulates the HMM defined in Section~\ref{subsection:def_HMM}. However, since we are considering a simpler model, we can derive conditions using the operators of the master equation directly, rather than considering different contractions of the quantum comb with the QEC control operations.

\subsection{A Small Example}

We will now consider an example of quantum error correction in a spin system to highlight the differences between the Markovian and non-Markovian cases.
We will consider a system where $\Hammy_P$ consists of an ensemble of individually addressable spin-half degrees of freedom (or qubits), and the environment $\Hammy_E$ consists of inaccessible spin-1 degrees of freedom. For simplicity, we will assume each spin-half particle interacts with its own bath of spin-1 particles. In the non-Markovian case, we will assume this bath consists of a single coherent spin. We will suppose the coupling Hamiltonian between a spin-half particle and the $i$-th spin-1 particle in the system is given by
\begin{equation*}
    H_{EP,i}=\tau_i\lrp{S_z^{(i)}\otimes Z + S_x^{(i)}\otimes X},
\end{equation*}
where $Z,X$ are the Pauli matrices acting on the spin-half particle and $S_x^{(i)},S_z^{(i)}$ are the spin-1 operators acting on the $i$-th spin-1. $\tau_i$ is the coupling strength.
\begin{equation*}
    S_z^{(i)}=\begin{pmatrix}
        1 & 0 & 0 \\
        0 & 0 & 0 \\
        0 & 0 & -1
    \end{pmatrix},\quad S_x^{(i)}=\frac{1}{\sqrt{2}}\begin{pmatrix}
        0 & 1 & 0 \\
        1 & 0 & 1 \\
        0 & 1 & 0
    \end{pmatrix}.
\end{equation*}
Let $\ket{-1}_E,\ket{0}_E$, and $\ket{1}_E$ be the $-1,0,$ and 1 eigenstates of $S_z$ respectively. We will suppose that each spin-1 has a Hamiltonian term of the form 
\begin{equation*}
    H_{E,i}=\Delta\lrp{S_z^{(i)}}^2,
\end{equation*}
with $\Delta$ a positive strength constant. The total Hamiltonian for a spin-half and its bath is given by 
\begin{equation*}
    H_{tot.}=\sum_{i}\lrp{H_{E,i}+H_{EP,i}}.
\end{equation*}

We first consider the Markovian regime. In this regime, we assume each spin-half interacts with many spin-1 degrees of freedom. Following the prescription of \cite{breuer_theory_2007}, we make two approximations. We first suppose that the spin-half degrees of freedom couple weakly to the bath of environmental spins, such that the bath is weakly affected by the interaction ($\Delta\gg \tau_i$). Further, we suppose there is a dissipative master equation acting on the environment such that the spin-1 degrees of freedom are driven to the $\ket{0}_E$ steady state. A Markovian master equation can then be derived for the spin-half degrees of freedom. This is done in Appendix~\ref{section:example_markov_approx}. This master equation generates a noise channel with Kraus operators proportional to $\ident$ and $X$.
Therefore, a repetition code \cite{peres_reversible_1985,albert_quantum_2022} that only corrects $X$ errors is sufficient to protect the logical information.

We next consider the non-Markovian regime. In this regime, we suppose the spin-half degrees of freedom each interact with a single coherent spin-1 degree of freedom. For simplicity, we suppose that there is no dissipation. In the non-Markovian regime, since the environment state evolves in time, we must consider the set of extended errors
\begin{equation*}
    K_{i,j}=\bra{i}_Ee^{-\ii t H_{tot.}}\ket{j}_E,
\end{equation*}
where $\ket{i}_E, \ket{j}_E\in \{\ket{-1}_E,\ket{0}_E,\ket{1}_E\}$.
Since we consider arbitrarily fast control, we can consider the errors to first order in $t$. We obtain the operators
\begin{equation*}
    K_{0,0}=\ident, \quad K_{j,j}=\ident -\ii t \Delta\ident  - \ii j t \tau Z,
\end{equation*}
\begin{equation*}
    K_{-1,0}=K_{0,-1}=K_{1,0}=K_{0,1}=\frac{-\ii \tau  t}{\sqrt{2}}X.
\end{equation*}
We see that to protect the logical information, a repetition code no longer suffices due to the presence of $Z$ rotation errors. Because of the additional error operators introduced by the non-Markovianity, a more capable code is required.
    
\subsection{Computation versus Metrology}

There are fundamental differences between the tasks of quantum sensing and quantum computation that must be taken into account when designing a QECC. The principal difference relates to the logical operators, which are the operators that act non-trivially on the code words without being detected. In quantum computation, the goal is to design codes with logical operators that do not resemble the noise. Logical operators are engineered so that the errors induced by the environment have a very low probability of corrupting the logical information. There is much freedom in engineering these logical operators, allowing for the design of complex codes with very high code distance.

In quantum sensing, we wish to do the opposite. We wish to design codes for which the interaction with the environment due to the signal is projected to a non-trivial logical operator. Contrary to the case of computation, these interactions are not freely chosen. There is a lot less freedom when engineering a QEC protocol for quantum sensing. A trade-off between preserving the signal and eliminating noise must be achieved.

As previously shown, the non-Markovian quantum error correction conditions for the hidden Markov model can be rephrased as the standard Knill-Laflamme conditions acting on an extended error set defined by \cref{eq:extended_errors}. In other words, the non-Markovianity of the system introduces a larger variety of errors. Consider the previous example of each qubit interacting with a spin-1 particle. The non-Markovianity changes the form of the errors to either $X$ or $Z$ operators rather than just $X$ operators.

Due to the restricted design freedom, this problem is much more harmful in quantum sensing. Consider the same example, where we wish to use the spin-1/2 particles to detect a magnetic field. Suppose the magnetic field acts independently on each spin as a $Z$ operator. We therefore design a code for which $Z$ is a logical operator. In the Markovian case, the errors are of the form $X$ and are therefore distinguishable from $Z$, and so a repetition code does the job. In the non-Markovian case, we also have $Z$ errors, which have the same form as the magnetic field, and so the logical information will always be corrupted, no matter which types of QECC are used.

To show why this may not be an issue for computation, we consider the example of a stabilizer code. Stabilizer codes have three important parameters : the number of physical qubits $n$, the number of logical qubits $k$ encoded in the physical qubits, and finally the distance $d$. A distance d stabilizer code can correct error operators which can be written as products of $\floor{\frac{d-1}{2}}$ arbitrary single qubit errors~\cite{gottesman_surviving_2024}. For unbiased noise, the non-Markovianity will therefore be problematic for quantum error correction if the memory effects in the system cause the errors to be high-weight. When this is not the case and when the non-Markovian noise is also local, it has been shown that a fault-tolerance threshold is still possible~\cite{terhal_fault-tolerant_2005,aliferis_quantum_2005}. In the previous example, the non-Markovianity does not affect the weight of the errors.

For the experimental platforms and corresponding noise sources previously mentioned, this is unlikely to be a problem. For example, in superconducting qubits, one expects the interaction with the two-level chip defects to be relatively local. Even though the memory and coherence of the defect may change the form of the error, it will not affect the weight of the error. Major sources of error in superconducting experiments have instead been crosstalk or high-weight errors due to cosmic rays \cite{acharya_suppressing_2023,mcewen_resolving_2022}.

For quantum sensing and metrology, however, non-Markovianity has the potential to be much more destructive, as we will see below.

\section{Achieving the HL}\label{section:achieving_HL}

\subsection{The HNES Conditions for Achievability of the HL}
\label{subsection:HNES}

In \cite{zhou_achieving_2018}, the authors present sufficient and necessary conditions, called ``Hamiltonian not in Lindblad span" (HNLS), for achieving the HL asymptotically. 
When these conditions are met, the probe and a noiseless auxiliary system can be encoded into a specialized error-correcting code. For this code, all of the noise becomes correctable errors, and a portion of the signal remains as a logical operation.
This logical component of the signal leads to HL scaling within the codespace. 
However, the HNLS conditions were originally derived for Markovian noise, where no environmental degrees of freedom are present, and the master equation applies solely to the probe. Further, these conditions are necessary to achieve the HL, as non-correctable errors inevitably lead to complete decoherence of the system and the signal over long timescales.

We generalize these conditions to the HMM. The HNLS conditions were derived under the assumption that error correction can be performed perfectly and arbitrarily fast. We will keep the same assumptions. %this assumption here.

%\begin{itemize}
\begin{theorem}[Hamiltonian not in extended span (HNES) Conditions]\label{theorem:HNES}
Let $\lrcb{\ket{\phi_i}}_{i=1}^{d_E}$ be an orthonormal basis for $\Hammy_E$. Let $\mathcal{S}$ be the linear span of the operators $\ident_P$, $\bra{\phi_i}H_{EP}\ket{\phi_m}$, $\bra{\phi_i}L_k\ket{\phi_m}$, $\bra{\phi_i}L_k^{\dagger}\ket{\phi_m}$ and $\bra{\phi_i}L_k^{\dagger}\ket{\phi_m}\bra{\phi_n}L_{j}\ket{\phi_l}$ ($\forall\ k,j\in\lrcb{1,\cdots,r}$, where $r$ is the number of jump operators; $i,m,n,l\in\lrcb{1,\cdots,d_E}$).

If $G\notin\mathcal{S}$, then HL scaling can be achieved for the parameter $\omega$. The condition $G\notin\mathcal{S}$ will be referred to as the ``Hamiltonian not in extended span'' (HNES) conditions. The extended span refers to the possible errors that can affect the reduced state of the probe for arbitrary choices of the environment state.
\end{theorem}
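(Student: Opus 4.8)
The plan is to reduce the continuous-time problem to a discrete-time quantum error correction problem via the extended-error formalism of \cref{lemma:KL}, and then to mirror the code construction of \cite{zhou_achieving_2018} at the level of the probe, treating the environment as an uncorrected gauge system. First I would split the generator of \cref{eq:master_eq} into the probe signal $\omega\tilde{G}$, which we wish to estimate, and the remaining \emph{noise} terms $H_{EP}$, $\omega\tilde{H}_E$, and $\lrcb{L_k}$. Trotterizing into steps of duration $dt=t/N$ and expanding the noise channel to leading order gives the no-jump Kraus operator $\ident_{EP}-\ii H_{EP}\,dt-\ii\omega\tilde{H}_E\,dt-\tfrac12\sum_k L_k^\dagger L_k\,dt$ together with the jump operators $\sqrt{dt}\,L_k$. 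Projecting each onto environment basis states as in \cref{eq:extended_errors_main} yields the extended errors on $\Hammy_P$. The key bookkeeping step is to check that every extended noise error, and every Knill-Laflamme product $\tilde{E}_a^\dagger\tilde{E}_b$ arising to order $dt$, lies in $\mathcal{S}$: the identity gives $\ident_P$; the coupling gives $\bra{\phi_i}H_{EP}\ket{\phi_m}$; the damping term gives $\bra{\phi_i}L_k^\dagger L_k\ket{\phi_m}=\sum_p\bra{\phi_i}L_k^\dagger\ket{\phi_p}\bra{\phi_p}L_k\ket{\phi_m}$, a sum of the product generators; and jump-jump cross terms give $\bra{\phi_i}L_k^\dagger\ket{\phi_m}\bra{\phi_n}L_j\ket{\phi_l}$. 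The environment signal $\omega\tilde{H}_E$ projects to $\bra{\phi_n}\omega H_E\ket{\phi_m}\ident_P$, a scalar multiple of $\ident_P\in\mathcal{S}$, whereas the probe signal projects to $\omega\delta_{nm}G$; the latter is the only contribution falling outside $\mathcal{S}$, and it is exactly what we retain as a logical operation.

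Given that $\mathcal{S}$ is $\dagger$-closed and is precisely the span of the relevant noise-error products, the hypothesis $G\notin\mathcal{S}$ lets me invoke a separation argument in the Hilbert-Schmidt geometry of operators on $\Hammy_P$. Writing $G=G_\parallel+G_\perp$ with $G_\parallel\in\mathcal{S}$ and $0\neq G_\perp\perp\mathcal{S}$ (Hermitian, since the orthogonal complement of a $\dagger$-closed space is $\dagger$-closed), I would construct two probe density matrices $\rho_0=\lrp{\ident_P+\epsilon G_\perp/\norm{G_\perp}}/d_P$ and $\rho_1=\lrp{\ident_P-\epsilon G_\perp/\norm{G_\perp}}/d_P$ with $\epsilon$ small enough to guarantee positivity. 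By construction $\Tr\lrp{(\rho_0-\rho_1)S}=0$ for every $S\in\mathcal{S}$, while $\Tr\lrp{(\rho_0-\rho_1)G}=2\epsilon\norm{G_\perp}/d_P\neq0$; the noise is blind to the distinction between $\rho_0$ and $\rho_1$, but the signal is not.

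Using the noiseless auxiliary system $\Hammy_A$, I would then purify $\rho_0$ and $\rho_1$ into logical codewords $\ket{0_L},\ket{1_L}\in\Hammy_{PA}$, choosing the ancilla supports of the two purifications mutually orthogonal so that $\bra{0_L}(S\otimes\ident_A)\ket{1_L}=0$ for all $S\in\mathcal{S}$. The resulting isometry $U$ satisfies the hypotheses of \cref{lemma:KL} for the extended error set: the diagonal blocks $\bra{i_L}\tilde{E}_a^\dagger\tilde{E}_b\ket{i_L}$ coincide across $i$ because $\rho_0-\rho_1\perp\mathcal{S}$ and the products lie in $\mathcal{S}$, and the off-diagonal blocks vanish by the orthogonal-ancilla choice, so $P\tilde{E}_a^\dagger\tilde{E}_b P=c_{ab}P$. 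Hence $\lrp{U,\mathcal{E}}$ is a QECC, and \cref{lemma:KL} guarantees the recovery returns the probe-ancilla logical information to the code space in the product form $\ket{\psi}\bra{\psi}\otimes\alpha_E$, decoupled from whatever residual (and $\omega$-dependent) state the environment has reached. This decoupling is precisely what lets a single fixed code work uniformly for every environment state, the essential departure from the Markovian setting.

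Finally, interleaving this recovery with the infinitesimal evolution steps and taking $N\to\infty$, all extended noise errors are corrected while the component of the signal orthogonal to $\mathcal{S}$ survives as a nontrivial logical generator: the effective code-space Hamiltonian is $\omega\bar{G}$ with $\bra{1_L}\bar{G}\ket{1_L}-\bra{0_L}\bar{G}\ket{0_L}=\Tr\lrp{(\rho_1-\rho_0)G}\neq0$. Preparing the logical superposition $\tfrac{1}{\sqrt2}\lrp{\ket{0_L}+\ket{1_L}}$ and evolving for total time $T=Nt$, the accumulated relative phase is $\propto\omega T$, so the QFI obeys $\mathcal{F}=\lrp{\Tr\lrp{(\rho_0-\rho_1)G}}^2 T^2\propto T^2$, which is the Heisenberg limit. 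The main obstacle I anticipate is not the algebra of the first step but controlling the continuous-time limit uniformly in the environment: one must show that the $O(dt^{3/2})$ and higher Kraus contributions, the $\omega$-dependent environment rotation, and the accumulation of the gauge state $\alpha_E$ over the $N$ rounds do not leak into the logical subsystem or degrade the $T^2$ coefficient as $N\to\infty$. Establishing that perfect, arbitrarily fast recovery genuinely decouples the logical probe from the ever-evolving environment—rather than merely doing so to leading order in a single step—is the crux distinguishing the hidden-Markov proof from its Markovian antecedent.
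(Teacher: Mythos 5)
Your proposal is correct and follows essentially the same route as the paper's proof: decompose $G=G_\parallel+G_\perp$ against the extended span $\mathcal{S}$, build a two-dimensional code from purifications with mutually orthogonal ancilla supports, verify the generalized Knill--Laflamme conditions of Lemma~\ref{lemma:KL} for the extended error set, and read off Heisenberg scaling from the surviving logical rotation generated by $PGP$. The only substantive difference is your choice of codeword states $\rho_{0,1}=(\ident_P\pm\epsilon G_\perp/\norm{G_\perp})/d_P$ in place of the paper's normalized positive and negative parts of $G_\perp$; both satisfy the required orthogonality relations, yours merely costing a factor of order $\epsilon/d_P$ in the logical gap (hence in the constant prefactor of the QFI) and requiring a larger auxiliary system, neither of which affects the $T^2$ scaling.
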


\begin{proof}
The proof generalizes the HNLS conditions for Heisenberg-limit scaling from \cite{zhou_achieving_2018}, with modifications to accommodate the HMM. It will therefore be very similar to the original proof, with some slight changes.
The evolution of the state $\rho$ for a duration $\Delta t$
is given by the first-order expansion of the master equation
    \begin{align}
        &\Phi_{\Delta t}(\rho)=\rho+\lrp{-\ii\lrb{H_{EP},\rho}-\ii\lrb{\omega \Tilde{G},\rho} -\ii\lrb{\omega\tilde{H}_E,\rho}}\Delta t \nonumber\\
        &\qquad +\Delta t\sum_{k=1}^r\lrp{L_k\rho L_k^{\dagger}-\frac{1}{2}\lrcb{L_k^{\dagger}L_k,\rho}}+O(\Delta t^2).
    \end{align}
To reduce clutter, we use the following notation
    \begin{equation}\label{eq:opertor_ij}
        A^{(i,j)}:=\bra{\phi_i}A\ket{\phi_j},
    \end{equation}
    where $A$ is a linear operator on $\Hammy_E\otimes \Hammy_P$. This leaves us with an operator acting on $P$. 
    Let $\mathcal{S}_H$ denote the subspace of Hermitian matrices in $\mathcal{S}$, which is spanned by $L_k^{(i,j)}+{L_{k}^{(i,j)}}^{\dagger}$, $\ii\lrp{L_k^{(i,j)}-{L_k^{(i,j)}}^{\dagger}}$, ${\lrp{L_k^{\dagger}L_m}}^{(i,j)}+h.c.$, $\ii\lrp{\lrp{{L_k}^{\dagger}L_m}^{(i,j)}-h.c.}$, $\ii\lrp{H_{EP}^{(i,j)}-{H_{EP}^{(i,j)}}^{\dagger}}$, and $H_{EP}^{(i,j)}+h.c.$. 
    Remark that $\mathcal{S}_H\subseteq \mathcal{S}$. $G$ is Hermitian and can therefore be decomposed into $G=G_{\parallel}+G_{\perp}$ where $G_{\parallel}\in \mathcal{S}_{H}$ and $G_{\perp}\perp \mathcal{S}_{H}$. 
    The assumption that $G\notin \mathcal{S}$ implies that $G_{\perp}$ is nonzero. $G_{\perp}$ must have trace zero in order to be orthogonal to $\ident_P$ under the trace inner product. The spectral decomposition for $G_{\perp}$ is therefore 
    \begin{equation*}
        G_{\perp}=\frac{1}{2}\lrp{\tr\abs{G_{\perp}}}\lrp{\rho_0-\rho_1}
    \end{equation*}
    where $\abs{G_{\perp}}:=\sqrt{G_{\perp}^2}$, and $\rho_0$ and $\rho_1$ are density matrices. We consider $\ket{C_0}$ and $\ket{C_1}$ to be normalized purifications of $\rho_0$ and $\rho_1$ on $\Hammy_P\otimes \Hammy_A$ where $\Hammy_A$ consists of auxiliary degrees of freedom that we append to our system. We further construct $\ket{C_0}$ and $\ket{C_1}$ to have orthonormal supports on $\Hammy_A$. To be explicit, assuming $\rho_{0} = \sum_{i=1}^{d_1} (\rho_0)_{ii} \ket{i}_P\bra{i}_P$ and $\rho_{1} = \sum_{i=d_1+1}^{d_P} (\rho_1)_{ii} \ket{i}_P\bra{i}_P$ are in the diagonalized forms, 
    \begin{gather*}
        \ket{C_0} = \sum_{i=1}^{d_1} \sqrt{(\rho_0)_{ii}} \ket{i}_P\ket{i}_A,\\ 
        \ket{C_1} = \sum_{i=d_1+1}^{d_P} \sqrt{(\rho_1)_{ii}} \ket{i}_P\ket{i}_A,
    \end{gather*}
    where $\{\ket{i}_P\}$ and $\{\ket{i}_A\}$ are orthonormal bases of the probe and the auxiliary system, respectively, where the auxiliary system is assumed to be as large as the probe. The support of $\rho_0$ and $\rho_1$ are ${\rm span}\{\ket{i}_P\}_{i=1}^{d_1}$ and ${\rm span}\{\ket{i}_P\}_{i=d_1+1}^{d_P}$ respectively. 
    This will be the code space, with the code space projector given by 
    \begin{equation}
        P=\ket{C_0}\bra{C_0}+\ket{C_1}\bra{C_1}.
    \end{equation}
    The orthogonality of the states on the auxiliary system space implies that 
    \begin{equation}\label{eq:Lindblad_proof_cross}
        \bra{C_0}B\otimes\ident_A\ket{C_1}=\bra{C_1}B\otimes\ident_A\ket{C_0} = 0
    \end{equation}
    where $B$ is any linear operator on $\Hammy_P$. 
    We also use the relation :
    \begin{equation}\label{eq:Lindblad_proof_1}
        \tr\lrp{\lrp{\ket{C_0}\bra{C_0}-\ket{C_1}\bra{C_1}}\lrp{B\otimes\ident_A}}=\frac{2\tr\lrp{G_{\perp}B}}{\tr\abs{G_{\perp}}}.
    \end{equation}
    Although a smaller auxiliary system, such as a qubit, is enough to guarantee~\cref{eq:Lindblad_proof_cross}, it would not be enough to guarantee~\cref{eq:Lindblad_proof_1}, which usually requires $d_A$ dimensions.

    \cref{eq:Lindblad_proof_1} is zero for any $B\in \mathcal{S}$, allowing us to write 
    \begin{equation*}
        \bra{C_0}B\otimes\ident_A\ket{C_0}=\bra{C_1}B\otimes\ident_A\ket{C_1}.
    \end{equation*}
    These equations imply that 
    \begin{equation*}
        PL_k^{(i,j)}P=\lambda_k^{ij}P,
    \end{equation*}
    \begin{equation*}
        P\lrp{L_k^{\dagger}L_m}^{(i,j)}P=\mu_{km}^{ij}P,
    \end{equation*}
    where this equation is due to the fact that $\lrp{L_k^{\dagger}L_m}^{(i,j)}$ is in the linear span of ${L_k^{(i,j)}}^{\dagger}{L_{k'}^{(i',j')}}$.
    By assumption on the span $\mathcal{S}$, we also have that 
    \begin{equation}\label{eq:Lindblad_kl}
        P{L_k^{\dagger}}^{(i,j)}L_m^{(i',j')}P=\zeta_{km}^{iji'j'} P.
    \end{equation}
    Further,
    \begin{equation}
        PH_{EP}^{(i,j)}P=\chi^{ij}P.
    \end{equation}
    We see from \cref{eq:Lindblad_proof_1} that 
    \begin{equation}\label{eq:PGPconst}
        PGP\neq \tx{const.}\times P.
    \end{equation}
    Let $\rho=\rho_E\otimes\rho_{PA}$ be a state in the code space,
    i.e. $P\rho_{PA}P=\rho_{PA}$. Let $\rho_E=\sum_{i=0}^{d_E}\alpha_i\ket{\phi_i}\bra{\phi_i}$. We can take $\rho_E$ to be diagonal without loss of generality since the choice of basis $\lrcb{\ket{\phi_i}}_{i=1}^{d_E}$ is arbitrary. We will use $P$ and $\ident_E\otimes P$ interchangeably; it will be clear from the context which operator we are referring to.
    %\begin{widetext}
    \begin{multline*}
            P\Phi_{\Delta t}(\rho)P=\rho-\ii\omega\lrb{P\Tilde{G}P,\rho}\Delta t -\ii\omega\lrb{H_E,\rho_E}\otimes\rho_{PA}\, \Delta t \\ + \lrp{\sum_{i}\alpha_i \sum_{k=1}^r A_{ik} \otimes \rho_{PA}} \Delta t + O(\Delta t^2), 
       \end{multline*}
    where 
    \begin{multline*}
        A_{ik} := \sum_{i',j'}\lambda_k^{i'i}\lrp{\lambda_k^{j'i}}^*\ket{\phi_{i'}}\bra{\phi_{j'}} \\+ 
        \sum_{i'}\Big((\mu_{kk}^{i'i}-\ii\chi^{i'i})\ket{\phi_{i'}}\bra{\phi_{i}}+h.c.\Big). 
    \end{multline*}
    We consider as well the projector $P_{\perp}=\ident-P$. We have that 
    \begin{multline}\label{eq:corrected_part}
        P_{\perp}\Phi_{\Delta t}(\rho)P_{\perp}= \\ \Delta t \sum_{i}\alpha_i \sum_{k=1}^r  B_{ik} \lrp{\ket{\phi_i}\bra{\phi_i}\otimes \rho_{PA}} B_{ik}^\dagger + O(\Delta t^2),
    \end{multline}
    where $B_{ik} = \sum_j\ket{\phi_j}\bra{\phi_i}\otimes \lrp{L_k^{(j,i)}-\lambda_k^{ji}\ident_{PA}}$.
    %%%%%%%%%%%%%%%%%%%%%%%%%%%
    After measurement and neglecting $\Delta t^2$ terms, the state is given by
    \begin{multline*}
        P\Phi_{\Delta t}(\rho)P+P_{\perp}\Phi_{\Delta t}(\rho)P_{\perp}= 
        \rho-\ii\omega\lrb{P\Tilde{G}P,\rho}\Delta t\\ -\ii\omega\lrb{H_E,\rho_E}\otimes\rho_{PA}\, \Delta t  + \lrp{\sum_{i}\alpha_i \sum_{k=1}^r A_{ik} \otimes \rho_{PA}} \Delta t \\
        +\Delta t \sum_{i}\alpha_i \sum_{k=1}^r  B_{ik} \lrp{\ket{\phi_i}\bra{\phi_i}\otimes \rho_{PA}} B_{ik}^\dagger.
    \end{multline*}
    Except for the final term, all noise contributions can be effectively suppressed by fast, repeated measurements, which project the system onto the code space—an instance of the quantum Zeno effect. However, the operators $B_{ik}$ in the final term act on the probe, introducing dephasing that cannot be mitigated by measurement alone. Hence, measurement is not enough, and the recovery operation is required here as well.
    
    %%%%%%%%%%%%%%%%%%%%%%%
    
    \cref{eq:Lindblad_kl} is exactly that of the generalized KL conditions for the HMM in Lemma~\ref{lemma:KL}, where we take the errors to be the linear operators acting on the state in \cref{eq:corrected_part}. Therefore, there exists a recovery operation $\mathcal{R}$ such that
    \begin{equation}\label{eq:recovery_perp}
        \mathcal{R}\lrp{P_{\perp}\Phi_{\Delta t}(\rho)P_{\perp}}= \Delta t\,  \sigma_E\otimes \rho_{PA} + O(\Delta t^2),
    \end{equation}
    where $\Delta t\sigma_E$ is a residual operator supported on the environment degrees of freedom after applying the recovery operation to \cref{eq:recovery_perp}.

    Combining \cref{eq:recovery_codespace} and \cref{eq:recovery_perp}, we obtain the final state after QEC :
    \begin{multline}\label{eq:recovery_codespace}
        P \Phi_{\Delta t}(\rho) P+ \mathcal{R}\lrp{P_\perp\Phi_{\Delta t}(\rho)P_\perp}= \rho -\ii\omega\lrb{P\tilde{G}P,\rho}\Delta t\\  
        -\ii\omega \lrb{H_E,\rho_E}\otimes\rho_{PA}\Delta t + \Delta t\sigma_E'\otimes\rho_{PA} + O(\Delta t^2),
    \end{multline}
    where $\sigma_E'=\sigma_E+\sum_i\alpha_i\sum_{k=1}^r A_{ik}$.  $\sigma_E'\otimes\rho_{PA}$ can also be written as
    \begin{align}\label{eq:residual_state}
        &\sigma'_E\otimes\rho_{PA}=\nonumber \\
        &\mathcal{D}\lrp{-i\lrb{H_{EP},\rho}+\sum_{k=1}^r\lrp{L_k\rho L_k^{\dagger}-\frac{1}{2}\lrcb{L_k^{\dagger}L_k,\rho}}}
    \end{align}
    where $\mathcal{D}(\cdot)=P\cdot P+\mathcal{R}(P_{\perp}\cdot P_{\perp})$. Since the argument of the RHS of \cref{eq:residual_state} has trace zero, then $\sigma_E'$ must have trace zero due to the fact that $\mathcal{D}$ is a CPTP map and that $\Tr\lrp{\rho_{PA}}=1$.
    We trace out the environment to obtain the reduced dynamics on the probe, which are unitary,
    \begin{multline}
        \tr_E\!\lrb{\mathcal{R}\lrp{P_{\perp}\Phi_{\Delta t}(\rho)P_{\perp}}+P\Phi_{\Delta t}(\rho)P}
        =\\\rho_{PA}-\ii\omega\lrb{P \lrp{G\otimes\ident_A} P, \rho_{PA}}\Delta t.
    \end{multline}
    We also neglect the second-order terms in $\Delta t$. From here on out, the proof is equivalent to the one presented in \cite{zhou_achieving_2018}. We summarize the final steps. Let $\lambda_0,\lambda_1$ be the eigenvalues of $P\lrp{G\otimes \ident_A}P$ associated with the eigenstates $\ket{C_0}$ and $\ket{C_1}$ respectively. We can see that the codewords are eigenstates of the effective Hamiltonian from \cref{eq:Lindblad_proof_cross}. By choosing an initial state of 
    \begin{equation*}
        \ket{\psi}=\frac{1}{\sqrt{2}}\lrp{\ket{C_0}+\ket{C_1}}
    \end{equation*}
    on the probe and auxiliary system, one can show that the QFI scales as
    \begin{equation*}
        \mathcal{F}(\rho_{PA}(t))=t^2\lrp{\lambda_{0}-\lambda_{1}}^2.
    \end{equation*}
    \cref{eq:PGPconst} guarantees that this QFI is nonzero for $t>0$. We therefore have that if the HNES conditions are satisfied, the QFI scales as a constant multiplied by $t^2$, and so we obtain the HL in $T$.\end{proof}

Note that the above QEC protocol to achieve the HL assumes QEC, as quantum controls, is performed fast and frequently throughout the state evolution. It corresponds to the situation in \cref{fig:time_scales} where the probe is only measured at the end of the evolution, where $N = 1$ and $T = t$. 

%%%%%%%%%%%%%%%%%%%%%%%%%%%%%%%%%
% 1
%%%%%%%%%%%%%%%%%%%%%%%%%%%%%%%%%

Further, although the above protocol yields HL scaling, it is not optimal in general. In particular, the constant factors in the expression of the QFI may be suboptimal. To achieve better scaling, the code may be optimized in the same manner as \cite{zhou_asymptotic_2021}.

Finally, from the point of view of information backflow, this error correction protocol eliminates all non-Markovianity in the system. Since the QEC protocol ensures that the system and environment are always in a product state, information backflow from the environment is not possible. Although this prevents potential correlations in the system from introducing harmful effects, it also prevents any potential sensing advantages due to the non-Markovianity, such as, for example, using the environment as a memory protected from decoherence \cite{zagoskin_quantum_2006}.

\subsection{Unitary Evolution}\label{subsection:unitary}

We now consider the special case of when the evolution of the system is unitary. We will also consider the additional assumption of the signal acting only on the probe, $\tilde{H}_E=0$. In this scenario, the master equation governing the evolution of the HMM (\cref{eq:master_eq}) reduces to 
\begin{equation}\label{eq:master_eq_unitary}
    \frac{d\rho}{dt}=-\ii\lrb{H_{EP},\rho}-\ii\lrb{\omega \Tilde{G},\rho}.
\end{equation}
When the HNES conditions are satisfied, the QFI scales as $t^2$. For unitary evolution with the signal acting only on the probe, we can guarantee that the QFI scales as a constant times $t^2$ multiplied by some periodic envelope function, even when the HNES conditions are not satisfied. Once again, let $N=1$. In particular, we can show that $\limsup_{t\rightarrow\infty}\mathcal{F}/t^2>0$ when $d_E<\infty$. We can therefore estimate the parameter with $t^{-2}$ precision as long as we make sure to measure the system at the times when the value of the envelope is above some nonzero constant. 

\begin{theorem}\label{theorem:unitary_scaling}
    Suppose the evolution of the system is unitary and that the signal doesn't act on the environment, in other words, the evolution of the HMM is generated by \cref{eq:master_eq_unitary}. Suppose $\Hammy_E$ has finite dimension, i.e., $d_E<\infty$. Under these assumptions, there is a QEC protocol that achieves $\limsup_{t\rightarrow\infty}\mathcal{F}/t^2>0$. More precisely, it is possible to achieve a QFI that scales as $t^2$ multiplied by an almost periodic envelope function. By measuring at correct times, we recover the HL. 
\end{theorem}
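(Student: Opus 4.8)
The plan is to bound the quantum Fisher information (QFI) of the accessible reduced state by exploiting that a finite-dimensional unitary is almost periodic in time. Since \cref{eq:master_eq_unitary} is unitary and $\Hammy_E$ is finite-dimensional, I would take the (reset) environment state $\ket{\phi}_E$ together with a probe-plus-auxiliary input $\ket{\psi}_{PA}$, so that the global state $\ket{\Psi(t)}=U(t)\ket{\phi}_E\ket{\psi}_{PA}$ with $U(t):=e^{-\ii t(H_{EP}+\omega\tilde G)}$ remains pure, while the measurable state is the marginal $\rho_{PA}(t)=\mathrm{Tr}_E\ket{\Psi(t)}\bra{\Psi(t)}$. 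Because only the probe and auxiliary may be measured, I would lower-bound $\mathcal F(\rho_{PA}(t))$ by the single-observable (method-of-moments) inequality $\mathcal F(\rho_{PA})\ge(\partial_\omega\langle O\rangle_t)^2/\mathrm{Var}_t(O)$ for a well-chosen Hermitian $O$ on $\Hammy_{PA}$; this is legitimate since the QFI upper-bounds the classical Fisher information of any measurement.

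The central step is to isolate the secular ($\propto t$) growth of the signal generator. Writing $\partial_\omega U(t)=-\ii U(t)\mathcal G(t)$ with $\mathcal G(t)=\int_0^t U(s)^\dagger\tilde G\,U(s)\,ds$, and diagonalizing $H_{EP}+\omega\tilde G=\sum_j E_j\Pi_j$, I would split $\mathcal G(t)=t\,\bar G+R(t)$, where $\bar G=\sum_j\Pi_j\tilde G\Pi_j$ is the time-averaged (diagonal) part and $R(t)=\sum_{j\neq k}\frac{e^{\ii t(E_j-E_k)}-1}{\ii(E_j-E_k)}\Pi_j\tilde G\Pi_k$ is bounded and almost periodic. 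Substituting gives $\partial_\omega\langle O\rangle_t=-\ii t\,\mathrm{Tr}[(O\otimes\ident_E)U(t)[\bar G,\rho(0)]U(t)^\dagger]+O(1)=t\,h(t)+O(1)$, and in finite dimension the coefficient $h(t)$ is a trigonometric polynomial in $t$ (a finite sum of $e^{\ii\nu t}$), hence Bohr almost periodic, while $\mathrm{Var}_t(O)\le\|O\|^2$. Therefore $\mathcal F(\rho_{PA}(t))/t^2\ge h(t)^2/\|O\|^2+O(1/t)$: the QFI is bounded below by $t^2$ times the almost periodic envelope $h(t)^2/\|O\|^2$, and I would conclude $\limsup_{t\to\infty}\mathcal F/t^2\ge\|O\|^{-2}\limsup_t|h(t)|^2\ge\|O\|^{-2}\,\overline{|h|^2}>0$ whenever $h\not\equiv0$, using that the mean square of a nonzero trigonometric polynomial equals its sum of squared amplitudes. ``Measuring at correct times'' then means sampling near the maxima of this envelope.

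The whole argument rests on showing $h\not\equiv0$, which is where I expect the real difficulty and where the error-correcting code enters. Using $\mathrm{Tr}_E[\bar G(\ket{\phi}\bra{\phi}_E\otimes\sigma)]=\bar G_\phi\,\sigma$ with the effective generator $\bar G_\phi:=\bra{\phi}_E\bar G\ket{\phi}_E$ on $\Hammy_{PA}$, one finds $h(0)=-\ii\,\mathrm{Tr}_{PA}[O\,[\bar G_\phi,\sigma]]$, so it suffices that $\bar G_\phi$ not commute with $\sigma=\ket{\psi}\bra{\psi}$; this is arranged by taking $\ket{\psi}$ to be an equal superposition of two eigenstates of $\bar G_\phi$ with distinct eigenvalues and $O$ the matching coherence. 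The crux is thus to guarantee $\bar G_\phi\not\propto\ident_{PA}$ for the relevant environment state. Intuitively this should hold because the coupling is unitary and finite-dimensional—information is redistributed, not lost—so at generic $\omega$ the time-average $\bar G$ inherits the nontriviality of $G$; the degenerate case $\omega=0$ already shows the necessity of working at the true parameter value. To make this robust I would fall back on the encoding of \cref{subsection:HNES}: choose logical states $\ket{C_0},\ket{C_1}$ that are eigenstates of $G$ with distinct eigenvalues (possible since $G\not\propto\ident_P$), Zeno-project onto this code so the effective signal is $\ident_E\otimes PGP$, and show its time-average stays nontrivial on the two-dimensional code. The main obstacle is precisely this nonvanishing-after-partial-trace statement—excluding the fine-tuned couplings for which $\bar G$ collapses to the form $g_E\otimes\ident_{PA}$, and verifying that the code choice (or a generic environment state at generic $\omega$) lifts that degeneracy.
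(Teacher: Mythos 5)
Your route is genuinely different from the paper's and is mostly sound. The paper does not use an error-propagation bound at all: it Zeno-projects onto the two-dimensional code from the HNES construction, proves $\lrb{PH_{EP}P,P\tilde GP}=0$, evolves the state $\frac{1}{\sqrt2}(\ket{C_0}+\ket{C_1})$ exactly, traces out the environment to get a $2\times2$ state whose off-diagonal is $e^{-\ii\omega t(\lambda_0-\lambda_1)}\alpha(t)$ with $\alpha(t)=\sum_m\bar c_m e^{-\ii t\phi_m}$ a finite trigonometric sum, computes the QFI exactly as $4t^2(\Delta\lambda)^2\abs{\alpha(t)}^2$, and shows $\limsup_t\abs{\alpha(t)}=\abs{\alpha(0)}=\frac12$ via Dirichlet's simultaneous approximation theorem. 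Your method-of-moments bound $\mathcal F\ge(\del_\omega\E{O})^2/\mathrm{Var}(O)$, the secular split $\mathcal G(t)=t\bar G+R(t)$, and the Bohr mean-square argument $\limsup\abs{h}^2\ge\overline{\abs{h}^2}=\sum_\nu\abs{a_\nu}^2>0$ are all correct and arguably more elementary (no Dirichlet approximation needed); the price is a weaker constant, since the mean-square of the envelope is generally smaller than the full revival value the paper obtains, and a fixed quadrature $O$ cannot track the rotating phase $e^{-\ii\omega t\Delta\lambda}$ the way the QFI does.

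The one genuine gap is exactly the one you flag: showing $h\not\equiv0$, i.e.\ that the time-averaged logical signal does not collapse to a multiple of the identity. You propose falling back on the Zeno-projected code with $\ket{C_0},\ket{C_1}$ eigenstates of $G$ with distinct eigenvalues, but then leave "show its time-average stays nontrivial" as the main obstacle. The paper closes this in one line, and the closing ingredient is not that $\ket{C_0},\ket{C_1}$ are eigenstates of $G$ on the probe alone — it is that they have \emph{orthogonal supports on the auxiliary system}, which forces $\bra{C_0}(B\otimes\ident_A)\ket{C_1}=0$ for every operator $B$ on $\Hammy_P\otimes\Hammy_E$, hence $PBP=P_0BP_0+P_1BP_1$ and in particular $\lrb{PH_{EP}P,P\tilde GP}=0$. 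Once the projected signal commutes with the projected coupling, your $\bar G$ equals $P\tilde GP=\lambda_0P_0+\lambda_1P_1$ exactly (the integrand is constant, $R(t)=0$ for the signal part), so $\bar G_\phi$ is manifestly non-scalar and $h(0)=\Delta\lambda\neq0$ for the matching coherence observable. Without the auxiliary register, two eigenstates of $G$ would generically be connected by off-diagonal blocks of $PH_{EP}P$, the commutation would fail, and the fine-tuned collapse $\bar G\to g_E\otimes\ident_{PA}$ you worry about could actually occur. So the fix is available and cheap, but your proposal as written has not yet invoked the specific structural feature of the code that supplies it.
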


\begin{proof}

We begin by defining a QEC code in the same manner as in the proof of Theorem~\ref{theorem:HNES} with the extended span $\mS$ replaced by ${\rm span}\{\ident\}$. $\ket{C_0}$ and $\ket{C_1}$ in this case correspond to the positive and negative parts of $G - \frac{\trace(G)}{d_P}\ident_P$ (the eigenstates with eigenvalues zero can be assigned in either the positive or negative parts without affecting the calculations), whose supports are denoted by projectors $P_0$ and $P_1$. $P_0 + P_1 = \ident_{P} = \ident_{A}$. To guarantee HL scaling with this QEC code, it is not necessary to perform the recovery operation. Fast and accurate syndrome measurements will be sufficient. The QEC serves to transform the noise and enforce commutation with the signal, facilitating the QFI computation.

Note that since our only controls consist of rapid measurements, the protocol can be seen as an application of the quantum Zeno effect. By measuring fast and frequently, we freeze the dynamics relating to any noise that is orthogonal to the non-trivial component of the signal.

For an initial state in the code space, the resulting dynamics are governed by the master equation
\begin{align}\label{eq:P_Lindblad}
    \frac{d\rho}{dt} &= \lim_{\Delta t\rightarrow 0}\frac{P \Phi_{\Delta t}(\rho) P + P_\perp \Phi_{\Delta t}(\rho) P_\perp}{\Delta t} \\ \nonumber
    &= -\ii\lrb{PH_{EP}P,\rho}-\ii\lrb{\omega P\tilde{G}P,\rho},
\end{align}
where $P$ is the projection operator onto the code space. Expanding the evolution to first order, we obtain
\begin{align*}
        &P \Phi_{\Delta t}(\rho) P + P_\perp \Phi_{\Delta t}(\rho) P_\perp\\
        =&\rho-\ii\lrb{P(\omega\Tilde{G} + H_{EP}) P,\rho}\Delta t  + O(\Delta t^2).
\end{align*}
We neglect the second-order terms in $\Delta t$. This master equation corresponds to unitary evolution generated by the following effective Hamiltonian 
\begin{align*}
    H_\text{eff}=\omega P\Tilde{G}P+PH_{EP}P.
\end{align*}
First, we show that the two terms of the Hamiltonian commute. Since, from \cref{eq:Lindblad_proof_cross}, $PB_{EP}P = P_0B_{EP}P_0+P_1B_{EP}P_1$ for any operator $B_{EP}$ acting on the probe and the environment, we have 
\begin{align}
    &\lrb{PH_{EP}P,P\tilde{G}P}\\\nonumber
    =&\lrb{P_0 H_{EP}P_0,P_0\tilde{G}P_0} + \lrb{P_1 H_{EP}P_1,P_1\tilde{G}P_1}
\end{align}
where $P_0 = \ket{C_0}\bra{C_0}$ and $P_1 = \ket{C_1}\bra{C_1}$. 
Since $\tilde G$ only has support on the probe degrees of freedom, we have that $\lrb{\bra{C_i}\Tilde{G}\ket{C_i},\bra{C_j}H_{EP}\ket{C_j}}=0$ $\forall\ i,j$. Therefore, the terms cancel and we obtain 
\begin{equation*}
    \lrb{PH_{EP}P,P\Tilde{G}P}=0.
\end{equation*}
We will now show that the HL is achieved. We prepare the following state on the probe and auxiliary system
\begin{equation*}
    \ket{\psi_{PA}}=\frac{1}{\sqrt{2}}\lrp{\ket{C_0}+\ket{C_1}},
\end{equation*}
where $\ket{C_0},\ket{C_1}$ are the code words of the error correcting code.
Let $\ket{\psi_E}$ be an arbitrary state on the environment degrees of freedom. We can take $\ket{\psi_E}$ to be pure without loss of generality by considering a purification of the environment. The total state of the system is
\begin{equation}
    \ket{\Psi_{PAE}}=\ket{\psi_E}\frac{1}{\sqrt{2}}\lrp{\ket{C_0}+\ket{C_1}}.
\end{equation}
We can rewrite this state as a sum over the joint eigenstates of $PH_{EP}P\otimes\ident_A$ and $\ident_E\otimes P\Tilde{G}P\otimes\ident_A$
\begin{equation}
\label{eq:product}
    \ket{\Psi_{EPA}}=\sum_{k=1}^{d_E} c_k^0 \ket{\psi_k^{0}}\ket{C_0}+\sum_{l=1}^{d_E} c_l^1 \ket{\psi_l^{1}}\ket{C_1},
\end{equation}
where the joint eigenstates of these operators are necessarily product states between the codewords and states $\ket{\psi_i^j}$ (defined below) on the environment. Below, we will drop the tensor products with identity on the environment and auxiliary system from the notation to reduce clutter; it will be clear from the context which operators act on which subsystems. To prove \cref{eq:product} follows, we know from \cref{eq:Lindblad_proof_cross} that 
\begin{equation}
    \bra{C_0}PH_{EP}P\ket{C_1}=\bra{C_0}P\Tilde{G}P\ket{C_1}=0.
\end{equation}
This allows us to write the terms of the effective Hamiltonian in the following form 
\begin{equation*}
    PH_{EP}P=H_{EP}^{0}\otimes\ket{C_0}\bra{C_0}+H_{EP}^{1}\otimes\ket{C_1}\bra{C_1},
\end{equation*}
\begin{equation*}
    P\Tilde{G}P=\lambda_0\ident\otimes\ket{C_0}\bra{C_0}+\lambda_1\ident\otimes\ket{C_1}\bra{C_1},
\end{equation*}
where $\lambda_0,\lambda_1$ are the (non-trivial) eigenvalues of $P\tilde{G}P$ as defined in the previous section and $H_{EP}^0,H_{EP}^1$ are operators acting on the environment. From this form, it is clear that the joint eigenvectors of $PH_{EP}P$ and $P\tilde{G}P$ are the tensor products between the eigenvectors of the blocks $H_{EP}^i$ and the projectors $\ket{C_i}\bra{C_i}$.

We now consider the time evolution for our chosen probe state. Since the terms of the Hamiltonian commute, we can independently consider their generated time evolution. We begin with the signal 
\begin{align*}
    e^{-\ii t\omega P\Tilde{G}Pt}\ket{\Psi_{EPA}}
    =\sum_{k=1}^{d_E} c_k^0 e^{-\ii t\omega \lambda_0} \ket{\psi_k^0}\ket{C_0}\nonumber\\
    +\sum_{l=1}^{d_E} c_l^1 e^{-\ii t\omega \lambda_1} \ket{\psi_l^1}\ket{C_1}.
\end{align*}
For the second term, we obtain
\begin{align*}
    \ket{\Psi_{EPA}(t)}&=e^{-\ii tPH_{EP}P}e^{-\ii tP\omega\Tilde{G}P}\ket{\Psi_{EPA}}
    \\
    &=\sum_{k=1}^{d_E} c_k^0 e^{-\ii t\omega \lambda_0}e^{-\ii t\psi_k^0 }\ket{\psi_k^0}\ket{C_0} \nonumber\\
    &\;\;+\sum_{l=1}^{d_E} c_l^1 e^{-\ii t\omega\lambda_1} e^{-\ii t\psi_l^1}\ket{\psi_l^1}\ket{C_1},
\end{align*}
where $\psi_i^j$ is the eigenvalue of $PH_{EP}P$ associated with the eigenstate $\ket{\psi_i}\ket{C_i}$. We trace out the environment to obtain the reduced state of the probe and auxiliary system
\begin{equation*}
    \rho_{PA}=\begin{pmatrix}
        \frac{1}{2} & e^{-\ii\omega t(\lambda_0-\lambda_1)}\alpha(t) \\
        e^{\ii\omega t\lrp{\lambda_0-\lambda_1}}\alpha(t)^* & \frac{1}{2}
    \end{pmatrix}.
\end{equation*}
We obtain a 2 by 2 matrix as the state is contained in the two-dimensional code space, which is spanned by the states $\ket{C_0},\ket{C_1}$. We know that the diagonals are $\frac{1}{2}$ since the initial state had amplitudes of $\frac{1}{\sqrt{2}}$ in $\ket{C_0}$ and $\ket{C_1}$ which are both eigenstates of $P\Tilde{G}P$ and $PH_{EP}P$. Therefore, the probabilities of measurement in that basis do not change. The value of $\alpha$ is given by 
\begin{equation*}
    \alpha=\sum_{k=1}^{d_E}\sum_{l=1}^{d_E}c_k^0{c_l^1}^* e^{-\ii t(\psi_k^0-\psi_l^1)}\braket{\psi_l^1}{\psi_k^0}.
\end{equation*}
Let $\Delta\lambda=\lambda_0-\lambda_1$.
The QFI of a mixed state can be calculated using the following equation from \cite{liu_quantum_2019}
\begin{equation}\label{eq:QFI_formula}
    \mathcal{F}(\rho)=2\sum_{i,j}\frac{\abs{\bra{\gamma_i}\del_{\omega}\rho\ket{\gamma_j}}^2}{\gamma_i+\gamma_j},
\end{equation}
where $\gamma_i$ and $\ket{\gamma_i}$ are the eigenvalue and eigenvector pairs of $\rho$, and $\del_{\omega}\rho$ is the derivative of $\rho$ with respect to the parameter $\omega$. The sum is taken over $i,j$ pairs such that $\gamma_i+\gamma_j>0$.

One can show, by direct calculation, that the eigenvalues of $\rho_{PA}$ are given by
\begin{equation}
    \gamma_{\pm}=\frac{1}{2}\pm\abs{\alpha(t)}.
\end{equation}
Since the reduced density matrix has positive eigenvalues between zero and one, we can deduce that $\abs{\alpha(t)}\in\lrb{-\frac{1}{2},\frac{1}{2}}$. We rewrite $\alpha(t)$ in the polar form as $\alpha(t)=\abs{\alpha(t)}e^{-\ii\phi_{\alpha}(t)}$. We will drop the explicit $t$ dependence from the notation to reduce clutter. Further, the eigenvectors of $\rho_{PA}$ are given by
\begin{equation}
    \ket{\gamma_{\pm}}=\frac{1}{\sqrt{2}} \begin{pmatrix}
        1 \\ \pm e^{\ii\omega t\Delta\lambda-\ii\phi_{\alpha}}
    \end{pmatrix}.
\end{equation}
Inserting these values into \cref{eq:QFI_formula} gives
\begin{equation*}
    \mathcal{F}(\rho_{PA})=4 t^2\lrp{\Delta\lambda}^2\abs{\alpha(t)}^2.
\end{equation*}
To show the HL, we need to show that 
    \begin{equation*}
 \limsup_{t\rightarrow\infty}\abs{\alpha}=\tx{constant}.
    \end{equation*}

    We already know that $\limsup_{t\rightarrow\infty}\abs{\alpha}\leq \frac{1}{2}$ from the earlier derived condition. We can rewrite $\alpha$ by letting $\bar{c}_m=c_k^0{c_l^1}^{*}\bra{\psi_l}\ket{\psi_k}$ 
    and $\phi_m=\psi_k^0-\psi_l^1$, where $m=(k,l)$ runs over the indices of $l$ and $k$. The equation for $\alpha(t)$ becomes
    \begin{equation}\label{eq:alpha_def}
        \alpha(t)=\sum_m\bar{c}_me^{-\ii t\phi_m},
    \end{equation}
     where $m$ runs over $d_E^2$ different values. It is important to note that the $\bar{c}_m$ do not depend on $t$.
    For $t=0$ we have $\rho_{PA}=\ket{\psi_{PA}}\bra{\psi_{PA}}$, and so $\alpha(0)=\frac{1}{2}$.
    Our goal is to prove 
    \begin{equation}
    \label{eq:limsup}
        \limsup_{t\rightarrow\infty}\abs{\alpha(t)}=\frac{1}{2}.
    \end{equation}
    This condition is very similar to quantum state revival in a closed quantum system, which is guaranteed by the Poincarr\'e recurrence theorem \cite{bocchieri_quantum_1957,schulman_note_1978,percival_almost_1961}. One can prove it by factoring the time evolution to separate signal and noise, and considering the recurrence times of the noise Hamiltonian. In fact, the revival of the QFI is directly implied from the theory of almost periodicity~\cite{percival_almost_1961,corduneanu_almost_2009}, which we argue below. 
    For completeness, we further provide an explicit proof of \cref{eq:limsup} in Appendix~\ref{section:direct_alpha_proof}.

    Remark $\alpha(t)$ is a uniformly convergent Fourier series as it contains finite $d_E^2$ terms, indicating its almost periodicity~\cite{percival_almost_1961,schulman_note_1978,corduneanu_almost_2009}. Then $\alpha(t)$, as an almost periodic function, satisfies $\forall\ \epsilon>0$ and $\forall\ t_0$, there exists an infinite many values of $\tau$ such that :
    \begin{equation}
        \abs{\alpha(t_0+\tau)-\alpha(t_0)}<\epsilon.
    \end{equation}
    The HL is achievable so long as we measure the system at proper times where $\alpha(t)$ is above a certain positive constant.
    Further, these values of $\tau$ are well spread out between $-\infty$ and $+\infty$ such that one cannot construct arbitrarily long intervals between two values. The almost periodicity implies that if we pick a random $t$ from a sufficiently large interval, there is a non-zero probability that $\abs{\alpha(t)}$ is above a positive constant. Therefore, even if we have no knowledge of the environmental state, we can still find a proper $t$ that achieves the HL by picking randomly. 
    \end{proof}

%\begin{theorem}\label{theorem:unitary_scaling}
%    Suppose the evolution of the system is unitary, in other words, the evolution of the HMM is generated by \cref{eq:master_eq_unitary}. Suppose $\Hammy_E$ has finite dimension, i.e., $d_E<\infty$. Under these assumptions, there is a QEC protocol that achieves $\limsup_{t\rightarrow\infty}\mathcal{F}/t^2>0$. More precisely, it is possible to achieve a QFI that scales as $t^2$ multiplied by an almost periodic envelope function. By measuring at correct times, we recover the HL. 
%\end{theorem}

    Similar to Section~\ref{subsection:HNES}, the above protocol achieves $\limsup_{T\rightarrow 0}\mathcal F/T^2 > 0$ taking $T = t$ and $N = 1$ in \cref{fig:time_scales}, where QEC is performed fast and frequently and no intermediate measurements on the probe system are performed.

    As mentioned previously, measuring at random times will yield HL scaling; however, the constant factors in the scaling will likely be suboptimal. In practice, it is very difficult to compute the exact revival times for a system, even with knowledge of the dynamics. Instead, one can probe the revival times experimentally to calibrate the sensing protocol. This can be done with a Ramsey experiment, for example. If the revival times are too long, then it may be better to consider the HNES conditions and eliminate all interactions, if possible.

    It is worth noting that for systems with small environment dimensions and simple couplings, such as the previously discussed superconducting qubits interacting with two-level systems and nitrogen vacancy centers interacting with nearby spins, the envelope functions are relatively simple. Further, the revivals have a short period \cite{neeley_process_2008,shalibo_lifetime_2010,childress_coherent_2006,lu_observing_2020}, making them easy to probe.

\subsection{Diagonal Interactions and the HNELS Conditions}\label{subsection:HNELS}

We begin by defining the diagonal interaction. This is a special case of the model where the components of operators supported on the environment degrees of freedom all commute pairwise. 

\begin{definition}[Diagonal Interaction]
\label{def:diagonal_interaction}
We call a HMM in \cref{eq:master_eq} a HMM with diagonal interactions if and only if there exists an orthonormal basis $\{\ket{\phi_i}\}_{i=1}^{d_E}$ in $\mH_E$ such that 
\begin{gather}
    H_E=\sum_{i}h_{E}^i \ket{\phi_i}\bra{\phi_i}\otimes\ident_{PA}, \\
    H_{EP} = \sum_i \ket{\phi_i}\bra{\phi_i} \otimes H^{(i,i)}_{EP},\\ L_k = \sum_i \ket{\phi_i}\bra{\phi_i} \otimes L^{(i,i)}_{k},\quad \forall k.
\end{gather}
\end{definition}

For HMMs with diagonal interactions, the quantum dynamics can be simplified through the following lemma. 
\begin{lemma}\label{lemma:diagonal}
Consider a quantum state $\rho$ as an input state to an HMM with diagonal interactions. The output state after evolution of time $t$ is $\Phi_t(\rho)$ where $\Phi_t(\cdot)$ denotes the evolution channel after time $t$. We have 
\begin{equation}
\label{eq:diagonal-lemma}
    \trace_E(\Phi_t(\rho)) = \trace_E(\Phi_t((\mD_E \otimes \ident_{PA})(\rho)),
\end{equation}
where $\mD_E$ is the completely dephasing channel on $\mH_E$ such that $\mD_E(\cdot) = \sum_{k} \ket{\phi_k}\bra{\phi_k}(\cdot)\ket{\phi_k}\bra{\phi_k}$. 
In addition, \cref{eq:diagonal-lemma} still holds when quantum controls are applied on the probe and auxiliary system during the evolution. 
\end{lemma}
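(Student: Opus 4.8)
\documentclass{article}
\begin{document}

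\textbf{Proof proposal.}

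The plan is to show that the completely dephasing channel $\mD_E$ acting on the environment before evolution does not affect the reduced dynamics on the probe and auxiliary system, by exploiting the defining feature of diagonal interactions: all generators of the evolution commute with the dephasing basis $\{\ket{\phi_i}\}$ on the environment side. The key observation I would establish first is that the Lindbladian generator $\mathcal{L}$ of \cref{eq:master_eq} preserves the block structure induced by the projectors $\Pi_i = \ket{\phi_i}\bra{\phi_i}\otimes\ident_{PA}$. Concretely, because $H_E$, $H_{EP}$, and each $L_k$ are all of the form $\sum_i \ket{\phi_i}\bra{\phi_i}\otimes(\cdot)$ (diagonal in the environment index), one checks that for any off-diagonal environment coherence $\ket{\phi_i}\bra{\phi_j}\otimes\sigma$ with $i\neq j$, the generator maps it to another operator proportional to $\ket{\phi_i}\bra{\phi_j}\otimes(\cdot)$, i.e. it cannot create coherences in blocks that were initially empty, and more importantly the ``diagonal'' sector $\sum_i \Pi_i \rho \Pi_i$ evolves among itself. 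This means the evolution channel $\Phi_t$ commutes with $\mD_E\otimes\ident_{PA}$ as superoperators: $\Phi_t\circ(\mD_E\otimes\ident_{PA}) = (\mD_E\otimes\ident_{PA})\circ\Phi_t$, since $\mD_E$ is precisely projection onto the diagonal environment sector.

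Granting that commutation, the lemma follows quickly. First I would write
\begin{equation*}
    \trace_E(\Phi_t((\mD_E\otimes\ident_{PA})(\rho))) = \trace_E((\mD_E\otimes\ident_{PA})(\Phi_t(\rho))),
\end{equation*}
using the superoperator commutation just established. Then I would use the elementary fact that tracing out the environment is insensitive to a prior dephasing on the environment, namely $\trace_E\circ(\mD_E\otimes\ident_{PA}) = \trace_E$, because the diagonal entries that survive dephasing are exactly the ones picked out by the partial trace. Composing these two equalities gives $\trace_E(\Phi_t((\mD_E\otimes\ident_{PA})(\rho))) = \trace_E(\Phi_t(\rho))$, which is precisely \cref{eq:diagonal-lemma}.

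For the final clause with quantum controls, the structure of the argument is preserved because the controls act only on $\Hammy_{PA}$ and therefore trivially commute with both $\mD_E$ and the environment-diagonal block structure. I would model an interleaved evolution as an alternating composition $\Phi_{t_n}\circ \mathcal{C}_n\circ\cdots\circ\mathcal{C}_1\circ\Phi_{t_0}$, where each $\mathcal{C}_j = \ident_E\otimes(\text{CPTP on }PA)$. Since each control factor is of the form $\ident_E\otimes(\cdot)$, it commutes with $\mD_E\otimes\ident_{PA}$ and preserves the environment-diagonal sector, so I can push $\mD_E$ through the entire alternating sequence to the left, landing just before the final partial trace, and then absorb it as before.

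The main obstacle I anticipate is rigorously establishing the superoperator commutation $\Phi_t\circ(\mD_E\otimes\ident_{PA}) = (\mD_E\otimes\ident_{PA})\circ\Phi_t$, equivalently that the generator $\mathcal{L}$ commutes with $\mD_E\otimes\ident_{PA}$. This requires carefully verifying that each term of the Lindbladian—the two commutator terms involving $H_{EP}$ and $\omega\tilde H_E$, the signal term $\omega\tilde G$ (which acts only on $P$ and is trivially compatible), and each dissipator $L_k\,\cdot\,L_k^\dagger - \tfrac{1}{2}\{L_k^\dagger L_k,\cdot\}$—individually commutes with dephasing in the $\{\ket{\phi_i}\}$ basis. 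The dissipator term is the delicate one: one must confirm that with $L_k$ diagonal in the environment index, the map $\rho\mapsto L_k\rho L_k^\dagger$ sends environment-block $(i,j)$ to environment-block $(i,j)$ (never mixing diagonal and off-diagonal sectors), and likewise for the anticommutator term, so that dephasing and the dissipator commute. Once this block-preservation is checked term by term, the rest is bookkeeping.

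\end{document}
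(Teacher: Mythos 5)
Your proposal is correct and follows essentially the same route as the paper's proof: establish that $\mD_E\otimes\ident_{PA}$ commutes with the generator (hence with $\Phi_t$) because the diagonal interaction preserves the environment block structure, absorb the dephasing into the partial trace via $\trace_E\circ(\mD_E\otimes\ident_{PA})=\trace_E$, and note that controls on $\Hammy_{PA}$ trivially commute with $\mD_E$. The block-preservation check you flag as the delicate step is exactly what the paper asserts (more tersely) when it says the dephasing channel commutes with the differential of the master equation for diagonal interactions.
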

\begin{proof}
    To see \cref{eq:diagonal-lemma} holds, we observe it is sufficient to prove the commuting relation between $\mD_E$ and $\Phi_t$, because 
\begin{equation}
\trace_E(\Phi_t(\rho)) = \trace_E(( \mD_E \otimes \ident_{PA})(\Phi_t(\rho))). 
\end{equation}
The fact that $\mD_E \otimes \ident_{PA}$ commutes with $\Phi_t$ can be proven by noticing $\mD_E \otimes \ident_{PA}$ commutes with the differential in \cref{eq:master_eq} for diagonal interactions. Finally, $\mD_E \otimes \ident_{PA}$ also commutes with any quantum operations acting on the probe and auxiliary system because they act on different systems. 
\end{proof}

Lemma~\ref{lemma:diagonal} implies that for diagonal interactions, it is always sufficient to consider input states that are diagonal in the environmental basis $\{\ket{\phi_i}\}_{i=1}^d$ and the off-diagonal terms will not contribute to the final measurement outcomes at the probe and auxiliary system. Below, we will assume the quantum states under consideration $\rho \in \mH_{E}\otimes \mH_{P} \otimes \mH_{A}$ are diagonal in $\mH_{E}$, i.e., 
\begin{equation}
\label{eq:rho-diag}
    \rho = \sum_{i} \alpha_i \ket{\phi_i}\bra{\phi_i} \otimes \rho^{(i,i)},
\end{equation}
where $\rho^{(i,i)}$ are density operators on $\mH_{PA}$ and $\alpha_i$ are non-negative constants summing up to one. As we are considering HMMs with diagonal interactions, only $\rho^{(i,i)}$ will evolve over time under the master equation, and it is sufficient to track the evolution of $\rho^{(i,i)}$.

Further, when studying diagonal interactions, we can ignore the contribution from $\tilde{H}_E$ as it will not make a difference to the QFI.

\begin{lemma}\label{lemma:ignore_HE}
    Consider a quantum state $\rho$ in the form of~\cref{eq:rho-diag} as the input state to an HMM with diagonal interactions. We have that 
    \begin{align}
        \tr_E\lrp{\Phi_t(\rho)}=\tr_E\lrp{\hat{\Phi}_t(\rho)},
    \end{align}
    where $\hat{\Phi}_t(\cdot)$ denotes the evolution channel after time $t$ where we ignore the contributions of $\tilde{H}_E$. In other words, it is the time evolution under the equation
    \begin{multline}\label{eq:master_eq_no_HE}
        \frac{d\rho}{dt}=-\ii\lrb{H_{EP},\rho}-\ii\lrb{\omega \Tilde{G},\rho} \\ +\sum_{k=1}^r\lrp{L_k\rho L_k^{\dagger} -\frac{1}{2}\lrcb{L_k^{\dagger}L_k,\rho}}.
    \end{multline}
\end{lemma}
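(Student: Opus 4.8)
The plan is to show that, for diagonal interactions, the Lindbladian generator of $\Phi_t$ splits into two \emph{commuting} pieces, one of which is precisely the $\tilde{H}_E$ term and acts only on the environment, so that it drops out after tracing over $\mH_E$. First I would write the generator of \cref{eq:master_eq} as $\mathcal{L}=\mathcal{L}_{H_E}+\mathcal{L}_{\mathrm{rest}}$, where $\mathcal{L}_{H_E}(\cdot)=-\ii\omega\lrb{\tilde{H}_E,\cdot}$ isolates the signal-on-environment term and $\mathcal{L}_{\mathrm{rest}}$ collects everything else (the $H_{EP}$ and $\omega\tilde{G}$ Hamiltonian terms together with the dissipator built from the $L_k$). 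The crucial input from Definition~\ref{def:diagonal_interaction} is that every operator supported on $\mH_E$ is diagonal in the common basis $\{\ket{\phi_i}\}$: since $\tilde{H}_E=\sum_i h_E^i\ket{\phi_i}\bra{\phi_i}\otimes\ident_{PA}$ acts as the scalar $h_E^i$ on the $i$-th environment block, while $H_{EP}=\sum_i\ket{\phi_i}\bra{\phi_i}\otimes H_{EP}^{(i,i)}$ and $L_k=\sum_i\ket{\phi_i}\bra{\phi_i}\otimes L_k^{(i,i)}$ are block-diagonal in the same basis, one checks directly that $\lrb{\tilde{H}_E,H_{EP}}=0$ and $\lrb{\tilde{H}_E,L_k}=0$ for all $k$. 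Commutation of $\tilde{H}_E$ with $\tilde{G}=\ident_E\otimes G\otimes\ident_A$ is automatic, since they act on disjoint tensor factors.

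Next I would lift these operator identities to the level of superoperators. Because $\tilde{H}_E$ is Hermitian and commutes with the Hamiltonian $H_{EP}+\omega\tilde{G}$ and with each jump operator $L_k$ (hence also with $L_k^{\dagger}$ and $L_k^{\dagger}L_k$), a short Jacobi-identity computation on the Hamiltonian part together with $L_k\,\lrb{\tilde{H}_E,X}\,L_k^{\dagger}=\lrb{\tilde{H}_E,L_k X L_k^{\dagger}}$ on each dissipative term shows that the adjoint action $\lrb{\tilde{H}_E,\cdot}$ commutes with every term of $\mathcal{L}_{\mathrm{rest}}$, so $\lrb{\mathcal{L}_{H_E},\mathcal{L}_{\mathrm{rest}}}=0$ as superoperators. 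In finite dimension this yields the \emph{exact} factorization $\Phi_t=e^{t\mathcal{L}}=e^{t\mathcal{L}_{H_E}}\circ e^{t\mathcal{L}_{\mathrm{rest}}}=\mathcal{U}_t\circ\hat{\Phi}_t$, where $\hat{\Phi}_t=e^{t\mathcal{L}_{\mathrm{rest}}}$ is exactly the evolution of \cref{eq:master_eq_no_HE} and $\mathcal{U}_t(\cdot)=V_t(\cdot)V_t^{\dagger}$ is the unitary channel generated by $V_t=e^{-\ii\omega t H_E}\otimes\ident_{PA}$.

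Finally I would trace out the environment. Because $V_t$ acts nontrivially only on $\mH_E$, conjugation by it leaves the partial trace over $\mH_E$ invariant, i.e. $\tr_E\!\lrp{V_t\,\sigma\,V_t^{\dagger}}=\tr_E(\sigma)$ for any $\sigma$ (using cyclicity of the partial trace for operators acting only on the traced system, together with $V_t^{\dagger}V_t=\ident$). Applying this with $\sigma=\hat{\Phi}_t(\rho)$ gives $\tr_E(\Phi_t(\rho))=\tr_E\!\lrp{\mathcal{U}_t(\hat{\Phi}_t(\rho))}=\tr_E(\hat{\Phi}_t(\rho))$, which is the claim. I would remark that this route does not even require the block-diagonal form \cref{eq:rho-diag} of the input, so it is slightly stronger than stated.

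I expect the main obstacle to be essentially bookkeeping rather than conceptual: verifying the operator commutators $\lrb{\tilde{H}_E,H_{EP}}=0$ and $\lrb{\tilde{H}_E,L_k}=0$, and then carefully upgrading them to the superoperator commutator $\lrb{\mathcal{L}_{H_E},\mathcal{L}_{\mathrm{rest}}}=0$ so that the semigroup genuinely factorizes (rather than only to first order in $t$). If one prefers to stay closer to the surrounding lemmas, an alternative is to first invoke Lemma~\ref{lemma:diagonal} to restrict to block-diagonal states $\rho$ as in \cref{eq:rho-diag}; for such states $\lrb{\tilde{H}_E,\rho}=0$ holds identically, and since $\mathcal{L}_{\mathrm{rest}}$ preserves this block-diagonal structure, the $\tilde{H}_E$ term contributes zero throughout the evolution and can simply be dropped.
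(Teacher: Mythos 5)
Your proposal is correct and follows essentially the same route as the paper: both rest on the observation that, for diagonal interactions, $\tilde{H}_E$ commutes with $H_{EP}$, $\tilde{G}$, and every $L_k$, so the evolution factorizes into the unitary conjugation by $e^{-\ii\omega t\tilde{H}_E}$ composed with $\hat{\Phi}_t$. The only substantive difference is the final step: the paper places the conjugation on the input and uses the block-diagonal form \cref{eq:rho-diag} to show it acts trivially on $\rho$ (which in fact gives $\Phi_t(\rho)=\hat{\Phi}_t(\rho)$ as full states), whereas you push the conjugation past $\hat{\Phi}_t$ and eliminate it with the partial-trace invariance $\tr_E\lrp{V_t\,\sigma\,V_t^{\dagger}}=\tr_E(\sigma)$; as you correctly note, your version does not require the input to be block-diagonal, so it is marginally more general, at the cost of having to verify the superoperator commutator $\lrb{\mathcal{L}_{H_E},\mathcal{L}_{\mathrm{rest}}}=0$ rather than only the operator-level commutators.
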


\begin{proof}
    First, notice that $\tilde{H}_E$ commutes with all other operators of the master equation. This allows us to factor the evolution as follows
    \begin{align*}
        \Phi_t(\cdot)=\hat{\Phi}_t\lrp{e^{-it\tilde{H}_E}(\cdot) e^{it\tilde{H}_E}}=e^{-it\tilde{H}_E}\hat{\Phi}_t(\cdot)e^{-it\tilde{H}_E}.
    \end{align*}
For a $\rho$ in the form of~\cref{eq:rho-diag}, time evolution from $H_E$ has no effect.
\begin{align*}
    e^{-it\tilde{H}_E}\rho e^{it\tilde{H}_E}&=\sum_i\alpha_i e^{-itH_E}\ket{\phi_i}\bra{\phi_i} e^{itH_E} \otimes \rho_{PA} \\
    &= \sum_i\alpha_i e^{-ith_E^i}\ket{\phi_i}\bra{\phi_i} e^{ith_E^i} \otimes \rho_{PA}\\
    &=\sum_i\alpha_i \ket{\phi_i}\bra{\phi_i} \otimes \rho_{PA}=\rho.
\end{align*}
And so, we see that the evolution due to $\tilde{H}_E$ has no effect, and only the evolution due to $\hat{\Phi}_t$ must be considered.
\end{proof}

When the interaction is diagonal, with less restrictive conditions than the HNES conditions, we can guarantee that the QFI scales as a constant times $t^2$ multiplied by some periodic envelope function, and that $\limsup_{t\rightarrow\infty}\mathcal{F}/t^2>0$ when $d_E<\infty$. This is the same scaling as the case of unitary evolution seen in ~\ref{subsection:unitary} and guaranteed by Theorem~\ref{theorem:unitary_scaling}.

\begin{theorem}[Hamiltonian not in extended Lindblad span (HNELS) Conditions]\label{theorem:HNELS} 
Consider the case of a diagonal interaction. Let $\lrcb{\ket{\phi_i}}_{i=1}^{d_E}$ be the orthonormal basis for $\Hammy_E$ from Definition~\ref{def:diagonal_interaction}. Let $\Hammy_E$ have finite dimension, i.e $d_E<\infty$. Let ${\mathcal{S}^{\circ}}$ be the linear span of the operators $\ident_P$, $\bra{\phi_i}L_k\ket{\phi_i},$ $\bra{\phi_i}L_k^{\dagger}\ket{\phi_i}$, and $\bra{\phi_i}L_k^{\dagger}\ket{\phi_i}\bra{\phi_m}L_j\ket{\phi_m}$  ($\forall\ k,j\in\lrcb{1,2,...,r}$, $\forall i,m\in\lrcb{1,2,...,d_E}$). ${\mathcal{S}^{\circ}}$ is similar to $\mathcal{S}$, except we exclude the contributions from $H_{EP}$ and only consider diagonal elements.

If $G\notin{\mathcal{S}^{\circ}}$, then $\limsup_{t\rightarrow\infty}\mathcal{F}/t^2>0$. 
Further, it is possible to achieve a QFI that scales as $t^2$ multiplied by a periodic envelope function. 
We will refer to the containment $G\notin{\mathcal{S}^{\circ}}$ as the ``Hamiltonian not in extended Lindblad span" (HNELS) conditions. 

\end{theorem}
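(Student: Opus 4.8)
The plan is to fuse the error-correcting construction of \cref{theorem:HNES} with the almost-periodicity argument of \cref{theorem:unitary_scaling}, exploiting the diagonal structure to split the environment into independent classical sectors. First I would invoke \cref{lemma:diagonal} and \cref{lemma:ignore_HE} to restrict attention to input states diagonal in the basis $\lrcb{\ket{\phi_i}}_{i=1}^{d_E}$ and to discard the $\tilde H_E$ term. Writing such a state as $\rho=\sum_i\alpha_i\ket{\phi_i}\bra{\phi_i}\otimes\rho_{PA}^{(i)}$ and using the notation $A^{(i,i)}=\bra{\phi_i}A\ket{\phi_i}$ of \cref{eq:opertor_ij}, the diagonal interaction guarantees that each sector evolves independently: the block $\rho_{PA}^{(i)}$ obeys a Lindblad equation on $\Hammy_{PA}$ with Hamiltonian $\omega G+H_{EP}^{(i,i)}$ and jump operators $L_k^{(i,i)}$. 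This reduces the theorem to a family of Markovian metrology problems indexed by $i$.

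Next I would build the code exactly as in \cref{theorem:HNES}, but decompose $G=G_{\parallel}+G_{\perp}$ relative to the Hermitian part of the \emph{smaller} span $\mathcal{S}^{\circ}$ rather than $\mathcal{S}$. Since $G\notin\mathcal{S}^{\circ}$ the traceless part $G_{\perp}\neq 0$, and I take $\ket{C_0},\ket{C_1}$ to be purifications of its positive and negative parts with orthogonal supports on $\Hammy_A$, so that $P=\ket{C_0}\bra{C_0}+\ket{C_1}\bra{C_1}$ and $\bra{C_0}B\otimes\ident_A\ket{C_1}=0$ for every probe operator $B$ by \cref{eq:Lindblad_proof_cross}. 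Because $\mathcal{S}^{\circ}$ contains $\ident_P$, the diagonal elements $L_k^{(i,i)}$, their adjoints, and the products ${L_k^{(i,i)}}^{\dagger}L_j^{(m,m)}$, the orthogonality $G_{\perp}\perp\mathcal{S}^{\circ}$ forces $PL_k^{(i,i)}P=\lambda_k^{i}P$ and $P{L_k^{(i,i)}}^{\dagger}L_j^{(m,m)}P=\zeta_{kj}^{im}P$, which are precisely the KL conditions of \cref{lemma:KL} for the diagonal jump errors. Hence a single environment-independent recovery $\mathcal{R}$ corrects all dissipative errors in every sector at once. The crucial difference from \cref{theorem:HNES} is that $H_{EP}$ is \emph{deliberately left uncorrected}: its off-code leakage $P_{\perp}H_{EP}^{(i,i)}P$ is frozen out by the fast syndrome measurements (the Zeno effect), while the surviving in-code part $PH_{EP}^{(i,i)}P=h_0^{(i)}\ket{C_0}\bra{C_0}+h_1^{(i)}\ket{C_1}\bra{C_1}$ is \emph{diagonal}, the cross terms vanishing again by \cref{eq:Lindblad_proof_cross}. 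Thus in each sector the corrected code-space dynamics are unitary, generated by the diagonal effective Hamiltonian $\omega PGP+PH_{EP}^{(i,i)}P$.

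With this in hand the computation mirrors \cref{theorem:unitary_scaling}. Preparing $\ket{\psi_{PA}}=\tfrac{1}{\sqrt2}\lrp{\ket{C_0}+\ket{C_1}}$ and letting $\lambda_0,\lambda_1$ be the eigenvalues of $PGP$, each sector accumulates only a relative phase, so after tracing out the classical environment the reduced state has diagonal entries $\tfrac12$ and off-diagonal entry $e^{-\ii\omega t\Delta\lambda}\alpha(t)$, where $\Delta\lambda=\lambda_0-\lambda_1$ and $\alpha(t)=\tfrac12\sum_i\alpha_i\,e^{-\ii t(h_0^{(i)}-h_1^{(i)})}$. Substituting the eigenvalues $\tfrac12\pm\abs{\alpha(t)}$ and the corresponding eigenvectors of $\rho_{PA}$ into \cref{eq:QFI_formula} yields $\mathcal{F}=4t^2\lrp{\Delta\lambda}^2\abs{\alpha(t)}^2$, with $\Delta\lambda\neq 0$ since $G_{\perp}\neq 0$ implies $PGP\neq\mathrm{const}\times P$. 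Finally, because $d_E<\infty$, $\alpha(t)$ is a finite sum of complex exponentials, hence almost periodic with $\alpha(0)=\tfrac12$ and $\abs{\alpha(t)}\le\tfrac12$; the recurrence argument of \cref{theorem:unitary_scaling} then gives $\limsup_{t\rightarrow\infty}\abs{\alpha(t)}=\tfrac12$, so that $\limsup_{t\rightarrow\infty}\mathcal{F}/t^2=\lrp{\Delta\lambda}^2>0$ with an almost periodic (periodic when the $h_0^{(i)}-h_1^{(i)}$ are commensurate) envelope, as claimed.

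I expect the main obstacle to be the rigorous justification of the second paragraph: that correcting the diagonal jumps while retaining $H_{EP}$ genuinely produces purely unitary code-space dynamics in every sector. One must verify that the $O(\Delta t)$ coherences between $P$ and $P_{\perp}$ generated by the uncorrected $H_{EP}$ are removed by the measurement step, that the recovery-induced residual from the Lindblad terms is supported on the environment and therefore traces away without disturbing $\rho_{PA}$ (the mechanism following \cref{eq:recovery_codespace}), and that a single $\mathcal{R}$ works across all sectors, which follows from the extended-error reformulation in \cref{lemma:KL}. Everything downstream is then a direct transcription of the unitary-case calculation.
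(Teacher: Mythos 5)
Your proposal is correct and follows essentially the same route as the paper: reduce to diagonal environment sectors via Lemmas~\ref{lemma:diagonal} and~\ref{lemma:ignore_HE}, build the HNES-style code with $\mathcal{S}$ replaced by $\mathcal{S}^{\circ}$ so that the dissipative terms are corrected and the residual code-space dynamics are unitary, and then run the almost-periodicity argument. The only cosmetic difference is that the paper simply observes the post-QEC evolution is generated by $-\ii\lrb{P\lrp{\omega\tilde{G}+\sum_i\ket{\phi_i}\bra{\phi_i}\otimes H_{EP}^{(i,i)}}P,\cdot}$ and invokes Theorem~\ref{theorem:unitary_scaling} wholesale, whereas you transcribe that theorem's QFI computation explicitly; the "main obstacle" you flag is resolved in the paper exactly as you anticipate, by trace preservation of $P\cdot P+\mathcal{R}(P_\perp\cdot P_\perp)$ forcing the residual scalar to vanish in each sector.
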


\begin{proof}
We consider $\rho$ in diagonal form (\cref{eq:rho-diag}) since we trace out the environment at the end and may thus apply Lemma~\ref{lemma:diagonal} and Lemma~\ref{lemma:ignore_HE}. For a $\rho$ of this form, the master equation governing the HMM takes the form 
\begin{align*}
    &\frac{d\rho}{dt}=\\
    &\sum_i\alpha_i \ket{\phi_i}\bra{\phi_i}\otimes\left(-\ii\omega\lrb{ G,\rho^{(i,i)}}-\ii\lrb{H_{EP}^{(i,i)},\rho^{(i,i)}} + \right. \\
    &\left. \sum_{j=1}^r\lrp{L_j^{(i,i)}\rho^{(i,i)}\lrp{L_j^{(i,i)}}^{\dagger} - \frac{1}{2}\lrcb{\lrp{L_k^{(i,i)}}^{\dagger}L_k^{(i,i)},\rho^{(i,i)}} }\right)
\end{align*}

We employ the same construction for a QEC code as in Section~\ref{subsection:HNES}, except we swap $\mathcal{S}$ for $\mathcal{S}^{\circ}$.
For an initial state in the code space, the resulting dynamics are governed by the master equation
\begin{align}\label{eq:P_Lindblad_hnels}
    \frac{d\rho}{dt} &= \lim_{\Delta t\rightarrow 0}\frac{P \Phi_{\Delta t}(\rho) P + \mR(P_\perp \Phi_{\Delta t}(\rho) P_\perp)}{\Delta t} \nonumber\\
    &=\sum_i\alpha_i \ket{\phi_i}\bra{\phi_i}\otimes\left(-\ii\omega\lrb{PGP,\rho^{(i,i)}} \right. \nonumber\\
    &-\ii\lrb{PH_{EP}^{(i,i)}P,\rho^{(i,i)}} + \sum_{j=1}^r\left( PL_j^{(i,i)}\rho^{(i,i)}\lrp{L_j^{(i,i)}}^{\dagger}P \right. \nonumber\\
    &+ \mathcal{R}\lrp{P_{\perp}L_j^{(i,i)}\rho^{(i,i)}\lrp{L_j^{(i,i)}}^{\dagger}P_{\perp}} \nonumber\\
    &\left. \left.- \frac{1}{2}\lrcb{P\lrp{L_k^{(i,i)}}^{\dagger}L_k^{(i,i)}P,\rho^{(i,i)}} \right)\right).
\end{align}
where $P$ is the projection operator onto the code space. Since $G\notin\mathcal{S}^{\circ}$, we have that $PGP$ acts non-trivially in the code-space. Further, by design of the QEC code, the last three terms of \cref{eq:P_Lindblad_hnels} give $\rho^{(i,i)}$ up to a scalar. This scalar must be zero since the jump operator portion of the master equation has trace zero, and the channel $P\cdot P+\mathcal{R}(P\cdot P)$ is trace preserving. Therefore, the evolution is given by
\begin{align*}
    \frac{d\rho}{dt}=\sum_i\alpha_i\ket{\phi_i}\bra{\phi_i}\otimes -\ii\lrb{P(\omega G+H_{EP}^{(i,i)})P,\rho^{(i,i)}},
\end{align*}
which can be rewritten as
\begin{align}\label{eq:HNELS_unitary}
    \frac{d\rho}{dt}=-\ii\lrb{P\lrp{\omega\tilde{G}+\sum_i\ket{\phi_i}\bra{\phi_i}\otimes H_{EP}^{(i,i)}}P,\rho}.
\end{align}
We see from \cref{eq:HNELS_unitary} that after quantum error correction, for a diagonal interaction, the overall evolution of the system and environment is unitary. Applying Theorem~\ref{theorem:unitary_scaling} therefore yields the desired QFI scaling.

Note that in this proof, a first code was used to obtain unitary dynamics on the full system. In order to prove the scaling of Theorem~\ref{theorem:unitary_scaling}, the probe and auxiliary system were encoded in a QECC that imposed commutation between the signal and coupling Hamiltonian. It is not necessary here to concatenate the two codes, as the first code used in the proof already has all the properties necessary to prove the HL scaling.
\end{proof}

\subsection{The HNLS Conditions for when the HL is Unattainable}\label{subsection:HNLS}

Using the ``Hamiltonian not in Lindblad span" (HNLS) conditions of \cite{zhou_achieving_2018} on the full system, we can obtain conditions for when the HL is unattainable.

\begin{theorem}[Unattainability of the HL by violating the HNLS conditions~\cite{zhou_achieving_2018}]\label{theorem:HNLS_full_system}
    Let $\mathcal{S}_{EP}$ be the linear span of $\ident_{EP}$, $L_k$, $L_k^{\dagger}$, $L_k^{\dagger}L_j$ $\forall\ k,\ j$. If $\lrp{\ident_E\otimes G+H_E\otimes\ident_P} \in\mathcal{S}_{EP}$, then the SQL is the best achievable scaling. In other words, $\mathcal{F}= O(T)$.
\end{theorem}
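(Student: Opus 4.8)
The plan is to lift the estimation problem to the full system $\Hammy_E\otimes\Hammy_P\otimes\Hammy_A$, where the dynamics are genuinely Markovian, and then import the converse (SQL upper bound) direction of the HNLS theorem of \cite{zhou_achieving_2018} verbatim. The statement to establish is the upper bound $\mathcal{F}=O(T)$; the matching $\Omega(T)$ achievability that makes the SQL ``the best achievable scaling'' is standard, obtained with independent probes.

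First I would observe that \cref{eq:master_eq} is an ordinary Lindblad equation on the full system, whose $\omega$-dependent generator is exactly $\ident_E\otimes G+H_E\otimes\ident_P$ (tensored with $\ident_A$), whose noise Hamiltonian is $H_{EP}$, and whose jump operators are the $L_k$. This is precisely the form to which the HNLS framework applies, the relevant Lindblad span being $\mathcal{S}_{EP}=\mathrm{span}\{\ident_{EP},L_k,L_k^\dagger,L_k^\dagger L_j\}$. I would emphasize that $H_{EP}$, as a purely unitary contribution, does not enter $\mathcal{S}_{EP}$, consistent with the HNLS convention that a known Hamiltonian term (which can always be undone by fast control) never affects the asymptotic bound. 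Thus the hypothesis $\ident_E\otimes G+H_E\otimes\ident_P\in\mathcal{S}_{EP}$ is exactly the statement that HNLS is \emph{violated} for the full-system problem.

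The key conceptual step is a monotonicity argument. Any admissible HMM protocol consists of fast, arbitrary control and measurement on $PA$ with $E$ left untouched; such a protocol is a special case of a protocol that is additionally permitted to act on $E$. Since enlarging the set of accessible operations can only increase the attainable quantum Fisher information, $\mathcal{F}_{\mathrm{HMM}}$ is upper bounded by the QFI attainable with full control of $E\otimes P\otimes A$ under the same master equation. I would make this precise by noting that every HMM control acts as the identity on $E$ and every HMM measurement is a coarse-graining of a full-system measurement, so an HMM protocol is literally a valid full-access protocol; the optimum over full-access protocols therefore dominates the HMM optimum. With the upper bound reduced to a full-system Markovian estimation problem, I would then apply the converse direction of \cite{zhou_achieving_2018}: because the signal generator lies in the Lindblad span, the most general adaptive strategy (with arbitrary ancillas, in which the noiseless $\Hammy_A$ is absorbed, and with fast control) achieves at best $\mathcal{F}=O(T)$. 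Chaining this with the monotonicity inequality yields $\mathcal{F}_{\mathrm{HMM}}=O(T)$.

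I expect the main obstacle to be the careful bookkeeping of the monotonicity step rather than any new estimate. Specifically, I would need to verify that the HMM's protocol structure is faithfully reproduced inside the full-access problem: that the environment being reset only between the $M$ experiments (and never within a single experiment of duration $T=Nt$, during which $E$ mediates the temporal correlations) corresponds to running the full Markovian channel for time $T$ in the converse bound, with the factor $M$ entering only through the classical Cram\'er--Rao prefactor and leaving the $T$-scaling untouched; and that the ancilla $\Hammy_A$ is correctly folded into the ancilla allowance of the HNLS converse rather than enlarging the span $\mathcal{S}_{EP}$. Once the full-access Markovian problem is set up with the correct signal generator $\ident_E\otimes G+H_E\otimes\ident_P$ and span $\mathcal{S}_{EP}$, the quantitative $O(T)$ bound is entirely inherited from \cite{zhou_achieving_2018}.
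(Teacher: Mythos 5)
Your proposal is correct and follows essentially the same route as the paper: view \cref{eq:master_eq} as a Markovian Lindblad equation on the full system $\Hammy_E\otimes\Hammy_P\otimes\Hammy_A$ with signal generator $\ident_E\otimes G+H_E\otimes\ident_P$, invoke the converse direction of the HNLS theorem of \cite{zhou_achieving_2018} for full-system controls, and then note that HMM protocols (controls on $PA$ only) are a strict subset of full-system protocols, so the $O(T)$ bound is inherited by monotonicity. Your additional bookkeeping about the role of $H_{EP}$, the ancilla, and the $M$-versus-$T$ scaling is more explicit than the paper's two-sentence argument but does not change the substance.
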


The proof is as follows.
The HNLS conditions of \cite{zhou_achieving_2018} tell us that if $\ident_E\otimes G+H_E\otimes \ident_P \in\mathcal{S}_{EP}$, then the SQL is the best achievable asymptotic scaling with controls that act on the full system, i.e., probe, auxiliary system, and environment. In our analysis, we restricted ourselves to control operations acting only on the probe and auxiliary system, which is a strict subset of the set of full system controls. Therefore, the SQL is necessarily the best achievable asymptotic scaling for the local controls.

Further, we can relate these conditions to the previously derived HNES conditions (Theorem~\ref{theorem:HNES}). Consider

\begin{align*}
    \ident_E\otimes G+H_E\otimes\ident_P\in\mathcal{S}_{EP}\ \nonumber\\ \Rightarrow\  \ident_E\otimes G+H_E\otimes\ident_P =\sum_{A\in\mathcal{S}_{EP}}c_A A,
\end{align*}
where $c_A\in\C$. We simply trace out the environment to obtain the following:
\begin{equation*}
    G=\frac{1}{d_E}\sum_i\sum_{A\in \mathcal{S}_{EP}}c_A\bra{\phi_i}A\ket{\phi_i}-\tr\lrp{H_E}\ident_P.
\end{equation*}
From this equation, we see that $\lrp{\ident_E\otimes G+H_E\otimes\ident_P}\in\mathcal{S}_{EP}$ implies $G\in\mathcal{S}$.
We can obtain a simple yet useful result by taking the contrapositive of this statement. If $G\notin\mathcal{S}$ then $\lrp{\ident_E\otimes G+H_E\otimes\ident_P}\notin\mathcal{S}_{EP}$.
These containment conditions and their relations to the scaling of the QFI are presented in \cref{fig:venn_diagram}.

\subsection{Dependence of the Quantum Fisher Information on the Initial Environment State}\label{subsection:e_dependence}

We have so far shown that when the HNES conditions are satisfied ($G\notin{\mathcal{S}}$), the HL is achievable using QEC. We have further shown that $\lrp{\ident_E\otimes G+H_E\otimes\ident_P}\in\mathcal{S}_{EP}$ implies the HL is not attainable, using the HNLS conditions. This leaves one scenario to study, the case when $G\in{\mathcal{S}}$ but $\lrp{\ident_E\otimes G+H_E\otimes\ident_P}\notin \mathcal{S}_{EP}$.

When $G\in{\mathcal{S}}$ but $\lrp{\ident_E\otimes G+H_E\otimes\ident_P}\notin \mathcal{S}_{EP}$, the master equation (\cref{eq:master_eq}) does not fully determine the scaling of the QFI. We must also take into account the initial state of the environment and the trajectory of the evolution to find the scaling. We show this with the following example, for which we can be limited by either HL or SQL, depending on the initial environment state.

We wish to sense the parameter $\omega$ with signal Hamiltonian $\Tilde{G}=\ident_E\otimes Z$. We consider environment and probe, which are both qubits, so that $\Hammy_E\otimes\Hammy_P=\C^2\otimes\C^2$. Further, we assume no unitary coupling between the qubits, i.e. $H_{EP}=0$, and no signal on the environment $H_E=0$. We consider a model with the following two jump operators 
\begin{equation}
    L_1=\ket{0}\bra{0}\otimes\ident,\quad L_2=\ket{1}\bra{1}\otimes Z.
\end{equation}

Here, $Z$ refers to the Pauli $\sigma_Z$ matrix. We first consider the situation where the environment is initially in the $\ket{0}$ state, such that the initial density matrix is given by $\rho_{EP}(0)=\ket{0}\bra{0}\otimes \rho_{P,0}$ where $\rho_{P,0}$ is the initial probe state. 
One can see that throughout the evolution, the state will always remain in a product state. The environment remains unchanged and stays in the $\ket{0}\bra{0}$ state. We can therefore trace out the environment and look only at the dynamics on the probe. The reduced density matrix on the probe evolves under the equation
\begin{equation}
        \frac{d\rho_{PA}}{dt}=-\ii\omega\lrb{{G},\rho_{PA}}.
\end{equation}
For this choice of initial state, the probe evolution is entirely generated by the signal Hamiltonian, and the evolution is Markovian and unitary. This guarantees the achievability of the HL.

Similarly, consider the case where the initial environment state is $\ket{1}$. Once again, the state always remains in a product state throughout the evolution, and in particular, the environment always remains in the $\ket{1}\bra{1}$ state and does not change. We can trace out the environment and look at the evolution of the probe. In this case, the probe evolves under the following master equation
\begin{equation}
        \frac{d\rho_{PA}}{dt}=-\ii\omega\lrb{{G},\rho_{PA}}+Z\rho_{PA} Z-\rho_{PA}.
\end{equation}
The reduced dynamics are Markovian and therefore the results of \cite{zhou_achieving_2018} can be applied. For these dynamics, the HL cannot be achieved using any quantum controls due to the violation of the HNLS conditions. In this case, the SQL is the best achievable scaling since the signal Hamiltonian and the jump operator are both given by the same matrix.

\section{Achieving the SQL}\label{section:achieving_SQL}

We now consider the attainability of the SQL in the case where the HNES conditions are violated. If the HNES conditions are violated but the HNLS conditions on the full system are satisfied, then, as seen in Section~\ref{subsection:e_dependence}, the SQL can bound the QFI from above for certain initial states of the environment. Further, if the HNLS conditions aren't satisfied, then we are also guaranteed an SQL upper bound (Section~\ref{subsection:HNLS}). 
The question remains of whether linear scaling in $N$ is always achievable, in other words, if these bounds can always be saturated.

For the Markovian models considered in \cite{zhou_achieving_2018}, the SQL can be achieved using a prepare-and-measure strategy. It consists of preparing the probe in a state $\ket{\psi}$, letting it evolve for a constant time $t$ without applying any control operations, and then either measuring the state or shelving the state to replace it with a fresh probe. For this strategy in the Markovian regime, the system is completely reset after each measurement, and the SQL in $N$ is achievable due to the additivity of the QFI for product states. For the non-Markovian model, measurements on the probe do not reset the environment. This can lead to correlations between different probes, mediated by the environment, which in general can lead to higher or lower Fisher information for the distribution of measurement outcomes~\cite{radaelli_fisher_2023}.

In this section, we investigate whether linear scaling in $N$ is always achievable for the QFI. 
We first investigate this claim with an exactly solvable model, and then treat the general case with numerical methods. We numerically compute the FI for a finite number of probes and then linearly extrapolate  SQL scaling. 
The special case where the probe and environment undergo unitary evolution has been considered in \cite{guta_equivalence_2015}, where the authors show that a prepare-and-measure strategy achieves the SQL in $N$. 
We conjecture that the SQL remains achievable with this strategy when jump operators are added to the equation, and when the signal acts only on the probe. We leave a proof of this claim for the general non-Markovian setting as an open problem.

\subsection{Achieving the SQL for a Diagonal Interaction}\label{subsection:diag_sql}

We will consider a diagonal interaction as defined by Definition~\ref{def:diagonal_interaction}, as we did when considering the HNELS conditions of Section~\ref{subsection:HNELS}.
We simplify the dynamics further by applying QEC on the probe and auxiliary system to impose commutation of interactions on the probe degrees of freedom as well, allowing us to directly compute a QFI lower bound in a basis where all operators are simultaneously diagonal. To obtain the QFI lower bound, we will measure the probe at each time step and calculate the classical Fisher information (FI) of the probability distribution for the measurement outcomes. As mentioned above, we will employ the prepare-and-measure strategy. The FI of the joint probability distribution is a lower bound for the QFI of the quantum state of all output probes \cite{paris_quantum_2009}. 
Consequently, observing SQL scaling in the FI ensures a QFI scaling no less than SQL. In our case, the joint probability distribution for the measurement outcomes is a mixture of binomial distributions. The FI of such a distribution scales linearly in $N$ asymptotically. We shall use these arguments to prove the following theorem.

\begin{theorem}\label{theorem:SQL_diagonal_achievable}
    Consider the case of a diagonal interaction (Definition~\ref{def:diagonal_interaction}). With an interaction of this form, SQL scaling of the QFI is always achievable in $N$ using a combination of a prepare-and-measure strategy and QEC, for suitable choices of $t$.
\end{theorem}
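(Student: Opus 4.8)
The plan is to use the diagonal structure to turn the environment into a fixed classical label, to use QEC to make the conditional probe dynamics exactly solvable, and then to reduce the entire question to the classical Fisher information of a mixture of binomial distributions.

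First I would apply Lemma~\ref{lemma:diagonal} and Lemma~\ref{lemma:ignore_HE} to assume the input is diagonal in $\lrcb{\ket{\phi_i}}_{i=1}^{d_E}$, writing $\rho=\sum_i\alpha_i\ket{\phi_i}\bra{\phi_i}\otimes\rho^{(i,i)}$, and to discard $\tilde H_E$. Since $H_{EP}$ and the $L_k$ are block diagonal in this basis for a diagonal interaction, the populations $\alpha_i$ are conserved, so the environment acts as a \emph{fixed} classical label $i$ drawn with probability $\alpha_i$. Conditioned on $i$, the probe and auxiliary evolve under the Markovian generator with Hamiltonian $\omega G+H_{EP}^{(i,i)}$ and jumps $L_k^{(i,i)}$; because every control and measurement acts only on the probe and auxiliary and cannot create environmental coherences, the $N$ rounds of the prepare-and-measure protocol are, conditioned on $i$, independent and identically distributed.

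Next I would fix a two-dimensional code spanned by $\ket{C_0},\ket{C_1}$ taken to be (purifications of) two eigenstates of $G$ with distinct eigenvalues $g_0\ne g_1$, placed on orthogonal supports of the auxiliary system; this is possible because $G$ is non-trivial. The orthogonal auxiliary supports force, exactly as in \cref{eq:Lindblad_proof_cross}, every projected probe operator $PBP$ to be diagonal in $\lrcb{\ket{C_0},\ket{C_1}}$, so that $PGP$, $PH_{EP}^{(i,i)}P$ and all $PL_k^{(i,i)}P$ commute. Running fast QEC keeps the state in the code space, and because the signal generator $PGP$ commutes with the ($\omega$-independent) noise, the conditional channel factorizes as $\Phi^{(i)}_t=\mathcal{N}_i\circ\mathcal{U}_\omega$, where $\mathcal{U}_\omega$ merely imprints the relative phase $e^{-\ii\omega(g_0-g_1)t}$ on the code-space coherence. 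Preparing $\tfrac{1}{\sqrt2}\lrp{\ket{C_0}+\ket{C_1}}$ and measuring in a suitably rotated $X$ basis then yields a binary outcome with probability $p_i(\omega,t)=a_i+\abs{c_i(t)}\cos\lrp{\omega(g_0-g_1)t+\theta_i}$, where $a_i$ and $\abs{c_i(t)}$ are $\omega$-independent. Choosing the constant $t$ (and the measurement phase) away from the measure-zero set of bad values, and using that the coherence $\abs{c_i(t)}$ is nonzero at short times, guarantees $\partial_\omega p_i\ne 0$ for at least one $i$ with $\alpha_i>0$.

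Finally I would assemble the classical statistics. The count $S$ of a fixed outcome over the $N$ rounds obeys the mixture-of-binomials law $P(S=s)=\sum_i\alpha_i\binom{N}{s}p_i^s(1-p_i)^{N-s}$, whose classical Fisher information lower-bounds the QFI of the output probes by \cite{paris_quantum_2009}. I expect the main obstacle to be showing that this Fisher information grows like $\Omega(N)$ rather than saturating, since for a mixture one only gets the easy upper bound $\sum_i\alpha_i\mathcal{F}_i$. I would argue by concentration: for large $N$ each binomial localizes $S/N$ to within $O(N^{-1/2})$ of its mean $p_i$, so components with distinct $p_i$ become statistically distinguishable and $S$ asymptotically reveals $i$; the Fisher information then converges to the weighted average $\sum_i\alpha_i\, N(\partial_\omega p_i)^2/\lrp{p_i(1-p_i)}$ of the per-component binomial Fisher informations, while components sharing a common $p_i$ simply merge without changing the scaling. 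Since at least one term is strictly positive and $t$ is a fixed constant with $T=Nt$, this yields $\mathcal{F}=\Omega(N)=\Omega(T)$, establishing SQL achievability.
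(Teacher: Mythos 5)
Your reduction is the same as the paper's: diagonalize the environment via Lemma~\ref{lemma:diagonal} and drop $\tilde H_E$ via Lemma~\ref{lemma:ignore_HE}, treat the environment as a conserved classical label $i$ drawn with probability $\alpha_i$, use a two-dimensional code with orthogonal auxiliary supports so that all projected operators become diagonal in $\lrcb{\ket{C_0},\ket{C_1}}$ and the conditional dynamics reduce to a logical $\bar Z$ rotation plus logical dephasing, and then observe that the prepare-and-measure outcomes are, conditioned on $i$, i.i.d.\ Bernoulli, giving a mixture of binomials whose classical FI lower-bounds the QFI. (The paper builds $\ket{C_0},\ket{C_1}$ from purifications of the positive and negative parts of $G-\tfrac{\trace G}{d_P}\ident_P$ rather than from two eigenstates of $G$, and it makes explicit the recovery map of \cref{eq:recovery-dephasing} that converts the leaked $P_\perp$ component into logical dephasing; these are cosmetic differences.)

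The one place where your proposal diverges in substance is the step you yourself flag as the main obstacle: proving that the FI of the binomial mixture is $\Omega(N)$. Your concentration heuristic --- each component localizes $S/N$ near $p_i$, so the mixture FI tends to $\sum_i\alpha_i N(\del_\omega p_i)^2/\lrp{p_i(1-p_i)}$ --- is the right intuition, but as written it is an assertion, not a proof; the ratio $(\del_\omega P)^2/P$ in the overlap region between components is not controlled by saying the components "become distinguishable." The paper spends Appendix~\ref{section:binomial_linear_FI} making exactly this rigorous, and does so by a weaker but cleaner route: it isolates only the component with the largest success probability $p_*$, restricts the FI sum to counts $n_1/N$ above a threshold strictly between $p_2$ and $p_*$ (Lemma~\ref{lemma:loglog_pstar_lemma}), shows via Chernoff bounds that all other components are exponentially suppressed there (Lemmas~\ref{lemma:first_bound}--\ref{lemma:third_bound}), and recovers a constant fraction of the single-binomial FI $N(\del_\omega p_*)^2/\lrp{p_*(1-p_*)}$. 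Two consequences for your version: (i) you would need $\del_\omega p_{i}\neq 0$ for the component that dominates the relevant tail, not merely "for at least one $i$" --- the paper sidesteps this by choosing $t$ small enough that $0<p_i<1$ and $\del_\omega p_i\neq 0$ for \emph{all} $i$, using the explicit forms \cref{eq:pi_analytic} and \cref{eq:diff_pi_analytic}; and (ii) your remark that components sharing a common $p_i$ "simply merge without changing the scaling" hides a cancellation issue (equal $p_i$ but opposite-signed $\del_\omega p_i$ can suppress the merged derivative), which is another reason the single-dominant-component bound is the safer route. So the architecture of your proof is correct and matches the paper, but the binomial-mixture FI bound needs to be carried out quantitatively rather than asserted.
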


\begin{proof}

The proof is as follows. Here, we define the QEC code in the same manner as in the proof of Theorem~\ref{theorem:unitary_scaling}, with $\mathcal{S}$, $\ket{C_0}$, and $\ket{C_1}$ defined in the same way. Contrary to the unitary case, in this section, we will need to perform the recovery operation during the QEC protocol. Although the code here cannot correct any non-trivial errors, it is sufficient to transform the dynamics on the probe system to commuting operations, with the recovery operation being 
\begin{multline}
\label{eq:recovery-dephasing}
    \mR(\cdot) = \ket{C_0}\bra{C_0} \trace( \ident_P \otimes (P_0)_A (\cdot)) \\ + \ket{C_1}\bra{C_1} \trace( \ident_P \otimes (P_1)_A (\cdot)). 
\end{multline}
For an initial state in the code space, the dynamics are governed by the following equation
\begin{align*}\label{eq:P_Lindblad_sql}
    \frac{d\rho}{dt} &= \lim_{\Delta t\rightarrow 0}\frac{P \Phi_{\Delta t}(\rho) P + \mR(P_\perp \Phi_{\Delta t}(\rho) P_\perp)}{\Delta t} \\
    &= -\ii\lrb{PH_{EP}P,\rho}-\ii\lrb{\omega P\tilde{G}P,\rho} + \sum_{i=1}^r \Big(P L_i \rho L_i P 
    \\ &\quad + \mR(P_\perp L_i \rho L_i^\dagger P_\perp ) - \frac{1}{2}\{P L_i^\dagger L_i P,\rho\}\Big),
\end{align*}
where $P$ is the projection operator onto the code space. Applying Lemma~\ref{lemma:ignore_HE} allows us to ignore the signal on the environment.
Assuming an initial state in the form of \cref{eq:rho-diag} and using \cref{eq:recovery-dephasing}, we have 
\begin{multline*}
    \frac{d\rho^{(i,i)}}{dt} =  -\ii\lrb{\omega P(G\otimes \ident_A)P,\rho^{(i,i)}} \\
    -\ii\lrb{PH_{EP}^{(i,i)}P + \frac{\Im(a_i)}{2} \bar{Z} ,\rho^{(i,i)}}+ \frac{\Re(a_i)}{2}\left( \bar{Z}\rho^{(i,i)}\bar{Z} - \rho^{(i,i)}\right),
\end{multline*}
where the logical operator $\bar{Z} = \ket{C_0}\bra{C_0} - \ket{C_1}\bra{C_1}$ and 
\begin{multline}
    a_i = \sum_k \frac{1}{2}\big(\bra{C_0}\lrp{L_k^\dagger L_k}^{(i,i)}\ket{C_0} + \bra{C_1}\lrp{L_k^\dagger L_k}^{(i,i)}\ket{C_1}) \\- \bra{C_0}L_k^{(i,i)}\ket{C_0}\bra{C_0}(L_k^\dagger)^{(i,i)}\ket{C_0}\big).
\end{multline}
It is clear that $\rho^{(i,i)}$ are evolving under Pauli-Z rotations and Markovian dephasing noise, and the strengths of the Pauli-Z rotation and the dephasing noise are different for different $i$.

We consider the prepare-and-measure strategy. At each time step, we prepare the state $\ket{\psi}=\frac{1}{\sqrt{2}}\lrp{\ket{C_0}+\ket{C_1}}$. For the measurement basis, we choose 
\begin{equation*}
    \ket{\pm_{\ii}}=\frac{1}{\sqrt{2}}\lrp{\ket{C_0} \pm \ii\ket{
    C_1}},
\end{equation*}
which we associate to the outcomes $1$ and $-1$ respectively. Let $X_n$ denote the random variable associated with the measurement at the $n$-th time step. We will denote by $b_n$ the outcome of the measurement for the $n$-th step. Consider a given measurement sequence with outcomes $b_1,\hdots,b_N$, the probability of this event occurring is given by the joint distribution $P(X_N=b_N,\hdots,X_1=b_1)$.

We will first consider the time evolution between two arbitrarily chosen steps of the prepare-and-measure protocol. 
As in the proof of Theorem~\ref{theorem:HNELS}, we will factor the evolution into two steps. 
The first step will be evolution under the master equation 
\begin{multline*}
    \frac{d\rho^{(i,i)}}{dt} =  -\ii\lrb{PH_{EP}^{(i,i)}P + \frac{\Im(a_i)}{2} \bar{Z} ,\rho^{(i,i)}}\\+ \frac{\Re(a_i)}{2}\left( \bar{Z}\rho^{(i,i)}\bar{Z} - \rho^{(i,i)}\right),
\end{multline*}
which corresponds to the evolution due only to the noise.
We can integrate the equation to obtain the CPTP map $\Lambda_t$ corresponding to the time evolution for a duration $t$ :
\begin{align}\label{eq:kraus_decomp}
    \Lambda_t\lrp{\ket{\psi}\bra{\psi}}=A_0\ket{\psi}\bra{\psi}A_0^{\dagger}+A_1\ket{\psi}\bra{\psi}A_1^{\dagger},
\end{align}
where
\begin{align*}
    A_0=&\sqrt{\frac{1+\exp\lrp{-\Re\lrp{\Gamma_i}t}}{2}} \exp\lrp{-\frac{\ii}{2} \Im\lrp{\Gamma_i}t \bar{Z}}, 
\end{align*}
and where
\begin{align*}
    A_1=&\sqrt{\frac{1-\exp\lrp{-\Re\lrp{\Gamma_i}t}}{2}}\bar{Z}\exp\lrp{-\frac{\ii}{2} \Im\lrp{\Gamma_i}t \bar{Z}}.
\end{align*}
Here, $\Gamma_i=\ii\tr\lrp{\bar{Z}PH_{EP}^{(i,i)}P} + a_i$. 
We have taken our initial probe state to be $\ket{\psi}$. Further, we know that for $t=0$, $\Lambda_t$ leaves the state unchanged and acts as the identity channel, as $A_0(t=0)=\ident_{PA}$ and $A_1(t=0)=0$. After evolving under this channel,
the next step is the unitary evolution due to the signal. 
\begin{align*}
    e^{-i\omega t P\lrp{G\otimes\ident_A}P}\Lambda_t(\ket{\psi}\bra{\psi})e^{i\omega t P\lrp{G\otimes\ident_A}P} \\
    =\sum_{k=0}^1 e^{-i\omega t P\lrp{G\otimes\ident_A}P}A_k \ket{\psi}\bra{\psi} A_k^{\dagger}e^{i\omega t P\lrp{G\otimes\ident_A}P}.
\end{align*}
The final step is to measure in the chosen basis. The probability of obtaining an outcome of $\pm$1 is given by

\begin{align*}
    &P\lrp{X_n=b_n|\rho_E=\ket{\phi_i}\bra{\phi_i}}\\ 
    =&\bra{\pm_{\ii}}e^{-i\omega t P\lrp{G\otimes\ident_A}P}\Lambda_t(\ket{\psi}\bra{\psi})e^{i\omega t P\lrp{G\otimes\ident_A}P}\ket{\pm_{\ii}}.
\end{align*}
First, we see that the probability of obtaining $b_n=1$ is independent of $n$ due to the fact that for an initial environment state of $\ket{\phi_i}\bra{\phi_i}$, the environment remains invariant and does not mediate correlations between measurements.
Let $p_i:=P\lrp{X_n=1|\rho_E=\ket{\phi_i}\bra{\phi_i}}$ and $q_i=1-p_i$ be the probability of obtaining $b_n=-1$.
The probability of obtaining $+1$ is given by
\begin{align}
    &p_i=\nonumber \\
    &\frac{1}{2} +\frac{1}{2}\exp\lrp{-\Re(\Gamma_i)t}\sin\lrp{\lrp{\Im\lrp{\Gamma_i}+\omega\lrp{\lambda_0-\lambda_1} }t},\label{eq:pi_analytic}
\end{align}
where $\lambda_0,\lambda_1$ are the eigenvalues of $P(G\otimes\ident_A)P$ associated to $\ket{C_0},\ket{C_1}$. Further, $\lambda_0\neq\lambda_1$ as we assume $G$ acts non-trivially on the probe. The derivative of the probability with respect to the sensing parameter $\omega$ is given by
\begin{align}
    \del_{\omega}p_i=\frac{1}{2}\exp\lrp{-\Re\lrp{\Gamma_i}t}\lrp{\lambda_0-\lambda_1}t\nonumber\\
    \cos\lrp{\lrp{\Im\lrp{\Gamma_i}+\omega\lrp{\lambda_0-\lambda_1} }t}.\label{eq:diff_pi_analytic}
\end{align}
Since the measurement outcomes are independent, the joint probability can be expressed as the product of the marginals
\begin{align}\label{eq:joint_rhoi}
     &P\lrp{X_N=b_n,\hdots,X_1=b_1|\rho_E=\ket{\phi_i}\bra{\phi_i}}\nonumber \\
    = & \prod_{n=1}^N p_i^{(1+b_n)/2}(1-p_i)^{(1-b_n)/2}.
\end{align}

For an arbitrary diagonal environment state, such as in \cref{eq:rho-diag}, we sum the joint probabilities (\cref{eq:joint_rhoi}) for the different environment basis states 
\begin{align*}
    &P\lrp{X_N=b_n,\hdots,X_1=b_1}\\
    =&\sum_{i=1}^{d_E}\alpha_i\prod_{n=1}^N p_i^{(1+b_n)/2}(1-p_i)^{(1-b_n)/2}.
\end{align*}
As mentioned previously, the order of the outcomes doesn't matter, and we recognize a sum of binomial distributions, also called a binomial mixture. For a measurement sequence $b_1,\hdots,b_N$, let $N_1$ count the $+1$ outcomes in the sequence. The associated probability distribution for $N_1$ is 
\begin{align}\label{eq:binomial_mixture}
    P(N_1=n_1)=\sum_{i=1}^{d_E}\alpha_i\binom{N}{n_1}p_i^{n_1}(1-p_i)^{N-n_1}.
\end{align}
From this point on, we will assume without loss of generality that all the $p_i$ are distinct when we evaluate them at the true value of $\omega$, and where a specific value of $t$ has already been fixed. We may do this since similar values of $p$ can be combined into single terms of the sum. Suppose, for example, that $p_2=p_3$, then we can set $\alpha_2\mapsto \alpha_2+\alpha_3$, and remove the index 3 from the sum.

To show that this strategy achieves the SQL, one must show that the FI of this binomial mixture scales linearly in $N$ asymptotically. We show this in Appendix~\ref{section:binomial_linear_FI}.  

\end{proof}
%%%%

%%%%%%%%%%%%%%%%%%%%%%%%%%%%%
% 2
%%%%%%%%%%%%%%%%%%%%%%%%%%%%%

\subsection{Two Qubit Dephasing Model}\label{subsection:two_qubit_dephasing}

We will now consider a model that has a diagonal interaction, but for which we do not need to apply QEC. Further, for this model, we will not measure or compute FI. Rather, we will shelve the probe states and replace them with new ones. For certain control parameters, the QFI of this state will be computed exactly to show SQL scaling.

We now consider a model where $\Tilde{G}$, $H_{EP}$, $\tilde{H}_E$, and $L_k$ all commute pairwise. We model both the environment and probe by qubits $\Hammy_P\simeq \Hammy_E\simeq \C^2$. Let $G=Z$ and $H_{EP}=Z\otimes Z$. Let there be a single jump operator $L=\ident_E\otimes Z$. Without loss of generality, we set $\tilde{H}_E=0$, as the commutation of the model allows for the application of Lemma~\ref{lemma:ignore_HE}. This model corresponds to pure phase noise on both the system and environment, as the interaction is completely diagonal on both the probe and environment. 

We can see that since $\Tilde{G}=L$, the conditions of Theorem~\ref{theorem:HNLS_full_system} are satisfied, and so the SQL is the best attainable scaling. Further, the results from Theorem~\ref{theorem:SQL_diagonal_achievable} of the previous section guarantee that the SQL is achievable with a prepare-and-measure strategy. This example, however, is particular since we can show that the QFI itself scales linearly in $N$, rather than a FI lower bound. Further, QEC will not be used in this example to tailor the noise.

We will prepare $N$ probe states in the $\ket{+}=\frac{1}{\sqrt{2}}\lrp{\ket{0}+\ket{1}}$. Since all the operators commute, we can factor out the time evolution into three steps: evolution by the channel corresponding to $L$, evolution $H_{EP}$, and finally, evolution by $G$.

Let $\rho_E$ be the initial state of the environment. We begin with the evolution from the jump operators, 
\begin{equation*}
    \frac{d\rho}{dt}=\lrp{\ident_E\otimes Z}\rho\lrp{\ident_E\otimes Z}-\rho.
\end{equation*}
One can show that evolving under this equation for a duration $t$ is equivalent to applying the following dephasing channel on the probe:
\begin{equation*}
    \mathcal{D}_t(\rho_{PA})=\lrp{\frac{1+e^{-2t}}{2}}\rho_{PA}+\lrp{\frac{1-e^{-2t}}{2}}Z\rho_{PA}Z.
\end{equation*}
For $\rho_{PA}=\ket{+}\bra{+}$, this channel is equivalent to a depolarizing channel :
\begin{equation*}
    \mathcal{D}_t\lrp{\ket{+}\bra{+}}=e^{-2t}\ket{+}\bra{+}+(1-e^{-2t})\frac{\ident_P}{2}.
\end{equation*}
Next, we apply the time evolution due to $H_{EP}$ to the state. The unitary operator for this time evolution is 
\begin{align*}
    U_{EP}(t) &= e^{-\ii tZ\otimes Z}\nonumber \\
    &=\ket{0}\bra{0}\otimes e^{-\ii tZ}+\ket{1}\bra{1}\otimes e^{\ii tZ}.
\end{align*}
Applying this unitary yields
\begin{align*}
    &U_{EP}(t)\left[\rho_E\otimes\mathcal{D}_t(\ket{+}\bra{+})\right]{U_{EP}(t)}^{\dagger}\\ 
    &=\sum_{i,j=0}^1 \ket{i}\bra{i}\rho_E\ket{j}\bra{j}\otimes e^{-\ii(-1)^itZ}\mathcal{D}_t(\ket{\psi}\bra{\psi})e^{\ii(-1)^jtZ}.
\end{align*}
Finally, we consider the time evolution due to the signal. This corresponds to unitary evolution on the probe generated by $G$. The state of the environment and the first probe after time $t$ is given by
\begin{align*}
    &\sum_{i,j=0}^1 \ket{i}\bra{i}\rho_E\ket{j}\bra{j}\otimes \nonumber \\ 
    &e^{-\ii\omega t Z}e^{-\ii(-1)^itZ}\mathcal{D}_t(\ket{\psi}\bra{\psi})e^{\ii(-1)^jtZ}e^{\ii\omega t Z}.
\end{align*}
Since we are considering a prepare-and-measure strategy, after this time $t$, the probe is swapped out for a new probe and the process is repeated. We repeat the process for $N$ probes. One can show that the final output state $\rho_{EP}^f$ for the environment and the $N$ probes is given by
\begin{align*}
    \rho_{EP}^f(t)= & \sum_{i,j=0}^1 \ket{i}\bra{i}\rho_E\ket{j}\bra{j}\otimes\nonumber \\ 
    &\lrb{e^{-\ii\omega t Z}e^{-\ii(-1)^itZ}\mathcal{D}_t(\ket{\psi}\bra{\psi})e^{\ii(-1)^jtZ}e^{\ii\omega t Z}}^{\otimes N}.
\end{align*}
Since we have access only to the states of the probe, we trace out the environment to obtain the final probe state $\rho_{PA}^f$,
\begin{align}
    \rho_{PA}^f(t)=& \sum_{i=0}^1\bra{i}\rho_E\ket{i}\left[ e^{-2t} e^{-\ii\omega t Z}e^{-\ii(-1)^itZ} \right.\nonumber \\ 
    &\left.\cdot \ket{+}\bra{+} e^{\ii(-1)^itZ}e^{\ii\omega tZ} + (1-e^{-2t})\frac{\ident_P}{2}\right]^{\otimes N}.
\end{align}
If we fix the time $t=\frac{\pi}{2}$, then this density matrix is easily diagonalizable. Indeed, the state reduces to
\begin{equation*}
    \rho_{PA}^f(\frac\pi2) = \lrb{e^{-\pi}e^{-\ii\omega \frac{\pi}{2} Z}\ket{-}\bra{-}e^{\ii\omega \frac{\pi}{2} Z}+(1-e^{-\pi})\frac{\ident_P}{2}}^{\otimes N}.
\end{equation*}
Since the QFI of a product state is the sum of the QFIs of the factors of the product \cite{liu_quantum_2019}, then the Fisher information of the final state $\mathcal{F}\lrp{\rho_{PA}^f\lrp{\frac{\pi}{2}}}$ is given by 
\begin{align}
\mathcal{F}\lrp{\rho_{PA}^f\lrp{\frac{\pi}{2}}}=N\mathcal{F}\lrp{\rho_1},
\end{align}
where 
\begin{align}
    \rho_1=e^{-\pi}e^{-\ii\omega \frac{\pi}{2} Z}\ket{-}\bra{-}e^{\ii\omega \frac{\pi}{2} Z}+(1-e^{-\pi})\frac{\ident_P}{2}.
\end{align}
So long as $\mathcal{F}(\rho_1)$ is independent of $N$ and greater than zero, then the SQL is achieved.

Recall that the QFI of a mixed state can be calculated using \cref{eq:QFI_formula}. One can show that $\rho_1$ has eigenvalues $\lrp{e^{-2t}+\lrp{1-e^{-2t}}/{2}}$ and $\lrp{1-e^{-2t}}/{2}$ with eigenvectors $e^{-\ii\omega\frac{\pi}{2}Z}\ket{-}$ and $e^{-\ii\omega\frac{\pi}{2}Z}\ket{+}$ respectively. This yields a QFI of
\begin{align}
    \mathcal{F}(\rho_1)=e^{-2\pi}\pi^2
\end{align}
which is both independent of $N$ and greater than zero. Therefore, for the diagonal interaction, the SQL is attainable with a prepare-and-measure strategy without using QEC to tailor the noise, like in the Markovian case.

\subsection{Numerical Sampling of the Classical Fisher Information}\label{subsection:sampling_FI}

In the previous subsections, we considered models with restricted structure, particularly diagonal interactions. We showed SQL scaling in those specific cases. We now consider more generic models, and present numerical evidence that the SQL is still generally achievable with prepare-and-measure strategies, even without QEC, mirroring the Markovian case \cite{zhou_achieving_2018,zhou_asymptotic_2021}.

Consider the joint quantum state of $N$ probes. If these probes are qudits, then the Hilbert space scales as $d_P^N$, and so the computation of the QFI using exact diagonalisation and \cref{eq:QFI_formula} quickly becomes intractable as $N$ grows. To simplify the problem, we instead measure the probes at each time step and calculate the FI.

For $N$ rounds of prepare-and-measure, we can associate the measurements with random variables $\{X_1,\cdots,X_N\}$, forming a stochastic process. These random variables can be correlated, with the correlations mediated by the environment. It has been shown that if the process is stationary and its Markov order, which quantifies the memory length, is finite, then the classical Fisher information will scale linearly in $N$ \cite{radaelli_fisher_2023}. It is the minimum $l$ such that
\begin{equation*}
    P\lrp{X_n|X_{n-1},..,X_1}=P\lrp{X_n|X_{n-1},...,X_{n-l}}.
\end{equation*}
The HMM under consideration does not generally possess these two properties.

We compute the classical Fisher information using \cref{eq:classical_FI}.
Since the same probe state is prepared at every time step, the $i$-th measurement outcome depends solely on the residual state of the environment after the previous measurement. Consequently, it suffices to track only the $d_E \times d_E$ density matrix of the environment and its derivative throughout the time evolution.
This approach enables the computation of the probabilities and their derivatives for the measurement outcomes $\{x_1, \cdots, x_N\}$ using constant memory, thereby avoiding the exponential memory requirements associated with storing the full density matrix for $N$ probes.
We trade the memory overhead for exponential time complexity. If each random variable $X_i$ takes on $n_\text{out}$ possible different values, then there are $(n_\text{out})^N$ different measurement outcomes, which must all be calculated separately to compute the FI exactly.

To get around this exponential scaling, we use random sampling to approximate the FI. 
Our methods are inspired by previous works using Monte Carlo to calculate the FI \cite{spall_monte_2005,spall_improved_2008}.
We consider the following unbiased estimator for the FI:
\begin{align}\label{eq:estimator_FI}
    \hat{\mathcal{I}}(\omega)=\frac{1}{S}\sum_{s=1}^S\lrp{\frac{\del_{\omega}P(x_1^{(s)},\cdots,x_N^{(s)})}{P(x_1^{(s)},\cdots,x_N^{(s)})}}^2,
\end{align}
where $\{x_1^{(s)},\cdots,x_N^{(s)}\}$ is a sample from $\{X_1,\cdots,X_N\}$ and the $s$ index differentiates the samples. The variable $S$ is the total number of samples. 
Each sample is obtained by computing the time evolution. At each measurement step, we randomly decide the outcome. To do so, we randomly sample from a Bernoulli distribution with parameter $p=P(X_n=x_n|X_{n-1}=x_{n-1},\hdots)$, where $P(X_n=x_n|X_{n-1}=x_{n-1},\hdots)$ is given by the trace of the non normalized environment density matrix after projecting the probe into the state associated with a $+1$ outcome.

This method provides an approximate value of the FI. 
For a fixed $N$, the variance of this estimator scales as $O(S^{-1})$, and so convergence is guaranteed for a large number of samples. The constant hidden by the big-$O$ may depend on $N$ and $t$. We a priori do not know this functional dependence. We can, however, numerically and heuristically estimate the variance to provide error bars for \cref{fig:heisenberg}. The details of the variance estimation are provided in Appendix~\ref{section:estimate_variance_estimator}.

\subsubsection{Heisenberg Interaction}

We first consider the prepare-and-measure strategy for a model where the operators of the master equation do not commute. We once again model the probes and environment as qubits $\Hammy_P\simeq \Hammy_E\simeq \C^2$. We consider a signal given by $G=Z$ and a single jump operator $L=\ident_E\otimes Z$. We consider a Heisenberg-type interaction for the coupling Hamiltonian 
\begin{align}
H_{EP}=\gamma\lrp{X\otimes X+Z\otimes Z+Y\otimes Y} 
\end{align}
where $\gamma$ is the coupling strength. We will consider two cases of the model. In the first case, there is no signal acting on the environment and $H_E=0$. In the second case, there is a signal acting on the environment and $H_E=Z$.

We fix here the measurement basis to be the eigenstates of the Pauli $X$ operator, and we prepare each probe in the state $\ket{+_{\ii}}=\frac{1}{\sqrt{2}}\lrp{\ket{0}+\ii\ket{1}}$. We fix the parameters to be $\gamma=1$, $\omega=0$, and $t=0.25$. The initial state of the environment is obtained by sampling pure states from the Haar measure. The same five random states are used for both the $H_E=0$ and $H_E=Z$ models. 
The Fisher information is then calculated using 25000 samples.

\cref{fig:heisenberg} shows the classical Fisher information as a function
of the number of prepare-and-measure rounds for the model where $H_E=0$.
The FI curves are calculated for five pure Haar random initial environment states $\{\rho_0,\cdots,\rho_4\}$. For $N\leq 15$, the FI is computed both exactly and by random sampling. 
At larger $N$, the values are exclusively from random sampling.

\cref{fig:heisenberg} a) shows the FI values up to $N=60$ as well
as linear regression curves for the data. For each
initial environment state $\rho_i$, a linear regression is performed on the
values of the FI to show the linear (SQL) scaling. The results of the linear
regressions are presented in Table \ref{tab:linregress}.
The linear regression gives us the slope of the linear relation, along with the
Pearson correlation coefficient $r_{\rm c}$, which measures the linear correlation
between $N$ and the FI. A higher $r_{\rm c}$ indicates a stronger correlation, with
$r_{\rm c} \in\lrb{-1,1}$. The $r_{\rm c}$ coefficients of Table \ref{tab:linregress} are all
very close to 1, which strongly indicates that the relationship between the
Fisher information and $N$ is linear. Further, the values for the slopes are all
above zero. Therefore, we see that the SQL is achieved up to depths of 60.
We conjecture that this trend holds
in the limit of $N\rightarrow\infty$.

\begin{table}[b]
  \caption{\label{tab:linregress}
    Linear regressions for the Fisher information as a function of the number of probes for the Heisenberg-type model with $H_E=0$. Associated curves are shown in \cref{fig:heisenberg}.}
  \begin{ruledtabular}
    \begin{tabular}{ccc}
      $i$ & Slope & $r_{\rm c}$ Coefficient \\
      \hline
      \hline
      0 & 0.05755739026318657 & 0.9987758048158253 \\
      1 & 0.057886505716965755 & 0.997690310855459 \\
      2 & 0.05267854285583423 & 0.9980036537841903 \\
      3 & 0.05734837817591377 & 0.9990203746173694 \\
      4 & 0.0643685351834392 & 0.9956504727114459 \\
    \end{tabular}
  \end{ruledtabular}
\end{table}

\cref{fig:heisenberg} b) shows the FI values up to $N=15$ from both
exact and random sampling methods. The curves in this figure serve only to guide the
eye. This figure shows the agreement between the sampling results and exact computations.

\cref{fig:heisenberg_HE} and \cref{tab:linregress_HE} show the same information but for the $H_E=Z$ model, where the signal also acts on the environment. Similarly to \cref{fig:heisenberg} b), \cref{fig:heisenberg_HE} b) shows excellent agreement between the exact and sampled Fisher information. In \cref{fig:heisenberg_HE} a), the presence of signal on the environment seems to lower the values of the FI. Further, the scaling of the FI does not seem to be linear. This is reflected by the values of $r_c$ obtained for the model in \cref{tab:linregress_HE}, which are much lower than those of \cref{tab:linregress}. In particular, the values of $r_c$ are not high enough to confidently conclude that the SQL is achieved.

\begin{table}[b]
  \caption{\label{tab:linregress_HE}
    Linear regressions for the Fisher information as a function of the number of probes for the Heisenberg-type model with $H_E=Z$. Associated curves are shown in \cref{fig:heisenberg_HE}.}
  \begin{ruledtabular}
    \begin{tabular}{ccc}
      $i$ & Slope & $r_{\rm c}$ Coefficient \\
      \hline
      \hline
      0 & 0.00893164136192775 & 0.937906503836237 \\
      1 & 0.008752382336452712 & 0.9550822417044361 \\
      2 & 0.006298895354174299 & 0.9968012417020177 \\
      3 & 0.00863620471038918 & 0.9498835201202465 \\
      4 & 0.007083260388947953 & 0.9857235070906991 \\
    \end{tabular}
  \end{ruledtabular}
\end{table}

\begin{figure}
    \centering
    \includegraphics[width=1\linewidth]{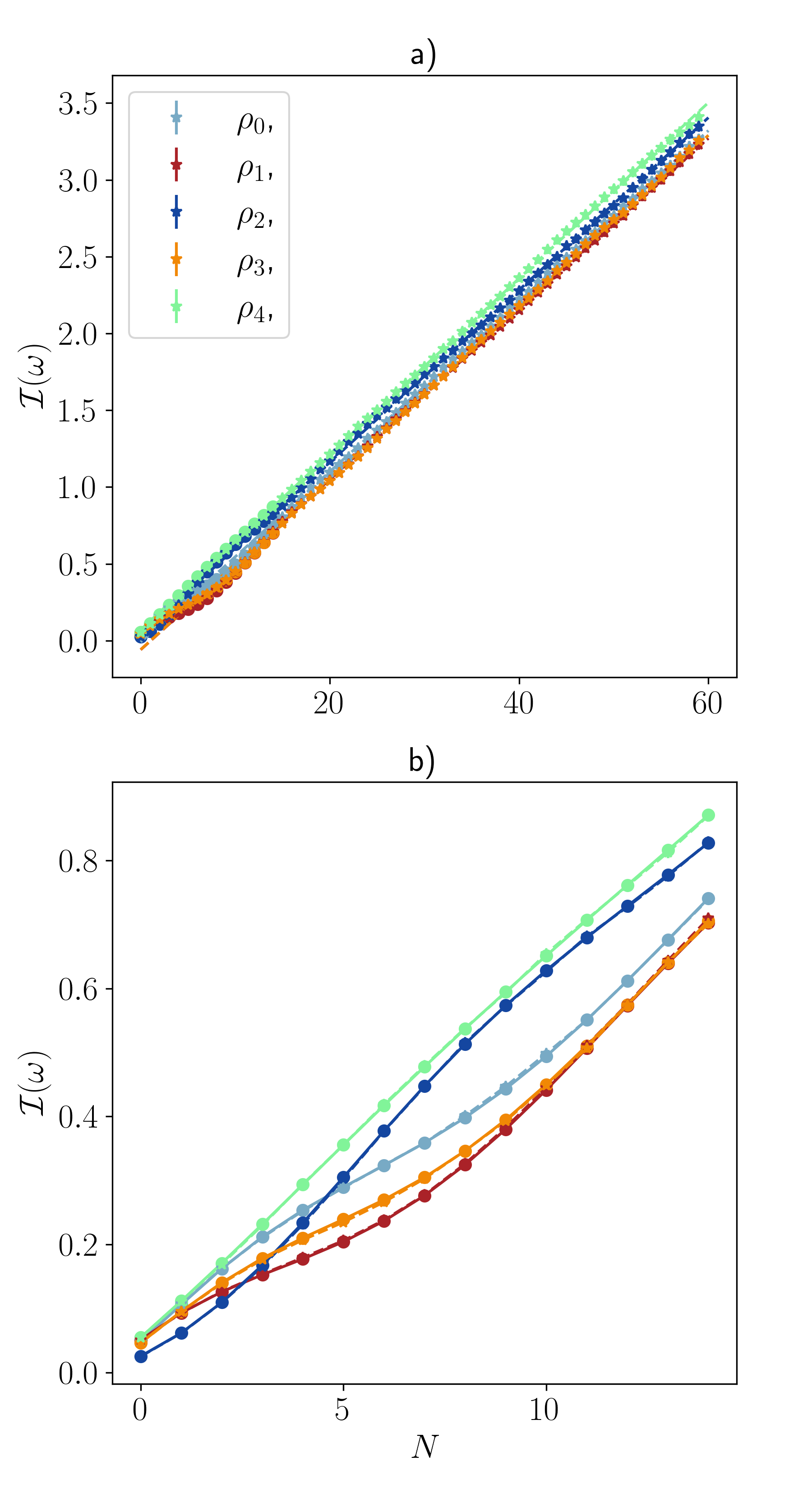}
    \caption{\label{fig:heisenberg}FI for the Heisenberg model with $H_E=0$. Solid dots were
    calculated exactly, and stars were calculated by random sampling. Different
    colours correspond to different initial environment states $\rho_i$ sampled
    from Haar random states. (a) FI for system sizes up to $N=60$. Dashed lines
    show linear regression on the data. (b) Comparison of exact computation and Monte Carlo sampling FI
    values up to $N=15$. Lines serve to guide the eye. Dashed and solid lines
    interpolate the random sampling and exact FI values, respectively. }
\end{figure}

\begin{figure}
    \centering
    \includegraphics[width=1\linewidth]{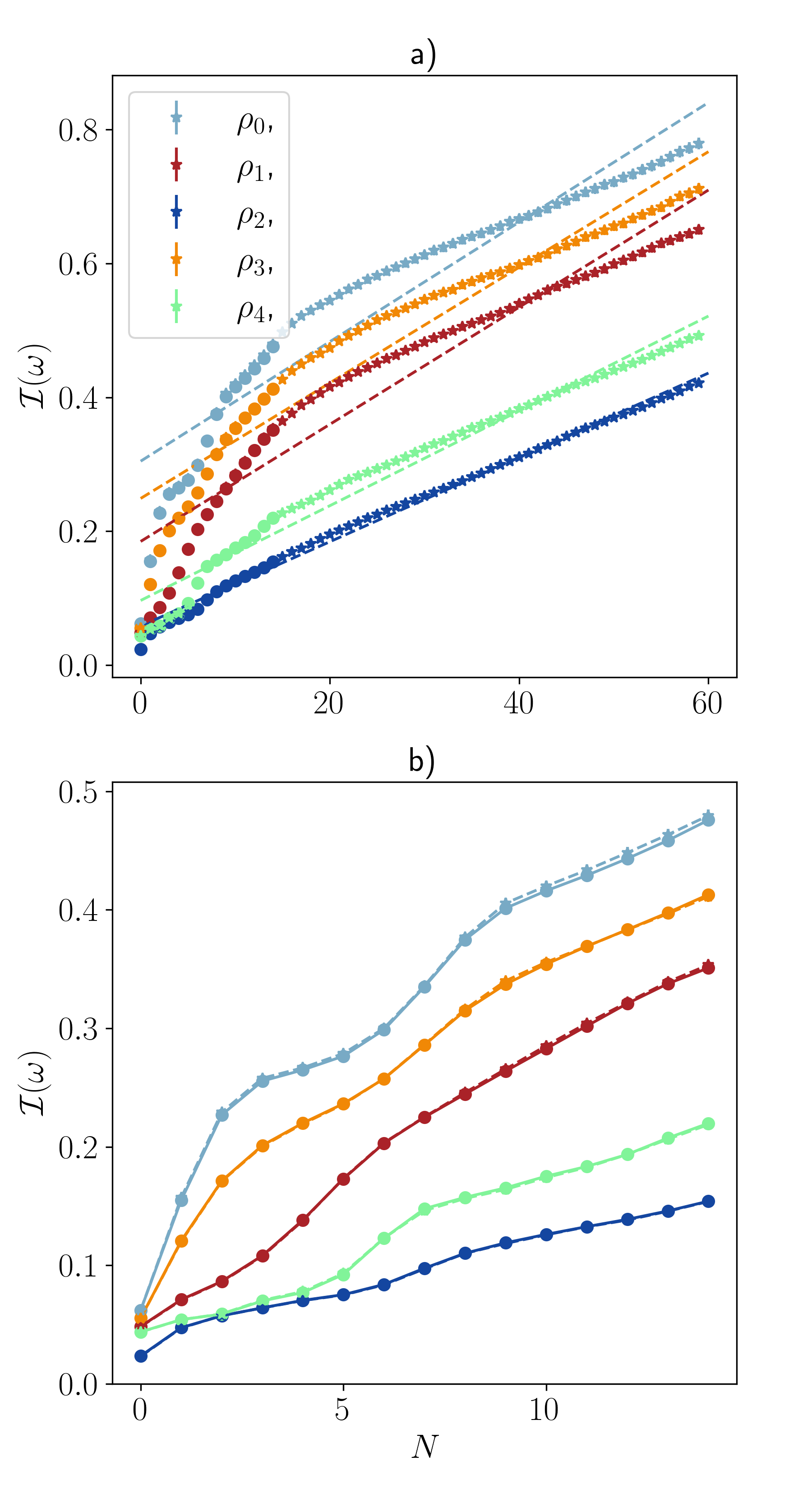}
    \caption{\label{fig:heisenberg_HE}FI for the Heisenberg model with $H_E=Z$. Solid dots were
    calculated exactly, and stars were calculated by random sampling. Different
    colours correspond to different initial environment states $\rho_i$ sampled
    from Haar random states. (a) FI for system sizes up to $N=60$. Dashed lines
    show linear regression on the data. (b) Comparison of exact computation and Monte Carlo sampling FI
    values up to $N=15$. Lines serve to guide the eye. Dashed and solid lines
    interpolate the random sampling and exact FI values, respectively. }
\end{figure}

\begin{comment}
    \begin{figure}
    \centering
    \includegraphics[width=1\linewidth]{Diagrams/VennDiagram.png}
    \caption{Best achievable asymptotic scaling of the QFI for different overlaps of the signal and the noise. The HL scaling in the rightmost region is guaranteed by the HNES conditions ($G\notin\mathcal{S}$). The SQL bound in the leftmost region is guaranteed by the HNLS conditions \cite{zhou_achieving_2018}. }
    \label{fig:venn_diagram}
\end{figure}
\end{comment}

\subsubsection{Random Master Equations}

To provide further evidence that the SQL is always achievable, we numerically study the prepare-and-measure strategy for a model where the noise operators of the master equation are randomly generated. We again model the probe and environment by qubits, $\Hammy_P\simeq\Hammy_E\simeq\C^2$. The protocol does not require any auxiliary degrees of freedom. We once again consider the signal to be $G=Z$ and suppose the parameter we are sensing is $\omega=0$. For each round of the protocol, we prepare the probe in the state $\ket{+_{\ii}}=\frac{1}{\sqrt
2}\lrp{\ket{0}+\ii\ket{1}}$, allow the system to evolve for a time $t=0.1$,
before finally measuring in the eigenbasis of the $X$ operator. As in the Heisenberg case, we consider two cases of the model, $H_E=0$ and $H_E=Z$.

We consider a model with three jump operators. 
The jump operators are obtained by sampling random matrices from the Ginibre ensemble \cite{ginibre_statistical_1965,can_random_2019}. For the coupling Hamiltonian $H_{EP}$, we sample a random matrix $A$ from the Ginibre ensemble. We then set $H_{EP}=\frac{1}{2}\lrp{A+A^{\dagger}}$. Finally, the initial environment state is taken to be a Haar random pure state.

For a given master equation, we compute the FI as a function of the number of rounds $N$ from $N=1$ to $N=30$. The values of the FI are calculated using the random sampling method of Section~\ref{subsection:sampling_FI} with 25000 samples. A linear fit is then applied to the resulting curve, yielding a slope and an $r_{\rm c}$ coefficient. These quantities were computed for 1050 different random master equations. The same 1050 master equations are used for the $H_E=0$ and $H_E=Z$ models.

The distributions of the $r_{\rm c}$ values and slopes for the $H_E=0$ models are presented in \cref{fig:boxes_no_HE} a) and \cref{fig:boxes_no_HE} b) respectively. 
Firstly, we see from \cref{fig:boxes_no_HE} a) that all of the obtained $r_{\rm c}$ values were
very close to 1, with the lowest being 0.9820 rounded to the fourth digit. There
is therefore a strong linear relationship between the FI and $N$ for all of the
random models studied. Therefore, it appears that linear scaling of the FI up
to small fluctuations is a general property with this choice of strategy.
Further, the minimum slope obtained was approximately 1.48466$\cdot 10^{-5}$. Although the
minimal slope is quite small, it is still nonzero and of a few orders of magnitude
above machine precision. All of the randomly generated models, therefore, have
approximately linear FI scaling with a nonzero slope. This is strong evidence
that the prepare-and-measure strategy can generically achieve the SQL. It is
important to note that since we chose the same values of $t$, the same
measurement basis, and the same initial state for all of the random noise models
considered, the values of the FI obtained may be far from optimal, even among
prepare-and-measure strategies. We can therefore expect an increase in
performance by fine-tuning these parameters for each random model considered.

\cref{fig:boxes_HE_Z} presents the same data for the $H_E=Z$ case of the random models. The minimum $r_c$ value among the models is 0.8058, rounded to the fourth digit, and the minimum slope is 3.2925e-06. We see that the majority of the $r_c$ values of~\cref{fig:boxes_HE_Z} a) are clustered between 0.99 and 1, indicating that for the majority of the random models, the relationship is strongly linear and therefore the scaling is probably SQL. However, there are three noticeable outliers. It is therefore unclear whether the SQL is always achieved in the presence of an environment signal, $H_E=Z$.

\begin{figure}
    \centering
    \includegraphics[width=1\linewidth]{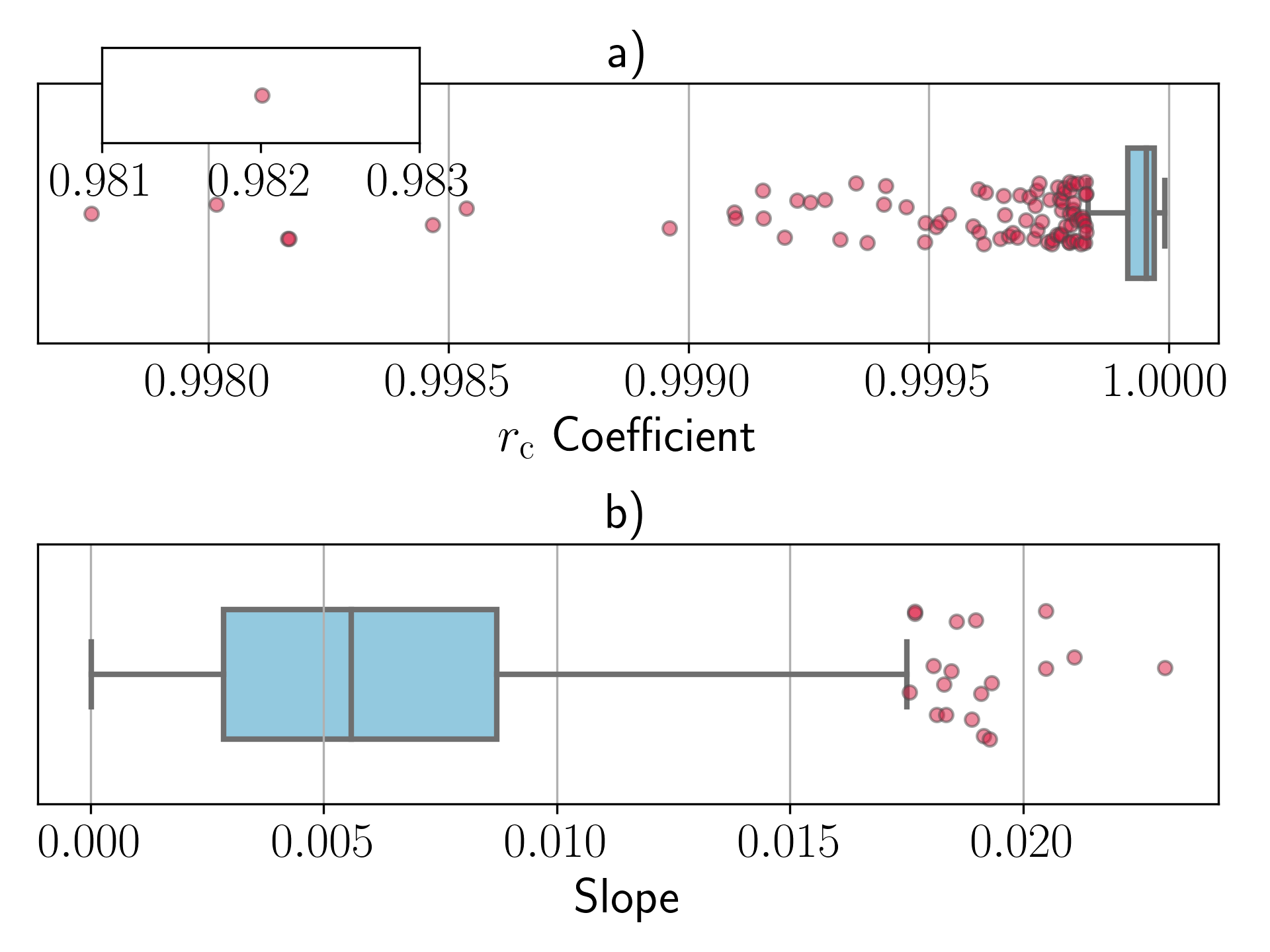}
    \caption{\label{fig:boxes_no_HE}Statistics for the linear fits to the FI of 1050
    random models with $H_E=0$. The FI of each data point is obtained with a prepare and
    measure strategy with noise from a random master equation. 
    The boxes extend from the first to the third quartile, split by the median. The whiskers extend from the quartiles to 1.5 times the inter-quartile range. 
    (a) Slopes of the linear fits. The sub-plot in the top-left corner shows an isolated outlier and has been separated for aesthetic purposes. (b) Pearson coefficients $r_{\rm c}$ of the linear fits.  
    }
\end{figure}

\begin{figure}
    \centering
    \includegraphics[width=1\linewidth]{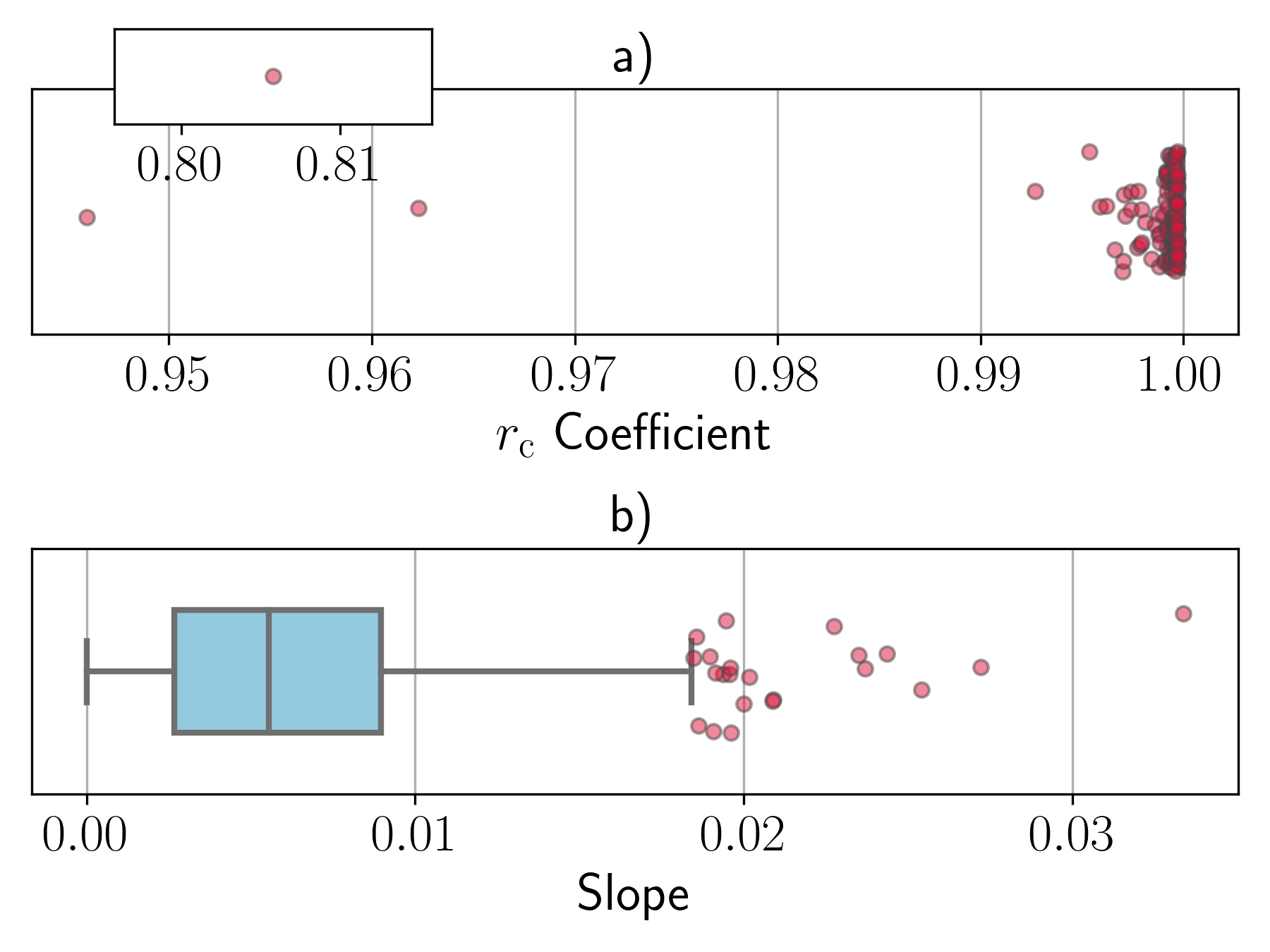}
    \caption{\label{fig:boxes_HE_Z}Statistics for the linear fits to the FI of 1050
    random models with $H_E=Z$. The FI of each data point is obtained with a prepare and
    measure strategy with noise from a random master equation. 
    (a) Slopes of the linear fits. The sub-plot in the top-left corner shows an isolated outlier and has been separated for aesthetic purposes. (b) Pearson coefficients $r_{\rm c}$ of the linear fits.  
    }
\end{figure}

\section{Conclusion}

In this work, we have considered the attainability of HL scaling in quantum metrology using QEC when the probe is subject to non-Markovian noise. We restricted our investigation to an HMM, which displays non-Markovianity for many choices of metric. We presented the HNES conditions, which guarantee the HL using QEC, and generalized the HNLS conditions of \cite{zhou_achieving_2018} to the non-Markovian regime. 
Contrary to the Markovian conditions, the HNES conditions are sufficient but not necessary. 
Through an example, we demonstrate that deriving a general necessary condition requires taking into account the initial state of the environment, and may potentially require tracking the system's trajectory. Thus, deriving such conditions is challenging.
We further present alternative conditions for obtaining the HL in certain scenarios where the HNES conditions may be violated. The trade-off is that achieving the HL in these scenarios may require a relatively strict measurement schedule. First, we show that when there are no dissipative terms in the master equation, i.e., when the evolution of the entire system is unitary, and when the signal acts only on the probe, then one can always achieve HL scaling of the QFI. Contrary to the HNES conditions, which guarantee exact quadratic scaling of the QFI, unitary evolution only guarantees quadratic scaling up to an almost periodic envelope function. In order to obtain the HL, one must therefore measure the system and specific recurrence times. Second, we showed that for a specific class of interactions, which we call diagonal interactions, the assumptions of the HNES conditions may be weakened. We presented the weaker HNELS conditions, which guarantee HL scaling of the QFI. Once again, the HL scaling is up to a periodic envelope function, and so measurements must be well-timed.

We show that the original HNLS conditions can be applied to the full system of probe and environment to guarantee that the SQL is the best achievable limit in certain cases. A summary of all the cases for the scaling of the QFI is presented in \cref{fig:venn_diagram}.

Finally, we investigate whether SQL scaling within a given experiment is always possible when we do not reset the state of the environment. We show that it is always achievable in a special case where the model has a diagonal interaction. 
We present numerical evidence of SQL scaling for a specific Heisenberg model where the operators do not commute and for a large number of randomly generated models. We consider a finite-size scaling argument on the FI, where the FI is calculated exactly for a few measurements and then via random sampling for a large number of measurements. When the signal acts on only the probe, there is strong evidence that the SQL is always achievable (see~\cref{fig:heisenberg}). When the signal also acts on the environment, the numerics are inconclusive due to the presence of important outliers in the data (see~\cref{fig:heisenberg_HE}).

Many open problems remain. First, there is the problem of deriving sufficient and necessary conditions for HL scaling. Second, a proof for the general attainability of the SQL in a given experiment is still lacking. 
Lastly, our work assumes a signal that acts unitarily and independently on the probe and environment. This setup is not completely general, and leaves open the problem of QFI scaling when the dissipative terms, as well as the unitary coupling, depend on the parameter to be estimated.

\section{Acknowledgements}

We would like to thank Senrui Chen, Matthew Duschenes, Noah Gorgichuk, Liang Jiang, Gideon Lee, Benjamin MacLellan, Akimasa Miyake, Arsalan Motamedi, Andy Schang, Ruhi Shah, Shravan Shravan, and Tyler Thurtell for helpful discussion. 
S.Z. and N.C. acknowledge funding
provided by Perimeter Institute for Theoretical Physics,
a research institute supported in part by the Government of Canada through the Department of Innovation, Science and Economic Development Canada, and by the
Province of Ontario through the Ministry of Colleges and
Universities.
S.Z. acknowledges the support from the National Research Council of Canada (Grant No.~AQC-217-1). 
N.C. acknowledges the support from the Applied Quantum Computing Challenge Program at the National Research Council of Canada.
R.L. would like to thank Mike and Ophelia Lazaridis for funding.

\section{Data Availability}

The data that support the findings of this article are openly available \cite{mann_zach-et-mnm-qec-metrology_2025}.

\bibliography{main}% Produces the bibliography via BibTeX.
    
\appendix

\section{KL Conditions for the HMM}\label{section:KL_conditions_proof}

%\begin{proof}
We will follow the definitions and proofs of \cite{gottesman_surviving_2024,leung_university_2024}. 
We can modify Definition~\ref{def:def_qecc} to obtain an equivalent picture which includes the syndrome information, in which case we say $\lrp{U,\mathcal{E}}$ is a QECC if there exists an isometry $V$ such that
    \begin{equation*}
        (V\otimes\ident_{d'}
        )E(U\otimes \ident_d)\ket{\psi}\otimes\ket{\phi}=\ket{\psi}\otimes \sqrt{c_{E,\psi,\phi}}\ket{S_{E,\psi,\phi}},
    \end{equation*}
    where here the state $S$ contains both the syndrome and the environment degrees of freedom. This picture is obtained by taking the isometry $V$ to be the Stinespring dilation of the channel $D$. The states remain pure in this picture since the trace is not taken. Let $d_S$ be the dimension necessary for the dilation, such that $\ket{S_{E,\psi,\phi}}\in \C^{d_S}$.

    As for standard error correcting codes, if $(U,\mathcal{E})$ is a QECC, then $(U,\tx{span}(\mathcal{E}))$ is a QECC. The proof is the same as in the Markovian case. We follow the proof given by \cite{gottesman_surviving_2024,leung_university_2024}. 
    Consider $F,E\in\mathcal{E}$. As seen previously, there exists $V$ such that $\forall\ \ket{\psi}\in\C^k$ and $\forall\ \ket{\phi}\in\C^{d_E}$, we have that
    \begin{gather*}
        (V\otimes \ident_{d'_E})E(U\otimes \ident_{d_E})\ket{\psi}\ket{\phi}=\sqrt{c_{E,\psi,\phi}}\ket{\psi}\ket{S_{E,\psi,\phi}}, \\
        (V\otimes \ident_{d'_E})F(U\otimes \ident_{d_E})\ket{\psi}\ket{\phi}=\sqrt{c_{F,\psi,\phi}}\ket{\psi}\ket{S_{F,\psi,\phi}}.
    \end{gather*}
    By linearity, we have
    \begin{align*}
    (V\otimes \ident_{d'_E})(aE+bF)(U\otimes \ident_{d_E})\ket{\psi}\ket{\phi}\nonumber\\
    =\ket{\psi}\lrp{a\sqrt{c_{E,\psi,\phi}}\ket{S_{E,\psi,\phi}}+b\sqrt{c_{F,\psi,\phi}}\ket{S_{F,\psi,\phi}}}
    \end{align*}
    and so we see that $aE+bF$, where $a,b \in \mathbb{R}$ is also a correctable error, completing the proof. Using these properties, we can now prove Lemma~\ref{lemma:KL}.
%\end{proof}
    
    \begin{proof}[Proof~of~Lemma~\ref{lemma:KL}]
    The proof is similar to that of the standard KL conditions. We begin by showing that the conditions are necessary. Let $\ket{\psi},\ket{\phi}$ be linearly independent states of $\C^k$. Let $E_i,E_j$ be errors in $\mathcal{E}$. We consider orthogonal basis states $\ket{\chi_1},\ket{\chi_2}\in\C^{d_E}$. Since we have a QECC, then there exists an isometry $V$ such that 
    \begin{align}\label{eq:kl_proof_1}
        (V\otimes \ident_{d'_E})E_i(U\otimes \ident_{d_E})\ket{\psi}\ket{\chi_1}\nonumber\\
        =\sqrt{c_{E_i,\psi,\chi_1}}\ket{\psi}\ket{S_{E_i,\psi,\chi_1}},
    \end{align}
    \begin{align}\label{eq:kl_proof_2}
        (V\otimes \ident_{d'})E_j(U\otimes \ident_d)\ket{\phi}\ket{\chi_2}\nonumber\\
        =\sqrt{c_{E_j,\phi,\chi_2}}\ket{\phi}\ket{S_{E_j,\psi,\chi_2}}.
    \end{align}
    We now consider a state of the form 
    \begin{equation*}
        \ket{\eta}=\alpha\ket{\psi}+\beta\ket{\psi^{\perp}}
    \end{equation*}
    where $\ket{\psi^{\perp}}$ is a vector that is orthogonal to $\ket{\psi}$. We have that 
    \begin{align}\label{eq:kl_proof_3}
        (V\otimes \ident_{d'_E})E_i\lrp{U\otimes\ident_{d_E}}\ket{\eta}\ket{\chi_1} \nonumber\\
        =\sqrt{c_{E_i,\eta,\chi_1}}\ket{\eta}\ket{S_{E_i,\eta,\chi_1}}.
    \end{align}
    We now project with $\ket{\psi}\bra{\psi}$ on both \cref{eq:kl_proof_1} and \cref{eq:kl_proof_3} and then divide the second equation by $\alpha$. The left-hand sides become equal, and we equate them to obtain
    \begin{equation*}
        \sqrt{c_{E_i,\psi,\chi_1}}\ket{S_{E_i,\psi,\chi_1}}=\sqrt{c_{E_i,\eta,\chi_1}}\ket{S_{E_i,\eta,\chi_1}}\quad \forall \ \eta.
    \end{equation*}
    We can therefore conclude that the residual syndrome and environment state do not depend on the choice of initial state, i.e.
    \begin{equation*}
        c_{E_i,\psi,\chi_1}\equiv c_{E_i,\chi_1}.
    \end{equation*}
    Knowing this, we now take the inner product between \cref{eq:kl_proof_1} and \cref{eq:kl_proof_3} with the projector $\ket{\xi_l}_S\bra{\xi_k}_S$ sandwiched between the bra and ket, where $\ket{\xi_l}_S,\ket{\xi_k}_S$ are orthogonal basis elements of $\C^{d_S}$. We also relabel $\chi_1$ to $\chi_m$ and $\chi_2$ to $\chi_n$. This gives us the new equation 
    %\begin{widetext}
    \begin{align*}
        &\bra{\phi}\bra{\chi_m}(U^{\dagger}\otimes \ident_{d_E})E_j^{\dagger}(V^{\dagger}\otimes \ident_{d'_E})\nonumber\\ &\qquad \cdot(\ident_k\otimes\ket{\xi_l}_S\bra{\xi_k}_S) (V\otimes\ident_{d'_E}) E_i (U\otimes \ident_{d_E})\ket{\psi}\ket{\chi_n}\nonumber\\
        &=\sqrt{c_{E_i,\chi_n}c_{E_j,\chi_m}}\bra{S_{E_j,\chi_m}}\ket{\xi_l}\bra{\xi_k}\ket{S_{E_i,\chi_n}}\braket{\phi}{\psi}.
    \end{align*}
    %\end{widetext}
    By defining 
    \begin{align*}
    c_{ij}^{mnlk}=\sqrt{c_{E_i,\chi_m}c_{E_j,\chi_n}}\bra{S_{E_j,\chi_n}}\ket{\xi_l}\nonumber\\
    \times\bra{\xi_k}\ket{S_{E_i,\chi_m}}\braket{\phi}{\psi},
    \end{align*}
    the equation is reduced to 
    \begin{equation*}
        P\bra{\chi_m}E_i^{\dagger}\ket{\chi_l}\bra{\chi_k}E_j\ket{\chi_n}P=c_{ij}^{mnlk}P.
    \end{equation*}
    This is exactly \cref{eq:KLM_HMM} for the KL conditions, thus completing the proof that the conditions are necessary for a QECC. 

    We now show that the conditions are sufficient. By defining
    \begin{equation}\label{eq:extended_errors}
    \Tilde{E}_{i,n,k}=\bra{n}_{E'}E_i\ket{k}_{E},
    \end{equation}
    one can see that the generalized conditions are equivalent to the standard KL conditions of \cite{knill_theory_1996}, where we consider the extended error set $\lrcb{\tilde{E}_{i,n,k}}$ indexed by $i,n,k$ acting solely on the probe.

    Let $\ket{\chi}\in\C^{d_E}$ be the environment state and $\ket{\psi}\in\C^{d_P}$ be a state in the code space. We expand the environment state over an orthonormal basis to obtain
    \begin{equation*}
        \ket{\chi}=\sum_{l=0}^{d_E}c_l\ket{l}_{E}.
    \end{equation*}
    Let $E_i\in\mathcal{E}$ be an error. We then have that
    \begin{align*}
        &\quad E_i\ket{\psi}\ket{\chi} \nonumber\\
        &=\!\bigg(\!\sum_{n=0}^{d_E'}\sum_{k=0}^{d_E} \bra{n}_{E'}E_i\ket{k}_E \!\otimes\! \ket{n}_{E'}\bra{k}_E\!\bigg)\lrp{\!\ket{\psi} \!\otimes\! \sum_{l=0}^{d_E}c_l\ket{l}_E\!} \nonumber\\
        &=\sum_{n=0}^{d_E'}\lrb{\lrp{\sum_{k=0}^{d_E} c_k\Tilde{E}_{i,n,k}\ket{\psi}}\otimes \ket{n}_{E'}}.
    \end{align*}
    Since we suppose the standard KL conditions hold for $\Tilde{E}_{i,n,k}$ on the probe, then by \cite{knill_theory_1996} there exists a CPTP map $D$ that acting solely on the probe such that 
    \begin{align*}
        (D\otimes \ident_{d'_E})\left(\sum_{n,n'=0}^{d_E'}\left[\left(\sum_{k,k'=0}^{d_E} c_kc_{k'}^*\Tilde{E}_{i,n,k}\ket{\psi}\bra{\psi}\tilde{E}_{i,n',k'}^{\dagger}\right)\right.\right.\nonumber\\ \otimes \ket{n}_{E'}\bra{n'}_{E'}\bigg]\bigg)
        =\ket{\psi}\bra{\psi}\otimes \sum_{n,n'=0}^{d_E'}\alpha_{nn'}\ket{n}_{E'}\bra{n'}_{E'}\quad\quad
    \end{align*}
    where $\alpha_{nn'}$ are the coefficients of the residual non-normalized state in the environment. This holds since the $D$ map can correct for linear combinations of the errors $\Tilde{E}_{i,n,k}$. We need only compose with a partial trace map on the environment degrees of freedom to obtain the exact definition for a QECC. The measurement and recovery operators for the HMM are obtained from the code for the reduced errors $\Tilde{E}_{i,n,k}$. 
    \end{proof}

\section{Example Noise in the Markov Approximation}\label{section:example_markov_approx}

 We will consider $N$ environmental spin-1 degrees of freedom. We will consider a single spin-half, as we will assume the noise on each spin-half is uncorrelated.
The total Hamiltonian of the system can be written as 
\begin{align*}
H=\sum_{i=1}^NH_{E,i}+\sum_{i=1}^N H_{EP,i}
\end{align*}
where
\begin{equation*}
    H_{EP,i}=\tau_i\lrp{ S_x^{(i)}\otimes X+S_z^{(i)}\otimes Z}.
\end{equation*}
Here, the superscript is used to index the spins the operator acts on. We will suppose without loss of generality that the total Hamiltonian has no non-trivial terms acting on the probe. If this were not the case, then these terms could be eliminated by the fast and accurate universal control. This spin-1 Hamiltonian is inspired by the Hamiltonian of nitrogen vacancy centers in diamond \cite{liddy_optimal_2022,rama_koteswara_rao_level_2020}. The full interaction Hamiltonian can be written 
\begin{align*}
H_{I}=\sum_{i=1}^N\sum_{A\in\lrcb{X,Z}} \tau_{i} S_A^{(i)}\otimes A.
\end{align*}
We set 
\begin{equation}\label{eq:zero_split_Hamiltonian}
    H_{E,i}=\Delta (S_z^{(i)})^2,
\end{equation}
where $\Delta$ is the zero-field splitting. $\tau_{i}$ is the strength of the coupling between environmental spin $i$ and the spin-half, which we assume to be small. In the Markovian regime, we will assume $\Delta\gg \tau_i$, such that the effect of the coupling on the environment state can be neglected. We see that $\ket{0}_E$ is the ground state of the zero-field splitting Hamiltonian given by~\cref{eq:zero_split_Hamiltonian}. Following the prescription of \cite{breuer_measure_2009}, we can obtain a Markovian master equation acting on the spin-half by making the Born-Markov approximation. In this approximation, we assume that the state can be written as
\begin{equation}\label{eq:theory_OQS_master_eq}
    \rho(t)\approx \rho_P(t)\otimes \lrp{\ket{0}_E\bra{0}_E}^{\otimes N}.
\end{equation}
In other words, the environment spins are approximated to always be in the state $\lrp{\ket{0}_E\bra{0}_E}^{\otimes N}$ due to their weak coupling and fast decoherence.
The master equation acting on the probe is then given by Eq. (3.143) of \cite{breuer_measure_2009}.
\begin{align}
    \frac{d\rho_P}{dt}=\sum_{A,Q\in\lrcb{X,Z}}\gamma_{AQ}\lrp{A\rho_P Q-\frac{1}{2}\lrcb{QA,\rho_P}}.
\end{align}
We have neglected the Lamb shift Hamiltonian as it can be offset by our controls, leaving only the dissipative terms. The decay rates $\gamma_{AQ}$ are given by Eq. (3.139) of \cite{breuer_theory_2007}
\begin{align*}
    &\gamma_{AQ}=\\
    &\sum_{i,j=1}^N\int_{\R}dt e^{\ii\omega t}\Tr\lrb{e^{\ii t H_{E,i}}S_A^{(i)}e^{-\ii t H_{E,i}}S_Q^{(j)} \lrp{\ket{0}_E\bra{0}_E}^{\otimes N}}.
\end{align*}
In particular, we see that $\gamma_{AQ}=0$ if $A$ or $Q$ is equal to $Z$. And so,~\cref{eq:theory_OQS_master_eq} reduces to 
\begin{equation*}
    \frac{d\rho_P}{dt}=\gamma_{XX}\lrp{X\rho_PX - \rho}.
\end{equation*}
We recognize the equation for dephasing along the $X$ axis. The corresponding channel has Kraus operators proportional to $\ident$ and $X$.

\section{Direct Proof of the Limit Superior for the Unitary Case}\label{section:direct_alpha_proof}

It is sufficient to show for any $t_0 > 0$ and $\epsilon > 0$, there exists $t > t_0$ such that $\abs{\alpha(t)-\alpha(0)} = \abs{\alpha(t)-\frac{1}{2}} < \epsilon$. First, due to the continuity and periodicity of $e^{i\phi}$ as a function of $\phi$, there is a constant $\delta$ that depends on $\epsilon$ such that $\abs{\sum_m\bar{c}_me^{-\ii t\phi_m} - \sum_m\bar{c}_m} < \epsilon$ whenever $\abs{t\phi_m - 2k_m\pi}\leq \delta$ for some set of integers $\{k_m\}$ and all $m$. Without loss of generality, we also assume $\delta$ is small enough such that $2\pi \lceil 1/(4\pi\delta) \rceil > t_0$ and $4\pi\delta^2 < \delta$.  Second, according to Dirichlet's approximation theorem~\cite{Dirichlet}, there exists an integer $1 \leq q \leq \lceil 1/\delta^2 \rceil^{d_E^2}$ and a set of integers $\{q_m\}$ such that $\abs{\phi_m - q_m/q} \leq \frac{1}{\lceil 1/\delta^2 \rceil q} \leq \delta^2/q$. Finally, let $t = 2\pi \lceil 1/(4\pi\delta) \rceil q $. We have 
\begin{equation}
    \abs{t\phi_m - t\frac{q_m}{q}} \leq  \left\lceil \frac{1}{4\pi\delta} \right\rceil q 2\pi \frac{\delta^2}{q} \leq \frac{\delta + 4\pi\delta^2}{2} < \delta. 
\end{equation}
The above provides a construction of $t$ that satisfies $t = 2\pi \lceil 1/(4\pi\delta) \rceil q > t_0$ and $\abs{t\phi_m - t\frac{q_m}{q}} < \delta$ as required, which implies $\abs{\sum_m\bar{c}_me^{-\ii t\phi_m} - \sum_m\bar{c}_m} < \epsilon$. Therefore, we have $\limsup_{t\rightarrow\infty}\abs{\alpha(t)}=\frac{1}{2}$, leading to a QFI scaling of  
\begin{equation}
\limsup_{t\rightarrow \infty}\frac{\mathcal{F}(\rho_{PA})}{t^2}=\lrp{\Delta\lambda}^2. 
\end{equation}

\section{Proof of linear FI scaling for the binomial mixture}\label{section:binomial_linear_FI}

The FI, which we denote by $\mathcal{I}$, can be calculated using the following formula 
\begin{align}
    \mathcal{I}(\omega) &=\sum_{b_1,\cdots,b_N}\lrp{\frac{\del_{\omega}P(b_1,\cdots,b_N)}{P(b_1,\cdots,b_N)}}^2 P(b_1,\cdots,b_N) \nonumber\\
    & =\mathbb{E}\lrb{\lrp{\frac{\del_{\omega}P(b_1,\cdots,b_N)}{P(b_1,\cdots,b_N)}}^2}. \label{eq:classical_FI}
\end{align}

We first consider the numerator of \cref{eq:classical_FI} and take the derivative of \cref{eq:binomial_mixture},
\begin{align}\label{eq:diff_pi}
    &\del_{\omega}P(N_1=n_1)=\nonumber \\
    &\sum_{i=1}^{d_E}\alpha_i\binom{N}{n_1}p_i^{n_1}(1-p_i)^{N-n_1}\frac{n_1-Np_i}{p_i(1-p_i)}\lrp{\del_{\omega}p_i}.
\end{align}
We will choose $t>0$ such that $p_i\neq 0$ or 1. We can see from \cref{eq:pi_analytic} that for $t=0$, $p_i=0.5$. Further, the probability depends continuously on the variable $t$. There therefore exists a choice $t>0$ such $p_i$ is not too far from $\frac{1}{2}$, for all $i$. Similarly, we will choose $t$ such that $\del_{\omega} p_i\neq 0$. We can see from \cref{eq:diff_pi_analytic} that for $t>0$ but small, the derivative is given by 
\begin{align*}
    \del_{\omega}p_i=\frac{1}{2}\exp\lrp{-\Re(\Gamma_i)t}(\lambda_0-\lambda_1)t+O(t^2),
\end{align*}
and so for a choice of small $t$ the derivative doesn't vanish. Below, we assume $t$ is small enough so that for all $i$, $ 0 < p_i < 1$ and $\del_{\omega}p_i \neq 0$.

The full FI can be written as
\begin{align}\label{eq:FI_mixture}
    &\mathcal{I}(\omega)= \nonumber \\
    &\sum_{n_1=0}^{N}\frac{\lrp{\sum_{i=1}^{d_E}\alpha_i(\del_{\omega}p_i)p_i^{n_1}(1-p_i)^{N-n_1}\frac{n_1-Np_i}{p_i(1-p_i)}}^2}{\lrp{\sum_{i=1}^{d_E}\alpha_i p_i^{n_1}(1-p_i)^{N-n_1}}^2}\nonumber \\
    &\cdot \lrp{\binom{N}{n_1}\sum_{i=1}^{d_E}\alpha_i p_i^{n_1}(1-p_i)^{N-n_1}}.
\end{align}

To show the asymptotic linear scaling of the FI, we will need to prove three important bounds. Let $i_*=\tx{argmax}\lrcb{p_i\ | \ \alpha_i> 0}$ and let $p_*=p_{i_*}$. Such a $p_i$ always exists since we've assumed $\sum_i\alpha_i=1$, and so there must be at least one nonzero $\alpha_i$.  
To prove the SQL is always achievable, we will need the following three bounds. The proofs of the bounds can be found below.

\begin{lemma}\label{lemma:first_bound}
$\forall \ \eta_1>0,\epsilon_1>0$, and $\forall\ p_i\neq p_*$, $\exists \ $ $N_1^{(1)}$ such that $\forall \ N>N_1^{(1)}$, and $\forall \ n_1$ satisfying
\begin{equation*}
    \frac{n_1}{N}>\frac{\log\lrp{\frac{1-p_i}{1-p_*}}}{\log\lrp{\frac{p_*(1-p_i)}{p_i(1-p_*)}}}+\epsilon_1,
\end{equation*}
we have that 
\begin{align}
    \abs{\alpha_*p_*^{n_1}(1-p_*)^{N-n_1}
    \cdot\frac{n_1-Np_*}{p_*(1-p_*)}(\del_{\omega}p_*)}\nonumber\\ >\eta_1\abs{ \alpha_ip_i^{n_1}(1-p_i)^{N-n_1}
    \cdot\frac{n_1-Np_i}{p_i(1-p_i)}(\del_{\omega}p_i)}. \label{eq:first_bound}
\end{align}
\end{lemma}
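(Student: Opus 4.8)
The plan is to bound the ratio of the two sides directly and show that it diverges. Writing $R := \text{LHS}/\text{RHS}$ (this is meaningful exactly for the mixture components with $\alpha_i>0$, which by the standing distinctness convention are the only ones present), I would factor $R$ into three pieces: an $N$-independent constant
\[
C=\frac{\alpha_*\,\abs{\del_\omega p_*}\,p_i(1-p_i)}{\alpha_i\,\abs{\del_\omega p_i}\,p_*(1-p_*)},
\]
which is positive and finite under the assumptions $0<p_i<1$, $\del_\omega p_i\neq 0$ (and $\alpha_*>0$ by the definition of $i_*$); a purely exponential piece $E=\lrp{p_*/p_i}^{n_1}\lrp{(1-p_*)/(1-p_i)}^{N-n_1}$ from the binomial weights; and a linear piece $\ell=\abs{n_1-Np_*}/\abs{n_1-Np_i}$ coming from the two score factors. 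The claim $R>\eta_1$ then reduces to showing that $E$ grows fast enough in $N$ to overwhelm the fixed constant $C$ and the bounded factor $\ell$.

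The heart of the argument is the exponent. Setting $x=n_1/N$, I would write $\log E = N f(x)$ with $f(x)=x\log\lrp{p_*/p_i}+(1-x)\log\lrp{(1-p_*)/(1-p_i)}$, an affine function of $x$ whose slope $s=\log\lrp{\frac{p_*(1-p_i)}{p_i(1-p_*)}}$ is strictly positive since $p_*>p_i$ (recall $p_*$ is the largest $p_i$ with $\alpha_i>0$). A one-line computation shows that the unique root of $f$ is exactly the threshold $x_0=\log\lrp{\frac{1-p_i}{1-p_*}}\big/\log\lrp{\frac{p_*(1-p_i)}{p_i(1-p_*)}}$ appearing in the hypothesis, and that $p_i<x_0<p_*$ (equivalently $f(p_i)=-D(p_i\|p_*)<0$ and $f(p_*)=D(p_*\|p_i)>0$). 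Hence on the admissible region $x>x_0+\epsilon_1$ we obtain $f(x)=s(x-x_0)\ge s\epsilon_1>0$, so $E\ge e^{N s\epsilon_1}$, i.e. $E$ grows exponentially in $N$ at a rate governed by $\epsilon_1$ alone.

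It remains to combine this with the prefactors. Since $x>x_0>p_i$, the denominator of $\ell$ satisfies $\abs{n_1-Np_i}=N(x-p_i)\le N$, so $\ell\ge \abs{n_1-Np_*}/N$ and therefore $R\ge C\,e^{N s\epsilon_1}\,\abs{n_1-Np_*}/N$. Whenever $\abs{n_1-Np_*}\ge 1$ this lower bound tends to $+\infty$ uniformly over the admissible $n_1$, so a single choice of $N_1^{(1)}$ (depending only on $C$, $s$, $\epsilon_1$, and $\eta_1$) forces $R>\eta_1$ for all $N>N_1^{(1)}$, giving the claimed domination.

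The hard part is precisely the factor $\abs{n_1-Np_*}$ from the $p_*$-score, which vanishes at the mean $n_1=Np_*$: there the left-hand side is identically zero while the right-hand side need not be, so the inequality cannot hold verbatim at the (at most two) integers nearest $Np_*$, where $\abs{n_1-Np_*}<1$ and no uniform lower bound survives. I would handle this by isolating that $O(1)$ set of exceptional $n_1$ and establishing the bound only off it. This is harmless for the downstream application, since the Fisher-information sum~\eqref{eq:FI_mixture} is dominated by the exponentially many terms away from the mean, so removing a bounded number of indices near $n_1=Np_*$ cannot affect the eventual linear-in-$N$ scaling.
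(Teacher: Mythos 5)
Your proof is correct and follows essentially the same route as the paper's: form the ratio of the two sides, pass to logarithms, and identify the threshold in the hypothesis as the unique zero $x_0$ of the affine exponential rate $f(x)=x\log(p_*/p_i)+(1-x)\log\lrp{(1-p_*)/(1-p_i)}$, so that on $x>x_0+\epsilon_1$ the binomial-weight ratio grows like $e^{Ns\epsilon_1}$ and swamps the $N$-independent constant and the polynomially bounded score ratio. (Your observation that $f(p_i)=-D(p_i\|p_*)<0<D(p_*\|p_i)=f(p_*)$, hence $p_i<x_0<p_*$, also reproves the paper's Lemma~\ref{lemma:loglog_pstar_lemma} in passing.)

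The one substantive difference is in your last paragraph, and there you are right where the paper is not careful: the lemma as literally stated fails at the one or two integers $n_1$ with $\abs{n_1-Np_*}<1$, since there the left-hand side can vanish (exactly, when $Np_*\in\mathbb{Z}$) while the right-hand side stays strictly positive. The paper's own proof hides this by dividing by $N$ and asserting that $\frac{1}{N}\log\abs{\frac{k_1-p_i}{k_1-p_*}}$ is uniformly below $\epsilon_1$ for large $N$, which is false as $k_1\to p_*$; that point lies inside the admissible region because $x_0<p_*$. Your repair --- excise the $O(1)$ exceptional indices near the mean and note that their contribution to the Fisher-information sum in \cref{eq:FI_mixture} is $O(1)$ against a $\Theta(N)$ total, so the linear scaling in Theorem~\ref{theorem:SQL_diagonal_achievable} is unaffected (indeed, at those indices the downstream lower bound degenerates to the trivial bound by a nonnegative quantity) --- is sound and is the honest way to state the lemma.
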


Note that the right-hand side of this equation satisfies
\begin{equation*}
    \frac{\log\lrp{\frac{1-p_i}{1-p_*}}}{\log\lrp{\frac{p_*(1-p_i)}{p_i(1-p_*)}}} + \epsilon_1 > p_i.
\end{equation*}

\begin{lemma}\label{lemma:second_bound}
$\forall\ \eta_2>0,\ \epsilon_2>0$, and $\forall\ p_i\neq p_*$, $\exists \ N_1^{(2)}$ such that $\forall \ N>N_1^{(2)}$, and $\forall\ n_1$ satisfying
\begin{equation*}
    \frac{n_1}{N}>\frac{\log\lrp{\frac{1-p_i}{1-p_*}}}{\log\lrp{\frac{p_*(1-p_i)}{p_i(1-p_*)}}}+\epsilon_2,
\end{equation*}
we have that
\begin{align}
    \alpha_* p_*^{n_1}(1-p_*)^{N-n_1}>\eta_2 \alpha_i p_i^{n_1}(1-p_i)^{N-n_1}.\label{eq:second_bound}
\end{align}
\end{lemma}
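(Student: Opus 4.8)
The plan is to pass to logarithms and compare the two sides of \cref{eq:second_bound} as exponential growth rates in $N$. Recall from the setup that $i_* = \tx{argmax}\lrcb{p_i \mid \alpha_i > 0}$ and that the $p_i$ are taken distinct, so $p_* > p_i$ for every $p_i \neq p_*$; moreover the choice of $t$ preceding the lemma guarantees $0 < p_i, p_* < 1$, so all the logarithms below are finite. Introduce the abbreviations $a := \log\lrp{p_*/p_i} > 0$ and $b := \log\lrp{(1-p_*)/(1-p_i)} < 0$, whence $a - b > 0$. Taking the logarithm of the ratio of the two sides of \cref{eq:second_bound} yields
\begin{equation*}
\log\lrp{\frac{\alpha_* p_*^{n_1}(1-p_*)^{N-n_1}}{\alpha_i p_i^{n_1}(1-p_i)^{N-n_1}}} = \log\frac{\alpha_*}{\alpha_i} + Nb + n_1(a-b),
\end{equation*}
so that \cref{eq:second_bound} is equivalent to showing the right-hand side exceeds $\log\eta_2$.

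The key observation is that the threshold $\theta := \frac{-b}{a-b}$ appearing in the hypothesis is precisely the value of $n_1/N$ at which the $O(N)$ part $Nb + n_1(a-b)$ vanishes; a short rearrangement shows $\theta = \log\lrp{(1-p_i)/(1-p_*)}\big/\log\lrp{p_*(1-p_i)/(p_i(1-p_*))}$, matching the stated bound. Consequently, the hypothesis $n_1/N > \theta + \epsilon_2$ together with $a - b > 0$ gives $Nb + n_1(a-b) > N\epsilon_2(a-b)$, a margin that is strictly positive and grows linearly in $N$. Substituting this back, the log-ratio is bounded below by $\log\lrp{\alpha_*/\alpha_i} + N\epsilon_2(a-b)$.

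To finish, since $\epsilon_2(a-b) > 0$ the linear-in-$N$ term eventually dominates the constant $\log\lrp{\alpha_*/\alpha_i}$, so it suffices to take $N_1^{(2)} = \max\lrcb{0,\ \lceil (\log\eta_2 - \log(\alpha_*/\alpha_i))/(\epsilon_2(a-b)) \rceil}$; for all $N > N_1^{(2)}$ the lower bound exceeds $\log\eta_2$, which is \cref{eq:second_bound}. I do not expect a genuine obstacle here, as the argument is essentially a Chernoff-type comparison of large-deviation rates and runs in close parallel to Lemma~\ref{lemma:first_bound}, the only difference being the absence of the polynomial prefactors $\tfrac{n_1 - Np_i}{p_i(1-p_i)}$. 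The only points requiring care are the two sign facts that drive everything: that $p_* > p_i$ (so the growth rate has the correct sign), which follows from the definition of $i_*$ and distinctness of the $p_i$, and that $0 < p_*, p_i < 1$, which keeps all logarithms finite; both are secured in the setup. (Should $\theta + \epsilon_2 \geq 1$, no admissible $n_1 \leq N$ exists and the statement is vacuously true.)
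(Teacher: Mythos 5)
Your proposal is correct and follows essentially the same route as the paper: take logarithms, recognize the stated threshold as the value of $n_1/N$ at which the $O(N)$ part of the log-ratio vanishes, and absorb the constant $\log(\eta_2\alpha_i/\alpha_*)$ into the $N\epsilon_2$ margin for $N$ large. Your version is slightly more explicit (it exhibits $N_1^{(2)}$ and keeps the $\alpha_*/\alpha_i$ factor visible, which the paper's displayed inequality silently folds into $\eta_2$), but there is no substantive difference in the argument.
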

\begin{lemma}\label{lemma:third_bound}
$\forall\ 1>\eta_3>0,\ \epsilon_3>0$, $\exists\ N_1^{(3)}$ such that $\forall \ N>N_1^{(3)}$, $\exists\ n_1^{(3)}>(p_*-\epsilon_3)N$ such that $n_1>n_1^{(3)}$ implies that
\begin{align}
    \sum_{n_1>n_1^{(3)}}\binom{N}{n_1}\frac{\lrp{p_*^{n_1}(1-p_*)^{N-n_1}\frac{n_1-Np_*}{p_*(1-p_*)}}^2}{p_*^{n_1}(1-p_*)^{N-n_1}} \nonumber \\
    > \eta_3\sum_{n_1=1}^{N}\binom{N}{n_1}\frac{\lrp{p_*^{n_1}(1-p_*)^{N-n_1}\frac{n_1-Np_*}{p_*(1-p_*)}}^2}{p_*^{n_1}(1-p_*)^{N-n_1}}. \label{eq:third_bound}
\end{align}
\end{lemma}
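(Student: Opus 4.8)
The plan is to recognize the expression in \cref{eq:third_bound} as the Fisher weight of a single binomial distribution and then reduce the claim to a standard concentration estimate. Writing $P_*(n_1) = \binom{N}{n_1} p_*^{n_1}(1-p_*)^{N-n_1}$ for the $\tx{Bin}(N,p_*)$ probability mass function, each summand collapses to
\begin{equation*}
    \binom{N}{n_1}\frac{\lrp{p_*^{n_1}(1-p_*)^{N-n_1}\tfrac{n_1-Np_*}{p_*(1-p_*)}}^2}{p_*^{n_1}(1-p_*)^{N-n_1}} = P_*(n_1)\lrp{\frac{n_1-Np_*}{p_*(1-p_*)}}^2,
\end{equation*}
so the total on the right-hand side of \cref{eq:third_bound} equals $\frac{1}{(p_*(1-p_*))^2}\sum_{n_1=1}^{N}P_*(n_1)(n_1-Np_*)^2$. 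First I would compute this total: since $\sum_{n_1=0}^N P_*(n_1)(n_1-Np_*)^2 = \tx{Var}\lrp{\tx{Bin}(N,p_*)} = Np_*(1-p_*)$, the sum over all $n_1$ equals $N/(p_*(1-p_*))$, and the omitted $n_1=0$ term, namely $(1-p_*)^N (Np_*)^2/(p_*(1-p_*))^2$, is exponentially small. Hence the total in \cref{eq:third_bound}, which I call $W$, satisfies $W = \Theta(N)$.

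Next I would bound the portion that gets discarded. Setting $n_1^{(3)} = \lfloor (p_*-\epsilon_3)N\rfloor + 1$, which satisfies $n_1^{(3)}>(p_*-\epsilon_3)N$ by construction, the cut-off sum over $n_1 \le n_1^{(3)}$ is controlled by pulling out the polynomial prefactor, using $(n_1-Np_*)^2 \le N^2$, and applying Hoeffding's inequality to the lower tail:
\begin{equation*}
    \sum_{n_1\le n_1^{(3)}} P_*(n_1)\lrp{\frac{n_1-Np_*}{p_*(1-p_*)}}^2 \le \frac{N^2}{(p_*(1-p_*))^2}\,\Pr\lrb{\tx{Bin}(N,p_*)\le n_1^{(3)}} \le \frac{N^2}{(p_*(1-p_*))^2}\,e^{-cN},
\end{equation*}
where $c=c(\epsilon_3)>0$. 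Since the fixed polynomial factor is dominated by the exponential decay, this discarded mass is $o(1)$, hence negligible against $W=\Theta(N)$.

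Finally I would combine the two estimates. The tail sum over $n_1 > n_1^{(3)}$ is exactly $W$ minus the discarded portion, so it is at least $\lrp{1-o(1)}W$. Because $\eta_3<1$ is fixed, there is an $N_1^{(3)}$ beyond which $\lrp{1-o(1)}W > \eta_3 W$, which is precisely \cref{eq:third_bound}. Note that the same construction works for any target $n_1^{(3)}$ in the admissible range $\lrp{(p_*-\epsilon_3)N,\infty)$ provided the tail still captures fraction $\eta_3$; choosing $n_1^{(3)}$ as small as allowed maximizes the captured fraction and makes the argument cleanest.

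The argument is at bottom a variance-concentration estimate, so I do not expect any deep obstacle. The one step that must be handled carefully is confirming that the quadratic weight $(n_1-Np_*)^2$ does not overwhelm the exponentially small left-tail probability; this is harmless, since a fixed power of $N$ is dominated by $e^{-cN}$, but it is the place where the estimate has to be written out rather than asserted, and it is what forces the conclusion to hold only for $N$ sufficiently large.
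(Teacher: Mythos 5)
Your proof is correct and follows essentially the same route as the paper's: simplify the summand to the binomial mass times the squared score, bound the discarded low-$n_1$ portion by pulling out $(n_1-Np_*)^2\le N^2$ and applying an exponential lower-tail bound (you use Hoeffding where the paper uses the KL-divergence Chernoff bound, which is immaterial), and conclude that the exponentially small cut piece cannot cost a fixed fraction $1-\eta_3$ of the total. Your explicit computation of the full sum as $\Theta(N)$ via the binomial variance makes the final comparison slightly more transparent than the paper's closing step, but the argument is the same.
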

We can combine these three bounds to show linear scaling. Let $p_{2}=\max\lrcb{p_i\ | \ \alpha_i\neq 0, \, p_i<p_*}$ be the second largest $p_i$. If such a $p_i$ doesn't exist, then there is only one binomial distribution in the mixture. For that particular case, we know that the FI of a binomial distribution scales linearly in $N$, completing the proof that the SQL is achievable. Indeed the FI of a binomial distribution with parameter $p_{*}$ is given by
\begin{align}\label{eq:binomial_FI}
    \mathcal{I}(\omega)=\frac{N}{p_*(1-p_*)}\lrp{\del_{\omega}p_*}^2.
\end{align}

We now consider the case where $p_2$ is well defined. We choose $\epsilon_1,\epsilon_2,\epsilon_3$ (as in Lemmas~\ref{lemma:first_bound},\ref{lemma:second_bound}, and~\ref{lemma:third_bound} respectively) small enough such that
\begin{align*}
    \frac{\log\lrp{\frac{1-p_2}{1-p_*}}}{\log\lrp{\frac{p_*(1-p_2)}{p_2(1-p_*)}}}+\max\lrcb{\epsilon_1,\epsilon_2}<p_*-\epsilon_3. 
\end{align*}
We show this inequality holds in Lemma~\ref{lemma:loglog_pstar_lemma}. Then, for any $\eta_1,\eta_2>d_E$ and $1>\eta_3>0$, we know that for any $N$ large enough there exists a corresponding $n_1^{*}$ such that
\begin{align*}
    p_2<\frac{\log\lrp{\frac{1-p_2}{1-p_*}}}{\log\lrp{\frac{p_*(1-p_2)}{p_2(1-p_*)}}}+\max\lrcb{\epsilon_1,\epsilon_2} < \frac{n_1^*}{N}<p_*-\epsilon_3,
\end{align*}
and such that \cref{eq:first_bound}, \cref{eq:second_bound}, and \cref{eq:third_bound} hold.
Applying these bounds allows us to derive the following inequalities for the FI. 

We first remove a few terms of the sum of \cref{eq:FI_mixture}. Since all the terms are positive, this leads to a lower bound.
\begin{align*}
    &\mathcal{I}(\omega)\geq \nonumber \\
    &\sum_{n_1>n_1^*}^{N}\binom{N}{n_1}\frac{\lrp{\sum_{i=1}^{d_E}\alpha_i(\del_{\omega}p_i)p_i^{n_1}(1-p_i)^{N-n_1}\frac{n_1-Np_i}{p_i(1-p_i)}}^2}{\lrp{\sum_{i=1}^{d_E}\alpha_i p_i^{n_1}(1-p_i)^{N-n_1}}}.
\end{align*}
Next, we apply Lemma~\ref{lemma:first_bound} to lower bound the numerator.
\begin{align*}
    &\mathcal{I}(\omega)\geq 
    \sum_{n_1>n_1^{*}}\binom{N}{n_1}\cdot \\
    &\frac{\lrp{(1-d_E\eta_1^{-1})\alpha_*(\del_{\omega}p_*)p_*^{n_1}(1-p_*)^{N-n_1}\frac{n_1-Np_*}{p_*(1-p_*)}}^2}{\sum_{i=1}^{d_E}\alpha_ip_i^{n_1}(1-p_i)^{N-n_1}},
\end{align*}
where we have upper bounded each of the terms associated with $p_i$ with $p_i\neq p_*$ by $\eta_1^{-1}$ times the $p_*$ term. We give each of these terms a sign opposite to that of the $p_*$ term, so that we obtain a lower bound. We now apply Lemma~\ref{lemma:second_bound} to bound the denominator.
\begin{align*}
    &\mathcal{I}(\omega)\geq \sum_{n_1>n_1^{*}}\binom{N}{n_1}\cdot \\
    &\frac{\lrp{(1-d_E\eta_1^{-1})\alpha_*(\del_{\omega}p_*)p_*^{n_1}(1-p_*)^{N-n_1}\frac{n_1-Np_*}{p_*(1-p_*)}}^2}{(1+d_E\eta_2^{-1})\alpha_*p_*^{n_1}(1-p_*)^{N-n_1}}.
\end{align*}
We similarly bound the denominator by bounding all the non-$p_*$ terms by $\eta_2^{-1}$ times the $p_*$ term, and assigning them the same sign to obtain a lower bound. We can rewrite this bound as
\begin{align*}
    &\mathcal{I}(\omega)\geq\frac{(1-d_E\eta_1^{-1})^2}{1+d_E\eta_2^{-1}}\cdot \\
    &\sum_{n_1>n_1^{*}}\binom{N}{n_1}\frac{\lrp{\alpha_*(\del_{\omega}p_*)p_*^{n_1}(1-p_*)^{N-n_1}\frac{n_1-Np_*}{p_*(1-p_*)}}^2}{\alpha_*p_*^{n_1}(1-p_*)^{N-n_1}}.
\end{align*}
Finally, we apply Lemma~\ref{lemma:third_bound} to the inequality to obtain
\begin{align*}
    &\mathcal{I}(\omega)\geq\frac{(1-d_E\eta_1^{-1})^2}{1+d_E\eta_2^{-1}}\eta_3 \cdot \\
    &\sum_{n_1=0}^{N}\binom{N}{n_1}\frac{\lrp{\alpha_*(\del_{\omega}p_*)p_*^{n_1}(1-p_*)^{N-n_1}\frac{n_1-Np_*}{p_*(1-p_*)}}^2}{\alpha_*p_*^{n_1}(1-p_*)^{N-n_1}}.
\end{align*}
We recognize the formula for the FI of a binomial distribution, and can therefore substitute the known result, \cref{eq:binomial_FI}, to obtain a final lower bound for the FI of the binomial mixture.
\begin{align*}
    \mathcal{I}(\omega) \geq \eta_3\frac{(1-d_E\eta_1)^2}{1+d_E\eta_2}\alpha_*^2(\del_{\omega}p_*)^2 \frac{N}{p_*(1-p_*)}.
\end{align*}

We see that the final lower bound scales linearly in $N$, thus completing the proof that the SQL is achievable.

\subsection{Proof of Lemma~\ref{lemma:first_bound}}

We will show that $\forall \ \eta_1>0,\epsilon_1>0$, and $\forall\ p_i\neq p_*$, $\exists \ $ $N_1^{(1)}$ such that $\forall \ N>N_1^{(1)}$, and $\forall \ n_1$ satisfying
\begin{equation}
    \frac{n_1}{N}>\frac{\log\lrp{\frac{1-p_i}{1-p_*}}}{\log\lrp{\frac{p_*(1-p_i)}{p_i(1-p_*)}}}+\epsilon_1,\label{eq:first_bound_appendix_n1}
\end{equation}
we have that 
\begin{align}
    \abs{\alpha_*p_*^{n_1}(1-p_*)^{N-n_1}
    \cdot\frac{n_1-Np_*}{p_*(1-p_*)}(\del_{\omega}p_*)}\nonumber \\ >\eta_1\abs{ \alpha_ip_i^{n_1}(1-p_i)^{N-n_1}
    \cdot\frac{n_1-Np_i}{p_i(1-p_i)}(\del_{\omega}p_i)}.\label{eq:first_bound_appendix}
\end{align}

\begin{proof}
    
We can rearrange the terms of \cref{eq:first_bound_appendix} to obtain the following equation
\begin{align*}
    \abs{\frac{p_*(1-p_i)}{p_i(1-p_*)}}^{n_1}\abs{\frac{1-p_*}{1-p_i}}^N\abs{\frac{n_1-Np_*}{n_1-Np_i}} \\
    > \eta_1\frac{\alpha_i}{\alpha_*}\frac{p_*(1-p_*)}{p_i(1-p_i)}\abs{\frac{\del_{\omega}p_i}{\del_{\omega}p_*}}.
\end{align*}
We will denote the right side of this equation by $\hat{\eta}$. Remark $\hat{\eta}$ depends on neither $N$ or $n_1$. Taking the natural log of both sides, we obtain
\begin{align*}
    n_1\log\abs{\frac{p_*(1-p_i)}{p_i(1-p_*)}} + N\log\abs{\frac{1-p_*}{1-p_i}}\\ 
    +\log\abs{\frac{n_1-Np_*}{n_1-Np_i}}>\log\hat{\eta}.
\end{align*}
First, we let $n_1=Nk_1$. We divide the previous equation by $N$ and shuffle terms around to obtain 
\begin{align*}
    &k_1>\lrp{\log\abs{\frac{p_*(1-p_i)}{p_i(1-p_*)}}}^{-1}\\
    &\cdot \lrp{\log\abs{\frac{1-p_i}{1-p_*}}+\frac{1}{N}\log\abs{\frac{k_1-p_i}{k_1-p_*}}+\frac{\log\hat{\eta}}{N}}.
\end{align*}
There exists an $N_1^{(1)}$ such that $N>N_1^{(1)}$ implies that the last two terms when expanding the right side of the equation are smaller than $\epsilon_1$. This yields the inequality
\begin{align*}
    k_1>\frac{\log\lrp{\frac{1-p_i}{1-p_*}}}{\log\lrp{\frac{p_*(1-p_i)}{p_i(1-p_*)}}} + \epsilon_1.
\end{align*}
And so, the condition for \cref{eq:first_bound_appendix} to hold is that
\begin{align*}
    k_1=\frac{n_1}{N}>\frac{n_1^{(1)}}{N},\quad N>N_1^{(1)},
\end{align*}
where we let
\begin{align}
    \frac{n_1^{(1)}}{N}=\frac{\log\lrp{\frac{1-p_i}{1-p_*}}}{\log\lrp{\frac{p_*(1-p_i)}{p_i(1-p_*)}}} + \epsilon_1.
\end{align}

\end{proof}

In the proofs of the main text, we use the following property of $n_1^{(1)}$.
\begin{lemma}\label{lemma:loglog_pstar_lemma}
For $p_i<p_*$ and $0<p_i,p_*<1$, we have
\begin{equation*}
    \frac{\log\lrp{\frac{1-p_i}{1-p_*}}}{\log\lrp{\frac{p_*(1-p_i)}{p_i(1-p_*)}}}<p_*.
\end{equation*}
\end{lemma}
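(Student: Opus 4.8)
The plan is to reduce the claimed inequality to the strict positivity of a relative entropy. Writing $a = p_i$ and $b = p_*$ with $0 < a < b < 1$, I would first record the signs. Since $1-a > 1-b > 0$ and $b/a > 1$, both $\log\lrp{\frac{1-a}{1-b}}$ and $\log\lrp{\frac{b(1-a)}{a(1-b)}}$ are strictly positive, so the left-hand side is a well-defined positive fraction, and cross-multiplying by the denominator preserves the direction of the inequality.

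Next I would split the denominator using $\log\lrp{\frac{b(1-a)}{a(1-b)}} = \log\frac{b}{a} + \log\frac{1-a}{1-b}$. After multiplying through by the positive denominator and cancelling the common term $b\log\frac{1-a}{1-b}$, the target reduces to $(1-b)\log\frac{1-a}{1-b} < b\log\frac{b}{a}$. Rewriting $\log\frac{1-a}{1-b} = -\log\frac{1-b}{1-a}$, this is equivalent to the single inequality $b\log\frac{b}{a} + (1-b)\log\frac{1-b}{1-a} > 0$.

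The key observation is that the left-hand side of this last inequality is exactly the Kullback--Leibler divergence $D(b\,\|\,a)$ between two Bernoulli distributions with parameters $b$ and $a$, which is strictly positive whenever $a\neq b$. To keep the argument self-contained I would give a short convexity proof rather than cite Gibbs' inequality: fixing $a$ and setting $g(b) = b\log\frac{b}{a} + (1-b)\log\frac{1-b}{1-a}$, one computes $g(a)=0$, $g'(b) = \log\frac{b(1-a)}{a(1-b)}$ so that $g'(a)=0$, and $g''(b) = \frac{1}{b} + \frac{1}{1-b} > 0$ on $(0,1)$. Hence $g$ is strictly convex with a unique global minimum value $0$ attained only at $b=a$, so $g(b) > 0$ for all $b\neq a$, which is precisely the reduced inequality.

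There is no genuine obstacle here once the structure is recognized: after the denominator is split and the common term cancelled, the statement is literally the positivity of a Bernoulli relative entropy. The only point that demands care is the bookkeeping of signs at the start---confirming that $0 < p_i < p_* < 1$ forces both logarithms to be positive, so that cross-multiplication does not flip the inequality---after which the convexity argument closes the proof immediately.
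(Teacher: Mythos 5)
Your proof is correct, but it takes a genuinely different route from the paper's. The paper clears the denominator (implicitly using its positivity), exponentiates, and studies the function $\lrp{\frac{1-p_i}{1-p_*}}^{1-p_*}\lrp{\frac{p_i}{p_*}}^{p_*}$ as a function of $p_*$: it equals $1$ at $p_*=p_i$ and has strictly negative derivative for $p_*>p_i$, hence stays below $1$, which is the claim. You instead cross-multiply (after explicitly checking both logarithms are positive, a sign verification the paper also needs when clearing the denominator but never states), split the denominator's logarithm, and reduce the claim to $b\log\frac{b}{a}+(1-b)\log\frac{1-b}{1-a}>0$, i.e.\ strict positivity of the Bernoulli relative entropy $D(p_*\Vert p_i)$, which you establish by strict convexity of $g(b)=b\log\frac{b}{a}+(1-b)\log\frac{1-b}{1-a}$ with $g(a)=g'(a)=0$ and $g''>0$. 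The two arguments are secretly the same fact in different clothing: the paper's exponentiated expression is exactly $e^{-D(p_*\Vert p_i)}$, and the bracket $\log\frac{p_*}{p_i}+\log\frac{1-p_i}{1-p_*}$ appearing in its derivative is exactly your $g'(b)=\log\frac{b(1-a)}{a(1-b)}$ --- so the paper runs a first-derivative (monotonicity) argument on $e^{-g}$ where you run a second-derivative (convexity) argument on $g$ itself. Your version buys three things: the sign bookkeeping for cross-multiplication is made explicit rather than tacit; no differentiation of an exponential of logarithms is needed; and the underlying quantity is identified as the Kullback--Leibler divergence, which is thematically apt since the same divergence drives the Chernoff bound in the paper's proof of Lemma~\ref{lemma:third_bound}. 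The paper's version is self-contained calculus with no information-theoretic framing, which is its only real advantage.
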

\begin{proof}
    
We can rewrite this equation as 
\begin{equation*}
    \log\lrp{\frac{1-p_i}{1-p_*}}-p_*\log\lrp{\frac{p_*(1-p_i)}{p_i(1-p_*)}}<0.
\end{equation*}
Taking the exponential of both sides, we obtain the inequality
\begin{equation}\label{eq:pstar_log_proof_eq}
    \exp\lrp{\log\lrp{\frac{1-p_i}{1-p_*}}-p_*\log\lrp{\frac{p_*(1-p_i)}{p_i(1-p_*)}}}<1.
\end{equation}
We will now prove this inequality. First, remark that for $p_i=p_*$, we have that
\begin{equation*}
\exp\lrp{\log\lrp{\frac{1-p_*}{1-p_*}}-p_*\log\lrp{\frac{p_*(1-p_*)}{p_*(1-p_*)}}}=1.
\end{equation*}
We now consider the derivative of the left-hand side with respect to $p_*$.
\begin{align*}
    \frac{\del}{\del p_*}\exp\lrp{\log\lrp{\frac{1-p_i}{1-p_*}}-p_*\log\lrp{\frac{p_*(1-p_i)}{p_i(1-p_*)}}}= \\
    -\lrp{\frac{1-p_i}{1-p_*}}^{1-p_*}\lrp{\frac{p_i}{p_*}}^{p_*}\lrb{\log\lrp{\frac{p_*}{p_i}}+\log\lrp{\frac{1-p_i}{1-p_*}}}.
\end{align*}
We see that the derivative is strictly negative for all $p_i<p_*$ such that $0<p_i,p_*<1$. In other words, for $p_*>p_i$, as we increase $p_*$, the function defined by the left side of \cref{eq:pstar_log_proof_eq} decreases from the value of $1$ at $p_i=p_*$. Therefore, the inequality of \cref{eq:pstar_log_proof_eq} must hold.

\end{proof}

Equivalently, by the density of real numbers, this lemma states that there exists an $\epsilon>0$ such that 
\begin{equation*}
    \frac{\log\lrp{\frac{1-p_i}{1-p_*}}}{\log\lrp{\frac{p_*(1-p_i)}{p_i(1-p_*)}}}<p_*-\epsilon.
\end{equation*}

\subsection{Proof of Lemma~\ref{lemma:second_bound}}
We now consider the second bound. We show that $\forall\ \eta_2>0$, $\forall\ p_i\neq p_*$, $\exists \ N_1^{(2)}$ such that $\forall \ N>N_1^{(2)}$, $\exists \ n_1^{(2)}$ such that $n_1>n_1^{(2)}$ implies
\begin{align*}
    \alpha_* p_*^{n_1}(1-p_*)^{N-n_1}>\eta_2\alpha_i p_i^{n_1}(1-p_i)^{N-n_1}.
\end{align*}
\begin{proof}
    
We can rearrange the equation by following the same steps as the previous derivation to obtain 
\begin{align*}
    k_1>\frac{\log\lrp{\frac{1-p_i}{1-p_*}}}{\log\lrp{\frac{p_*(1-p_i)}{p_i(1-p_*)}}} + \frac{\log\eta_2}{N}.
\end{align*}
For all $\epsilon_2>0$, there exists $N_1^{(2)}$ such that for $N>N_1^{(2)}$, the last term is smaller than $\epsilon_2$. This gives us 
\begin{align}
    k_1>\frac{\log\lrp{\frac{1-p_i}{1-p_*}}}{\log\lrp{\frac{p_*(1-p_i)}{p_i(1-p_*)}}} + \epsilon_2.
\end{align}
Once again, the choice of $N_1^{(2)}$ is somewhat arbitrary, as we can choose to suppress $\epsilon_2$ more or less in exchange for lowering the value of $n_1^{(2)}$, which we define as 
\begin{align*}
    \frac{n_1^{(2)}}{N}=\frac{\log\lrp{\frac{1-p_i}{1-p_*}}}{\log\lrp{\frac{p_*(1-p_i)}{p_i(1-p_*)}}} + \epsilon_2.
\end{align*}
Once again, we must choose $N_1^{(2)}$ big enough such that $n_1^{(2)}<N$ and such that there exists integer values of $n_1$ such that $n_1^{(2)}<n_1<N$.

\end{proof}

\subsection{Proof of Lemma~\ref{lemma:third_bound}}

Finally, we prove a final bound. We will show that $\forall\ 1>\eta_3>0,\ \epsilon_3>0$, $\exists\ N_1^{(3)}$ such that $\forall \ N>N_1^{(3)}$, $\exists\ n_1^{(3)}>(p_*-\epsilon_3)N$ such that $n_1>n_1^{(3)}$ implies that
\begin{align}
    \sum_{n_1>n_1^{(3)}}\binom{N}{n_1}\frac{\lrp{p_*^{n_1}(1-p_*)^{N-n_1}\frac{n_1-Np_*}{p_*(1-p_*)}}^2}{p_*^{n_1}(1-p_*)^{N-n_1}} \nonumber \\
    > \eta_3\sum_{n_1=1}^{N}\binom{N}{n_1}\frac{\lrp{p_*^{n_1}(1-p_*)^{N-n_1}\frac{n_1-Np_*}{p_*(1-p_*)}}^2}{p_*^{n_1}(1-p_*)^{N-n_1}}.
\end{align}
\begin{proof}
    
We can decompose the RHS into partial sums,
\begin{align*}
    \tx{RHS}=\eta_3\sum_{n_1\leq n_1^{(3)}}\binom{N}{n_1} \frac{\lrp{p_*^{n_1}(1-p_*)^{N-n_1}\frac{n_1-Np_*}{p_*(1-p_*)}}^2}{p_*^{n_1}(1-p_*)^{N-n_1}} \\
    + \eta_3\sum_{n_1>n_1^{(3)}}\binom{N}{n_1}\frac{\lrp{p_*^{n_1}(1-p_*)^{N-n_1}\frac{n_1-Np_*}{p_*(1-p_*)}}^2}{p_*^{n_1}(1-p_*)^{N-n_1}}.
\end{align*}
We can bound the first term,
\begin{align*}
    \eta_3\sum_{n_1\leq n_1^{(3)}} \binom{N}{n_1}\frac{\lrp{p_*^{n_1}(1-p_*)^{N-n_1}\frac{n_1-Np_*}{p_*(1-p_*)}}^2}{p_*^{n_1}(1-p_*)^{N-n_1}} \\
    =\eta_3\sum_{n_1\leq n_1^{(3)}} \binom{N}{n_1}p_*^{n_1}(1-p_*)^{N-n_1}\lrp{\frac{n_1-Np_*}{p_*(1-p_*)}}^2 \\
    \leq \eta_3\frac{N^2}{(p_*(1-p_*))^2}\sum_{n_1\leq n_1^{(3)}} \binom{N}{n_1}p_*^{n_1}(1-p_*)^{N-n_1}
\end{align*}
This final sum can now be bounded from above using the Chernoff bound \cite{arratia_tutorial_1989},
\begin{align*}
\sum_{n_1\leq n_1^{(3)}}\binom{N}{n_1} p_*^{n_1}(1-p_*)^{N-n_1}\leq \exp\lrp{-ND\lrp{\frac{n_1^{(3)}}{N}\Big\| p_*}},
\end{align*}
where $D\lrp{\frac{n_1^{(3)}}{N}\|p_*}$ is the Kullback-Leibler (KL) divergence for a binomial distribution. For any $\epsilon_3>0$, the KL divergence is positive when $\frac{n_1^{(3)}}{N}<p_*-\epsilon_3$. For a positive KL divergence, the partial sum is exponentially suppressed in the limit of large $N$. In other words, $\forall \ \epsilon_3>0$, $\exists \ N_1^{(3)}$ such that $N>N_1^{(3)}$ implies
\begin{align*}
\eta_3\frac{N^2}{p_*(1-p_*)}\sum_{n_1\leq n_1^{(3)}} p_*^{n_1}(1-p_*)^{N-n_1}<\epsilon_3,
\end{align*}
and so
\begin{align*}
\tx{RHS}< \eta_3\sum_{n_1> n_1^{(3)}} \frac{\lrp{p_*^{n_1}(1-p_*)^{N-n_1}\frac{n_1-Np_*}{p_*(1-p_*)}}^2}{p_*^{n_1}(1-p_*)^{N-n_1}}+\epsilon_3.
\end{align*}
Taking $\epsilon_3<1-\eta_3$ proves the inequality of \cref{eq:third_bound}.

\end{proof}

\section{Estimate for the variance of the FI estimator}\label{section:estimate_variance_estimator}
We calculate the variance of \cref{eq:estimator_FI}. 
\begin{align}
    \tx{Var}\lrb{\hat{\mathcal{I}}(\omega)}&=\tx{Var}\lrb{\frac{1}{S}\sum_{s=1}^S\lrp{\frac{\del_{\omega}P(x_1^{(s)},\cdots,x_N^{(s)})}{P(x_1^{(s)},\cdots,x_N^{(s)})}}^2} \nonumber \\
    &=\frac{1}{S^2}\sum_{s=1}^S\tx{Var}\lrb{\lrp{\frac{\del_{\omega}P(x_1^{(s)},\cdots,x_N^{(s)})}{P(x_1^{(s)},\cdots,x_N^{(s)})}}^2}.\label{eq:variance_proof_1}
\end{align}
There are no covariance terms in \cref{eq:variance_proof_1} since all of our samples are uncorrelated, as they are produced via direct sampling of the probability distribution. Using the definition of the variance, we obtain
\begin{align}
    \tx{Var}\lrb{\hat{\mathcal{I}}(\omega)}=\frac{1}{S^2}\sum_{s=1}^S\tx{Var}\lrb{\lrp{\frac{\del_{\omega}P(x_1^{(s)},\cdots,x_N^{(s)})}{P(x_1^{(s)},\cdots,x_N^{(s)})}}^2}\nonumber \\
    =\frac{1}{S^2}\sum_{s=1}^S\left(\mathbb{E}\lrb{\lrp{\frac{\del_{\omega}P(x_1^{(s)},\cdots,x_N^{(s)})}{P(x_1^{(s)},\cdots,x_N^{(s)})}}^4}\right.\nonumber \\
    \quad \quad \left.-\mathbb{E}\lrb{\lrp{\frac{\del_{\omega}P(x_1^{(s)},\cdots,x_N^{(s)})}{P(x_1^{(s)},\cdots,x_N^{(s)})}}^2}^2 \right). \nonumber
\end{align}
The expectation value is independent of the sample. Further, we recognize the definition of the FI in the second expectation value. The equation then becomes
\begin{align*}
    \tx{Var}\lrb{\hat{\mathcal{I}}(\omega)}&=\frac{1}{S}\mathbb{E}\lrb{\lrp{\frac{\del_{\omega}P(x_1,\cdots,x_N)}{P(x_1,\cdots,x_N)}}^4} - \frac{\mathcal{I}(\omega)^2}{S}.
\end{align*}
We can upper bound the variance,
\begin{align}
    \tx{Var}\lrb{\hat{\mathcal{I}}(\omega)}\leq \frac{1}{S}\mathbb{E}\lrb{\lrp{\frac{\del_{\omega}P(x_1,\cdots,x_N}{P(x_1,\cdots,x_N}}^4}.
\end{align}
The right side of this equation can be estimated with
\begin{align}\label{eq:variance_estimator}
    \frac{1}{S^2}\sum_{s=1}^S \lrp{\frac{\del_{\omega}P(x_1,\cdots,x_N)}{P(x_1,\cdots,x_N}}^4.
\end{align}
We therefore take \cref{eq:variance_estimator} to be a biased estimator of our variance, allowing us to draw error bars in \cref{fig:heisenberg}. Although the estimator is biased, it converges to a value that is higher than the true value. The small error bars on \cref{fig:heisenberg} are therefore a good indicator of the convergence of the FI estimator. This method remains a heuristic, as although \cref{eq:variance_estimator} converges for high-enough $S$, it is unclear how different values of $N$ affect the convergence and whether the 25 000 samples are sufficient.

\section{List of Acronyms}\label{app:acronyms}

{\renewcommand\labelitemi{}
\begin{itemize}
\item\textbf{QFI}\quad Quantum Fisher information
\item\textbf{HL}\quad Heisenberg limit
\item\textbf{SQL}\quad Standard quantum limit
\item\textbf{QEC}\quad Quantum error correction
\item\textbf{HMM}\quad Hidden Markov model
\item\textbf{KL}\quad Knill-Laflamme
\item\textbf{CPTP}\quad Completely positive trace-preserving
\item\textbf{HNLS}\quad Hamiltonian not in Lindblad span
\item\textbf{HNES}\quad Hamiltonian not in extended span
\item\textbf{HNELS}\quad Hamiltonian not in extended Lindblad span
\item\textbf{FI}\quad Fisher information, or classical Fisher information
\end{itemize}
}

\end{document}